\newcommand{\ble}{\begin{lemma}}
\newcommand{\ele}{\end{lemma}}
\theoremstyle{plain}
\newtheorem{lemma}{Lemma}[section]
\newtheorem{theorem}[lemma]{Theorem}
\newtheorem{corollary}[lemma]{Corollary}
\newtheorem{claim}[lemma]{Claim}
\newtheorem{algorithm}[lemma]{Algorithm}
\theoremstyle{definition}
\newtheorem{definition}[lemma]{Definition}
\newcommand{\beao}{\begin{eqnarray*}}
\newcommand{\eeao}{\end{eqnarray*}\noindent}
\newcommand{\beam}{\begin{eqnarray}}
\newcommand{\eeam}{\end{eqnarray}\noindent}
\newcounter{note}[section]
\begin{document}

\begin{titlepage}

\title{Approximating Approximate Distance Oracles}

\author{Michael Dinitz\\Department of Computer Science\\Johns Hopkins University \and Zeyu Zhang\\Department of Computer Science\\Johns Hopkins University}

\date{}

\maketitle

\thispagestyle{empty}

\begin{abstract}
Given a finite metric space $(V,d)$, an approximate distance oracle is a data structure which, when queried on two points $u,v \in V$, returns an approximation to the the actual distance between $u$ and $v$ which is within some bounded stretch factor of the true distance.  There has been significant work on the tradeoff between the important parameters of approximate distance oracles (and in particular between the size, stretch, and query time), but in this paper we take a different point of view, that of per-instance optimization.  If we are given an particular input metric space and stretch bound, can we find the smallest possible approximate distance oracle for that particular input?  Since this question is not even well-defined, we restrict our attention to well-known classes of approximate distance oracles, and study whether we can optimize over those classes. 

In particular, we give an $O(\log n)$-approximation to the problem of finding the smallest stretch $3$ Thorup-Zwick distance oracle, as well as the problem of finding the smallest P\v{a}tra\c{s}cu-Roditty distance oracle. We also prove a matching $\Omega(\log n)$ lower bound for both problems, and an $\Omega(n^{\frac{1}{k}-\frac{1}{2^{k-1}}})$ integrality gap for the more general stretch $(2k-1)$ Thorup-Zwick distance oracle. We also consider the problem of approximating the best TZ or PR approximate distance oracle \emph{with outliers}, and show that more advanced techniques (SDP relaxations in particular) allow us to optimize even in the presence of outliers.  
\end{abstract}

\end{titlepage}

\section{Introduction} \label{sec:intro}

Given a finite metric space $(V,d)$, an approximate distance oracle is a data structure which can approximately answer distance queries. It is usually a combination of a preprocessing algorithm to compute a data structure, and a query algorithm which returns a distance $d'(u,v)$ whenever queried on a pair of vertices $u,v\in V$. An approximate distance oracle is said the have \emph{stretch $t$} if $d(u,v)\le d'(u,v)\le t\cdot d(u,v)$.  Note that there is a trivial stretch $1$ distance oracle that uses $\Theta(n^2)$ space: we could just store the entire metric space.  So the goal is to reduce the space, i.e., to build a small data structure that also has small stretch and small query time.    

The seminal work on approximate distance oracles is due to Thorup and Zwick~\cite{TZ05}. They showed that for every integer $k \geq 1$, every finite metric space has an approximate distance oracle with stretch $(2k-1)$ and query time $O(k)$ which uses only $O(kn^{1+\frac{1}{k}})$ space.  A significant fraction of more recent results have built off of the ideas developed in~\cite{TZ05}, and much of this follow-up work has stored the exact same (or very similar) data structure, just with improved query algorithms or slightly different information in the storage (see, e.g.,~\cite{PRT12,Wulff13,Chechik14,Chechik15}).  Most notably, P\v{a}tra\c{s}cu and Roditty~\cite{PR10} gave a different distance oracle (still using some of the basic ideas from~\cite{TZ05}) that has multiplicative stretch of $2$ and additive stretch of $1$, with size $O(n^\frac{5}{3})$.  This broke through the stretch $3$ barrier from~\cite{TZ05}.  Later this result was improved to more general multiplicative/additive stretches~\cite{AG11}.


In this paper we ask a natural but very different type of question about approximate distance oracles: can we find (or approximate) the \emph{best} approximate distance oracle?  If we are given an input metric space and a stretch bound, is it possible to find the \emph{smallest} approximate distance oracle for that particular input?  This is an unusual question in two ways.  First, most data structures are by design forced to store all of the input data; the question is \emph{how} to store it and what \emph{extra} information should be stored.  This is the case in other settings where instance-optimality of data structures has been considered, e.g., static or dynamic optimality of splay trees.  Second, it is not clear whether this question is even well-defined: lower bounds on data structures are commonly arrived at through information or communication complexity (see, e.g., \cite{Jacobson88}) but when we ask for the optimal data structure on one particular instance this approach becomes meaningless.

However, approximate distance oracles are different in ways which allow us to make meaningful progress towards these optimization questions.  First, since we are allowed to return only approximate distances (up to some stretch factor), we are allowed to store only part of the input (and indeed this is the entire point of such an oracle).  The second problem is a bit more tricky: given an input, how can we optimize over ``the space of all approximate distance oracles"?  What does this mean, and what does this space look like?  

To get around this issue, we make an observation: many modern distance oracles (and in particular Thorup-Zwick, P\v{a}tra\c{s}cu-Roditty, and almost all of their variants) have a similar structure. The preprocessing algorithm chooses a subset of the original distances to store which has some particular structure, and the query algorithm can return a valid distance estimate efficiently as long as the stored distances satisfy the required structure. Thus we can optimize for \emph{these particular} distance oracles by choosing the \emph{best possible} set of distances to remember subject to the required structure.  By characterizing this structure for different types of distance oracles, we can optimize over those types.  




For example, the stretch-$3$ Thorup-Zwick distance oracle uses a subtle but simple method to choose the set of distances to store.  It randomly samples a subset of approximately $\sqrt{n}$ vertices, without using any information about the original metric space, and then creates a data structure which is related (in a well-defined, important way) to these vertices. The correctness of the query algorithm does not depend on the choice of the vertices.  Thus instead of simply choosing the subset of vertices uniformly at random, we can instead try to optimize the set of chosen vertices with respect to the actual input metric space. 


In this paper, we give matching $\Theta(\log n)$ upper and lower bounds for optimizing stretch-$3$ Thorup-Zwick distance oracles, and matching $\Theta(\log n)$ upper and lower bounds for optimizing the P\v{a}tra\c{s}cu-Roditty distance oracle. These upper bounds both use a similar LP relaxation, but by giving an $\Omega(n^{\frac{1}{k}-\frac{1}{2^{k-1}}})$ integrality gap for optimizing stretch-$(2k-1)$ Thorup-Zwick distance oracles, we show that this relaxation is not enough to give nontrivial approximations when extended to larger stretch values.

As an extension, we also study the problem of optimizing distance oracles \emph{with outliers}: if we are allowed to not answer queries for some of the vertices (of our choosing), can we have much smaller storage space? We give an $(O(\log n),1+\varepsilon)$-bicriteria approximation to both stretch-$3$ Thorup-Zwick and P\v{a}tra\c{s}cu-Roditty distance oracles with outliers. We also give a true approximation to stretch-$3$ Thorup-Zwick distance oracle with outliers when the number of outliers is small.

\paragraph{Relationship to Spanners.} 
It is worth noting that this paper is motivated by a similar line of research on \emph{graph spanners} (subgraphs which approximately preserve distances).  Spanners and distance oracles tend to be related (although there is no known formal connection between them), and the traditional questions asked of spanners (what is the tradeoff between the stretch and the size?) are similar to the traditional questions asked of distance oracles.  Recently, there has been significant progress in looking at spanners from an optimization point of view: given an input graph and an allowed stretch bound, can we find the sparsest possible spanner meeting that stretch bound?  In the last few years, upper and lower bounds have been developed for these problems in the basic case, the directed case, with a degree objective, with fault-tolerance, etc.  See, e.g., \cite{DK11,BBMRY13,DZ16,CDK12,CD14}.

It is natural to ask these kinds of optimization questions for distance oracles as well, but the definitions become much more difficult.  For spanners, the space we are optimizing over (all subgraphs) is very clear and well-defined.  But for distance oracles, as discussed, it is much harder to define the space of all data structures.  Thus in this paper we optimize over restricted classes, where this space is more well-defined.  We view our definitions of these restricted optimization questions as one of the major contributions of this work.  

\section{Definitions and Preliminaries}

We begin with some basic definitions, including formal definitions of the problems that we will be working on.

\begin{definition}
An approximate distance oracle with $(m,a)$-stretch, size $s$, preprocessing time $g$, and query time $h$ is a pair of algorithms, $preprocess$ and $query$, with the following properties.
\begin{itemize}
\item $preprocess$ is a randomized preprocessing algorithm $preprocess(V,d,m,a,r)$ which takes as input a metric space $(V,d)$, stretch bound $(m,a)$, and random string $r$ and outputs a data structure $S$ where the expected output size is at most $\mathbb{E}_r[|S|]\le s(|V|,m,a)$ and the 
expected preprocessing time is at most $g(|V|,m,a)$.
\item $query$ takes as input a data structure $S=preprocess(V,d,m,a,r)$ (the output of the preprocess algorithm) with two vertices $u,v \in V$, and outputs a value $d'(u,v) \in \mathbb{R}$ such that $d(u,v)\le d'(u,v)\le m\cdot d(u,v)+a$. The running time of $query$ is at most $h(|V|,m,a)$.
\end{itemize}
\end{definition}

We will frequently refer to these just as ``distance oracles" rather than ``approximate distance oracles" when the stretch bound is clear from context.

The query algorithm guarantees here are deterministic: the randomness only affects the size of the data structure. Note that one could easily define distance oracles so that either the correctness (with respect to the stretch bound) or the query running time (or both) hold only in expectation or with high probability, but as discussed in Section~\ref{sec:intro}, essentially all existing distance oracles (and in particular the Thorup-Zwick distance oracle) have deterministic guarantees on the queries.  

This naturally leads us to the following question: If we fix a particular distance oracle and metric space, can we find the \emph{best} possible data structure?  Here we will focus on the output size, not the preprocessing time (as long as the preprocessing time is polynomial).  In other words, since the query algorithm work on \emph{any} of the possible data structures which the preprocessing algorithm might output, can we actually find the smallest such data structure?  This gives the following natural optimization problem.

\begin{definition}
Given an approximate distance oracle $\mathcal{A}=(preprocess,query)$, the $\mathcal{A}$-optimization problem takes as input a metric space $(V,d)$ and a stretch bound $(m,a)$, and the goal is to find a string $r$ which minimizes $|preprocess(V,d,m,a,r)|$.
\end{definition}

In this paper we will focus on two distance oracles (Thorup-Zwick~\cite{TZ05} and  P\v{a}tra\c{s}cu-Roditty~\cite{PR10}), so we now introduce these oracles.

\subsection{Thorup-Zwick Distance Oracle}

For every integer $k \geq 1$, Thorup and Zwick~\cite{TZ05} provided an approximate distance oracle with $(2k-1,0)$-stretch, size $O(n^{1+\frac{1}{k}})$, preprocessing time $O(kn^{2+\frac{1}{k}})$, and query time $O(k)$. We call this distance oracle $TZ_k$.

Their preprocessing algorithm first constructs a chain of subsets $\varnothing=A_k\subseteq A_{k-1}\subseteq\mathellipsis\subseteq A_0=V$ by repeated sampling. Each set $A_i$, where $i\in[k-1]$, is obtained by including each element of $A_{i-1}$ independently with probability $n^{-\frac{1}{k}}$.

Let $R_{iu}=\{v\in A_{i-1}\mid d(u,v)<\min_{w\in A_i}d(u,w)\}$ for all $u\in V$ and $i\in[k]$ (where by convention $\min_{w\in\varnothing}d(u,w)=\infty$ for all $u\in V$ to handle the $i=k$ case). The output data structure is obtained by storing (in a $2$-level hash table) the distance from each node $u$ to each node in $\bigcup_{i=1}^kR_{iu}$.


The data structure also stores a little more information. Each vertex $u$ remembers $k-1$ pivots: $\arg\min_{w\in A_i}d(u,w)$ for all $i\in[k-1]$, and the distance from $u$ to these pivots. However, this is a fixed space cost, and also negligible, so when analyzing the size of the oracle we will ignore the cost of storing the pivots

Clearly the output data structure is determined once $A_1,\mathellipsis,A_{k-1}$ are fixed. The size of the data structure is:
\[cost(A_1,\mathellipsis,A_{k-1},V,d)=\sum_{u\in V}\sum_{i=1}^k|R_{iu}|=\sum_{u\in V}\sum_{i=1}^k\left|\{v\in A_{i-1}\mid d(u,v)<\min_{w\in A_i}d(u,w)\}\right|.\]
We will refer to  $\sum_{u\in V}|R_{iu}|$ as the cost in level $i$.

Let us look back on the definition of approximate distance oracle. The random string $r$ is only used to generate $A_i$'s, and the query algorithm will return a correct distance estimate no matter what the sets $A_i$ are, but the size is determined by the sets.  Therefore, the $TZ_k$-optimization problem is to find the subsets $\varnothing=A_k\subseteq A_{k-1}\subseteq\mathellipsis\subseteq A_0=V$ in order to minimize the total cost.

\subsection{P\v{a}tra\c{s}cu-Roditty Distance Oracle}

P\v{a}tra\c{s}cu and Roditty~\cite{PR10} provided an approximate distance oracle with $(2,1)$-stretch, size $O(n^\frac{5}{3})$, preprocessing time $O(n^2)$, and query time $O(1)$. We call this distance oracle $PR$. Note that $PR$ works only for metric spaces with integer distances.

Their preprocessing algorithm first construct a set $A\subseteq V$ via a complicated correlated sampling (informally, they sample a large set and a small set, and then define $A$ to be everything in the large set and everything contained in a ball around the small set delimited by the large set). The data structure consists of a $2$-level hash table for the distance from each node in $A$ to each node in $V$, as well as a $2$-level hash table storing the distance between each pair $\{u,v\}\subseteq V$ such that $d(u,v)<\min_{w\in A}d(u,w)+\min_{w\in A}d(v,w)-1$


As with Thorup-Zwick, the output data structure is completely determined once $A$ is fixed. Let $R=\left\{\{u,v\}\subseteq V\mid d(u,v)<\min_{w\in A}d(u,w)+\min_{w\in A}d(v,w)-1\right\}$.  Then the size of the data structure is
\[cost(A,V,d)=n\cdot|A|+|R|=n\cdot|A|+\left|\left\{\{u,v\}\subseteq V\mid d(u,v)<\min_{w\in A}d(u,w)+\min_{w\in A}d(v,w)-1\right\}\right|.\]
As before, the random string $r$ is only used to generate the set $A$, and any $A\subseteq V$ gives a data structure on which the query algorithm works. Therefore, the $PR$-optimization problem is to find the subset $A\subseteq V$ in order to minimize the total cost.

\subsection{Distance Oracles With Outliers}

In some cases, a small set of outlier vertices may make the size of the data structure blow up.  Yet in some applications it is acceptable to ignore these outliers.  This was the motivation behind a line of work on distance oracles with slack (\cite{CDGKS09},~\cite{CDG06}), in which the data structure could ignore the stretch bound on a small fraction of the distances.

In this paper, we consider the case that we can refuse to answer distance queries for some outlier vertices.  In other words, we can essentially remove an outlier set $F$ out of $V$ when computing the distance oracle.  This gives us the problem of optimizing distance oracle with outliers, in which we not only need to find the random string to determine the output data structure, we also need to find the set of outliers to minimize the final cost.  More formally, we have the following type of problem.


\begin{definition}
Given an approximate distance oracle $\mathcal{A}=(preprocess,query)$, the $\mathcal{A}$-optimization problem with outliers takes as input a metric space $(V,d)$, a stretch bound $(m,a)$, and a bound on the number of outliers $f\in\mathbb{N}$.  The goal is to find a string $r$ as well as a set $F\subseteq V$ where $|F|\le f$, in order to minimize $|preprocess(V\backslash F,d,m,a,r)|$.
\end{definition}

We will provide both true approximation results and ($\alpha,\beta$)-bicriteria results, in which we slightly violate the bound on the number of outliers.   Formally, an $(\alpha, \beta)$-approximation algorithm for the $\mathcal{A}$-optimization problem with outliers is on algorithm which on any input $((V,d), (m,a), f)$ returns a solution with cost at most $\alpha\cdot OPT$ that has at most $\beta\cdot f$ outliers (where $OPT$ is the minimum cost of any solution with at most $f$ outliers). 

\subsection{Our Results and Techniques}

With these definitions in hand, we can now formally state our results.

In Section~\ref{sec:tz2} we discuss the problem of optimizing the $3$-stretch Thorup-Zwick distance oracle, i.e., the $TZ_2$-optimization problem. It is straightforward to obtain an $O(\log n)$-approximation by reducing to the non-metric facility location problem.

\begin{theorem}\label{thm:utz2}
There is an $O(\log n)$-approximation algorithm for the $TZ_2$-optimization problem.
\end{theorem}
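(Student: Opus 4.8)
The plan is to reduce the $TZ_2$-optimization problem to the (non-metric, uncapacitated) facility location problem, for which an $O(\log n)$-approximation is classical. First I would unpack the cost function when $k=2$. Here we choose a single set $A_1 =: A \subseteq V$ (with $A_2 = \varnothing$, $A_0 = V$), and the cost is $\sum_{u \in V} (|R_{1u}| + |R_{2u}|)$. For $u \notin A$, the level-$1$ term $|R_{1u}| = |\{v \in V : d(u,v) < d(u, p(u))\}|$ where $p(u) = \arg\min_{w \in A} d(u,w)$ is $u$'s pivot; for $u \in A$ we have $p(u) = u$ so this term is $0$. The level-$2$ term is $|R_{2u}| = |\{v \in A_1 : d(u,v) < \min_{w \in A_2} d(u,w)\}| = |\{v \in A : d(u,v) < \infty\}| = |A|$ for every $u$, so level $2$ contributes exactly $n|A|$ to the cost regardless of which set $A$ is chosen — this is the "facility opening cost" $1 \cdot n$ per opened vertex. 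Thus $cost(A) = n|A| + \sum_{u \notin A} |\{v \in V : d(u,v) < d(u,p(u))\}|$.

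Next I would set up the facility location instance. The set of potential facilities and the set of clients are both $V$. Opening facility $a \in A$ costs $n$ (this accounts for the $n|A|$ term). The connection cost of client $u$ to facility $a$ is the number of vertices strictly closer to $u$ than $a$ is: $c(u,a) := |\{v \in V : d(u,v) < d(u,a)\}|$. Then for a fixed $A$, the best way to serve client $u$ is to route it to its nearest open facility $a$, which minimizes $d(u,a)$ and hence (since the balls are nested) minimizes $c(u,a)$ — matching exactly the pivot $p(u)$ and the term $|R_{1u}|$. Note $c(u,u) = 0$, so if $u$ itself is opened its connection cost is $0$, consistent with $u \in A$ contributing nothing beyond its opening cost. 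Hence the optimal facility location cost on this instance equals $\min_{A \subseteq V, A \neq \varnothing} cost(A)$, i.e., $OPT$ of the $TZ_2$-optimization problem (the restriction $A \neq \varnothing$ is automatic since every client must be served and the only zero-cost self-service still requires opening).

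Finally I would invoke a standard $O(\log n)$-approximation for uncapacitated facility location with arbitrary (non-metric) connection costs — e.g., the greedy/LP-rounding algorithm that reduces to set cover — to obtain a set $A$ with $cost(A) \le O(\log n) \cdot OPT$, and then output a random string $r$ whose sampling produces exactly this $A$ (or, more cleanly, observe that since the preprocessing randomness only serves to generate $A_1$, we may simply declare the algorithm's output to be the data structure built from $A$). The connection costs $c(u,a)$ are computable in polynomial time from the metric, so the whole reduction is polynomial. I do not expect a serious obstacle here: the only points needing care are (i) verifying that routing a client to its \emph{nearest} open facility is optimal, which follows from nestedness of the balls $\{v : d(u,v) < d(u,a)\}$ as $d(u,a)$ increases, and (ii) checking that ties in the metric are handled consistently between the definition of $R_{1u}$ (strict inequality) and the connection cost (also strict) — both use $<$, so this matches. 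The one genuinely delicate modeling choice is making sure the level-$2$ cost is exactly $n|A|$ and not something depending on the geometry; since $A_2 = \varnothing$ forces $\min_{w \in A_2} d(u,w) = \infty$, every vertex of $A$ lies in $R_{2u}$, so this is clean.
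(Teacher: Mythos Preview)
Your proposal is correct and follows essentially the same approach as the paper: both reduce $TZ_2$-optimization to non-metric uncapacitated facility location with facilities $=$ clients $= V$, opening cost $n$, and connection cost $c(u,a) = |\{v : d(u,v) < d(u,a)\}|$, then invoke the classical $O(\log n)$-approximation (the paper cites Hochbaum). Your write-up is in fact slightly more careful than the paper's, explicitly justifying via monotonicity of $c(u,\cdot)$ in $d(u,\cdot)$ that the nearest open facility is the optimal assignment, and checking the strict-inequality convention matches.
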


To prove a matching lower bound, we use a reduction from Label Cover to the $TZ_2$-optimization problem. We use a proof which is similar to the proof of the hardness of Set Cover in~\cite{Vazirani13} (based on~\cite{Feige98}). However, we cannot use a reduction directly from Set Cover since we will need some extra properties of the starting instances, and thus are forced to start from Label Cover.  We introduce a new notion of $(m, l, \delta)$-set families and show that these can still be plugged into existing hardness results to get the extra structural properties that we need.   This lets us prove the following theorem:

\begin{theorem}\label{thm:ltz2}
Unless $\mathbf{NP}\subseteq\mathbf{DTIME}(n^{O(\log\log n)})$, the $TZ_2$-optimization problem does not admit a polynomial-time $o(\log n)$-approximation.  
\end{theorem}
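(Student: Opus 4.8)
The plan is to reduce from Label Cover, following the template used for the hardness of Set Cover (Feige, as presented in Vazirani), but carefully adapting it so that the Set-Cover-like instance produced actually arises as a $TZ_2$-optimization instance on a genuine metric space. Recall that in $TZ_2$ we must choose a single set $A = A_1 \subseteq V$ (with $A_2 = \varnothing$, $A_0 = V$), and the cost is $\sum_{u\in V}|R_{1u}| + \sum_{u\in V}|R_{2u}|$, where $R_{1u} = \{v : d(u,v) < \min_{w\in A} d(u,w)\}$ and $R_{2u} = \{v \in V : d(u,v) < \infty\} = V$. So the second-level cost is always exactly $n^2$, a fixed additive term; the only thing we control is the level-1 cost, namely $\sum_{u\in V}|\{v : d(u,v) < d(u,A)\}|$. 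Each vertex $u$ contributes the number of points strictly closer to it than its nearest ``hub'' in $A$; putting a vertex into $A$ zeroes out (or shrinks) these balls. This is exactly a covering-flavored objective: choosing $A$ large kills the ball costs but $|A|$ itself is not directly penalized here — instead the analogue of ``paying for open facilities'' must be encoded into the metric, which is the delicate part.

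Concretely, I would start from a Label Cover instance with the standard gap behavior (either all constraints satisfiable, or at most an $\varepsilon$-fraction simultaneously satisfiable), obtained from the PCP theorem with $O(\log\log n)$ amplification, giving the $\mathbf{NP}\subseteq\mathbf{DTIME}(n^{O(\log\log n)})$ conclusion. I would then build, on top of it, the Feige-style set system using a partition system / ``set family'' gadget — but here I would use the paper's new notion of an $(m,\ell,\delta)$-set family, engineered so that: (i) in the YES case there is a small, structured collection of sets covering the universe, translating to a small choice of $A$ and correspondingly small ball costs; (ii) in the NO case, any cover (hence any $A$) must be $\Omega(\ln n)$ times larger in the relevant cost. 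The crucial extra structural property I'd need from the starting instance (and the reason a direct reduction from Set Cover does not suffice) is some form of regularity/uniformity on how the sets and elements interact — e.g. that elements are covered a bounded number of times, or that sets have controlled sizes — so that the ``distance from $u$ to $A$'' quantity behaves predictably and the ball $\{v: d(u,v) < d(u,A)\}$ has size tightly controlled by whether the corresponding cover constraint is satisfied.

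The main obstacle — and where I expect to spend most of the effort — is the metric embedding step: turning the abstract set-cover gap instance into a metric space $(V,d)$ (satisfying the triangle inequality) in which the $TZ_2$ level-1 cost faithfully tracks the cover size. One natural approach is a ``star-like'' or ``bipartite distance'' construction: have a vertex for each set and a vertex for each element, with carefully chosen short distances between an element and the sets containing it and longer distances otherwise, plus auxiliary vertices to force the triangle inequality and to make ``not putting a set into $A$'' costly. One must verify: (a) $d$ is a metric (the triangle inequality is the annoying bookkeeping); (b) OPT in the YES case is small; (c) any solution $A$ of small level-1 cost yields a small cover in the NO case, contradicting the gap. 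Getting the arithmetic of distances to simultaneously satisfy the triangle inequality and make the cost gap multiplicatively $\Omega(\ln n)$ is the heart of the proof; the $(m,\ell,\delta)$-set family is precisely the device introduced to make these constraints mutually compatible. Once the gap is established, the hardness factor $\Omega(\log n)$ follows from the Set-Cover-style counting (the $(1-o(1))\ln n$ inapproximability of Set Cover under the same complexity assumption), and Theorem~\ref{thm:ltz2} follows, matching the upper bound of Theorem~\ref{thm:utz2}.
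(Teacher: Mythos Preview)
Your overall strategy (Label Cover $\to$ Feige-style set system via $(m,\ell,\delta)$-families $\to$ a bipartite-flavored metric on sets and elements) is exactly the paper's, but your proposal contains a concrete error in the cost function that, as stated, makes the plan unworkable.

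You write $R_{2u}=\{v\in V: d(u,v)<\infty\}=V$, concluding that the level-$2$ cost is a fixed $n^2$ and that ``$|A|$ itself is not directly penalized.'' This is wrong: by definition $R_{iu}=\{v\in A_{i-1}: d(u,v)<\min_{w\in A_i}d(u,w)\}$, so for $i=2$ the range of $v$ is $A_1$, not $V$. Hence $R_{2u}=A_1$ for every $u$, and the level-$2$ cost is $n\,|A_1|$, which \emph{is} precisely the opening-cost term (this is also how Section~\ref{sec:utz2} derives the facility-location reformulation). Two consequences follow. First, under your reading the total cost would lie in $[n^2,2n^2]$ (since the level-$1$ cost is at most $n^2$), so no $\Omega(\log n)$ multiplicative gap is achievable at all; the ``delicate part'' you flag --- encoding an opening cost into the metric --- is both unnecessary and, with a fixed $n^2$ floor, impossible. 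Second, once you use the correct level-$2$ cost $n|A_1|$, there is nothing to encode: the objective is already $n|A_1|+\sum_u|R_{1u}|$, and the metric only has to make the \emph{connection} (level-$1$) cost reflect how well $A_1$ covers the universe.

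With that correction the paper's construction goes through essentially as you sketch: vertices are elements $\mathcal U$ together with (replicated) sets $\mathcal S'$, distances take four values in $[1.2,1.8]$ (so triangle inequality is automatic --- no auxiliary vertices needed), and the replication of $\mathcal S$ to size $|\mathcal U|$ is what makes each uncovered element contribute a large $|R_{1u}|$. The $(m,\ell,\delta)$-system is needed not to balance opening versus connection costs but so that in the NO case even covering \emph{most} elements (not all) forces a consistent labelling, which is what the level-$1$-cost analysis actually yields.
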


For larger stretch values, a natural approach is to realize that a simple LP relaxation suffices to give Theorem~\ref{thm:utz2} in the stretch $3$ case, and try to extend this basic LP to larger stretches.  In Section~\ref{sec:tzk}, we show that this does not work for the more general $TZ_k$-optimization problem: the integrality gap jumps up to become a polynomial. The instance is very simple: it is just the metric space formed by shortest paths on the $n$-cycle.  It turns out to be straightforward to calculate the optimal fractional LP cost, but proving that the optimal integral solution is large is surprisingly involved.  

\begin{theorem}\label{thm:gaptzk}
The basic LP relaxation for the $TZ_k$-optimization problem has an $\Omega(n^{\frac{1}{k}-\frac{1}{2^{k-1}}})$ integrality gap when $k>2$.
\end{theorem}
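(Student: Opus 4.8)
My plan is to analyze the $TZ_k$-optimization LP on the metric space $C_n$ given by shortest-path distances on the $n$-cycle, and to show a matching upper bound on the fractional optimum together with a polynomial lower bound on every integral solution. The LP has, for each level $i\in[k-1]$ and each vertex $v$, a fractional indicator $x_{iv}\in[0,1]$ for membership of $v$ in the (nested) sampled set $A_i$, with constraints $x_{iv}\le x_{i-1,v}$ encoding $A_{k-1}\subseteq\cdots\subseteq A_1$, and a cost term that, for each ordered pair $(u,v)$, charges the fractional amount by which $v$ still lies in $R_{iu}$ — i.e.\ the LP must ``fractionally hit'' every ball $B(u,\rho)$ at an appropriate radius or pay for it. The first step is to write this LP down explicitly for $C_n$, exploiting vertex-transitivity: a natural fractional solution puts a uniform value $x_{iv}=p_i$ on level $i$, with the $p_i$ chosen analogously to the Thorup--Zwick sampling probabilities (roughly $p_i\approx n^{-i/k}$, or a slightly tuned variant), and I would compute that its cost is $O(n^{1+1/k})$ or so. Getting the \emph{exact} optimal fractional cost on the cycle should, as the excerpt says, be a routine optimization over the $p_i$'s once the symmetric structure is identified (and one can argue a symmetric solution is optimal by an averaging/convexity argument over the automorphism group of $C_n$).

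The substance is the integral lower bound: I must show that \emph{any} nested chain $A_{k-1}\subseteq\cdots\subseteq A_1\subseteq V=\mathbb Z_n$ has $cost(A_1,\dots,A_{k-1},C_n,d)=\Omega(n^{1+1/k-1/2^{k-1}})$. On the cycle, $R_{iu}$ is the set of points of $A_{i-1}$ strictly closer to $u$ than the nearest point of $A_i$, so $|R_{iu}|$ is controlled by the gap structure of $A_i$ relative to $A_{i-1}$: if around $u$ the nearest $A_i$-point is at distance $r$, then $R_{iu}$ contains all $A_{i-1}$-points within distance $r-1$, and summing over $u$ the level-$i$ cost is essentially $\sum_{\text{gaps } g \text{ of } A_i} (\text{number of } A_{i-1}\text{-points in the two half-gaps around each endpoint})$, which grows with both the sizes of the $A_i$-gaps and the density of $A_{i-1}$ inside them. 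The plan is to set up a recursion: let $n_i=|A_i|$; a short argument shows the level-$i$ cost is at least (roughly) $\sum$ over $A_i$-gaps of $g\cdot(\text{local }A_{i-1}\text{-count})$, and by convexity the worst case has the $A_i$-gaps as equal as possible, giving level-$i$ cost $\gtrsim n_{i-1}\cdot (n/n_i)$ — while level $k$ alone (where $A_k=\varnothing$) already forces cost $\gtrsim n_{k-1}\cdot n$. Balancing all the level costs $n_{i-1}n/n_i$ simultaneously with $n_0=n$ leads to the recursion whose solution is exactly the exponent $1/k-1/2^{k-1}$; the doubly-exponential correction term $1/2^{k-1}$ comes from the fact that the adversary can make $A_{i-1}$ \emph{very sparse inside the large $A_i$-gaps and dense elsewhere}, so the naive "equal gaps / uniform density" bound loses a factor, and one must track this imbalance carefully level by level.

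Concretely, the key steps in order are: (1) specialize the LP to $C_n$ and compute $OPT_{LP}=O(n^{1+1/k})$ via a symmetric solution; (2) prove the reduction from integral cost to a purely combinatorial quantity about nested gap structures on $\mathbb Z_n$; (3) establish the per-level lower bound $cost_i \gtrsim n_{i-1}\cdot(\text{max } A_i\text{-gap})$ and, crucially, a sharper version that charges for the $A_{i-1}$-mass sitting inside large $A_i$-gaps; (4) solve the resulting recursion in the $n_i$'s to show $\sum_i cost_i = \Omega(n^{1+1/k-1/2^{k-1}})$ no matter how the $n_i$ and the placements are chosen; (5) divide by $OPT_{LP}$ to conclude the $\Omega(n^{1/k-1/2^{k-1}})$ gap for $k>2$. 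I expect step (3)–(4) to be the main obstacle: the clean bound "equal gaps" is not tight because the optimal integral solution deliberately correlates the placements across levels, so the lower bound proof must simultaneously handle an arbitrary interleaving of $k-1$ nested point sets and extract the right exponent — this is precisely the "surprisingly involved" part flagged in the paper, and it is where a careful induction on $k$ (peeling off the innermost level $A_{k-1}$ and recursing on the metric induced on the $A_{k-2}$-gaps) will be needed.
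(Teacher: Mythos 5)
There is a genuine gap: you have split the exponent between the fractional and the integral side the wrong way around, and with the bounds as you state them no integrality gap follows. You propose to show $OPT_{LP}=O(n^{1+1/k})$ on the cycle (via the uniform solution $x_v^{(i)}=p_i\approx n^{-i/k}$) and an integral lower bound of $\Omega(n^{1+1/k-1/2^{k-1}})$. But then your integral lower bound is polynomially \emph{smaller} than your fractional upper bound, so the ratio you get in step (5) is $n^{-1/2^{k-1}}=o(1)$, not $\Omega(n^{1/k-1/2^{k-1}})$; the division simply does not produce the claimed gap. The actual source of the gap on this instance is that the LP can do strictly better than $n^{1+1/k}$: taking the uniform values $x_v^{(i)}=n^{-(2^i-1)/2^{k-1}}$ (doubly exponential in $i$, not $n^{-i/k}$) makes every level cost $O(n^{1+1/2^{k-1}})$, hence $OPT_{LP}=O(n^{1+1/2^{k-1}})$, which for $k>2$ lies polynomially below $n^{1+1/k}$. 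Correspondingly, the integral side must be shown to be $\Omega(n^{1+1/k})$ with \emph{no} $1/2^{k-1}$ loss, matching what randomized Thorup--Zwick sampling achieves on the cycle; the ratio $n^{1+1/k}/n^{1+1/2^{k-1}}$ is what yields $n^{1/k-1/2^{k-1}}$.

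Your choice of instance is correct, and your intuition that the hard part is the adversarial interleaving of the nested sets $A_{k-1}\subseteq\cdots\subseteq A_1$ is also right, but the $1/2^{k-1}$ term you are trying to extract from that interleaving belongs on the LP side, not the integral side. The paper's integral lower bound proceeds by induction over segments of the cycle: for any arc $[a,b]$ containing no point of $A_l$, the cost of the first $l$ levels restricted to that arc is at least $\left(\frac{b-a+1}{4^l}\right)^{1+1/l}$, and the inductive step combines the sub-segment bounds across an $A_{l-1}$-gap decomposition using the inequality $\sum_{i=0}^m\left(x_i^{\alpha}+4i\,x_i\right)\ge\left(\sum_{i=0}^m x_i\right)^{2-1/\alpha}$ (proved via H\"older), which is exactly what controls the ``sparse inside large gaps, dense elsewhere'' imbalance you worry about without losing any polynomial factor. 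As written, your plan would need to be reorganized around the stronger fractional solution and the stronger integral bound before steps (3)--(5) can be carried out.
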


In Section~\ref{sec:pr} we discuss the problem of optimizing the P\v{a}tra\c{s}cu-Roditty distance oracle. The basic LP and a simple rounding algorithm gives us an $O(\log n)$-approximation algorithm.

\begin{theorem}\label{thm:upr}
There is an $O(\log n)$-approximation algorithm for $PR$-optimization problem.
\end{theorem}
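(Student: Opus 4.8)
The plan is to set up a natural LP relaxation and round it with set-cover–style randomized rounding, in the same spirit as the $TZ_2$ upper bound. Write $B(u,a)=\{w\in V:d(u,w)\le a\}$ and note that the pair $\{u,v\}$ is \emph{good} (i.e.\ not in $R$) exactly when there exist $w_1,w_2\in A$ with $d(u,w_1)+d(v,w_2)\le d(u,v)+1$, equivalently when there is an integer split $a$ with $B(u,a)\cap A\ne\varnothing$ and $B(v,d(u,v)+1-a)\cap A\ne\varnothing$. Only the $O(n)$ values of $a$ at which one of these two balls changes are relevant, so I use variables $x_w\in[0,1]$ (indicating whether $w\in A$), $z_{uv}\in[0,1]$ (indicating whether $\{u,v\}\in R$), and $y^a_{uv}\in[0,1]$ for each pair and each relevant split $a$, with the constraints $y^a_{uv}\le\sum_{w\in B(u,a)}x_w$, $y^a_{uv}\le\sum_{w\in B(v,\,d(u,v)+1-a)}x_w$, and $z_{uv}+\sum_a y^a_{uv}\ge 1$, and objective $\min\ n\sum_w x_w+\sum_{\{u,v\}}z_{uv}$. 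This LP has polynomial size, and any integral solution $A$ yields a feasible point of cost $cost(A,V,d)$ by using the canonical split $a=\min_{w\in A}d(u,w)$ on each good pair, so the LP optimum is at most $OPT$.

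Solve the LP, and round by placing each $w\in V$ into $A$ independently with probability $p_w=\min(1,\,c\ln n\cdot x_w)$ for a suitable constant $c$. The expected opening cost is $n\sum_w p_w\le c\ln n\cdot n\sum_w x_w\le c\ln n\cdot OPT$, and a Chernoff bound keeps $|A|$ within a constant factor of this with high probability, so the $n\cdot|A|$ term is $O(\log n)\cdot OPT$. It remains to bound $\mathbb{E}[|R(A)|]=\sum_{\{u,v\}}\Pr[\{u,v\}\in R(A)]$. Pairs with $z_{uv}\ge\tfrac12$ are charged directly, since $\Pr[\{u,v\}\in R(A)]\le 1\le 2z_{uv}$ and these terms sum to at most $2\,OPT$. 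For a pair with $z_{uv}<\tfrac12$ we have $\sum_a y^a_{uv}>\tfrac12$; since there are only $O(n)$ relevant splits, some split $a^\star$ carries $y^{a^\star}_{uv}=\Omega(1/n)$, so both $B(u,a^\star)$ and $B(v,\,d(u,v)+1-a^\star)$ have total $x$-mass $\Omega(1/n)$, and once that mass is $\Omega(1/\log n)$ the $c\ln n$ scaling makes each ball fail to be hit with probability $n^{-\Omega(1)}$, so the pair is good with high probability.

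The main obstacle is precisely this last step: a pair is good only when \emph{two} balls are hit simultaneously—and by the triangle inequality these balls are disjoint whenever no single vertex covers the pair—so the usual $\Theta(\log n)$ amplification does not by itself drive $\Pr[\{u,v\}\in R(A)]$ below $O(\log n)\,z_{uv}$ when the LP spreads a pair's fractional coverage across many low-mass splits. The way around this is a charging argument built on the geometric fact that, again by the triangle inequality, $d(u,w)+d(v,w)\ge d(u,v)$, so a single vertex can cover a pair only with additive slack at most $2$: consequently a pair that the LP covers cheaply but ``unroundably'' must route its coverage through a family of centers whose $x$-weights either aggregate to a large amount (paid for by the $n\sum_w x_w$ term) or force pairs among those centers to have $z$-values bounded away from $0$ (paid for by the $\sum z_{uv}$ term), while an intermediate regime (a split carrying mass between $\operatorname{poly}(1/n)$ and $1/\log n$) is handled by bucketing splits by their $x$-mass and arguing within the heaviest bucket. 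Combining the pieces gives total expected cost $O(\log n)\cdot OPT$, which a standard repetition-and-Markov argument converts into an $O(\log n)$-approximation with high probability.
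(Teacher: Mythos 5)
There is a genuine gap here, and you have in fact put your finger on it yourself: the entire approximation guarantee hinges on the ``charging argument'' in your last paragraph, which is only gestured at, not carried out. Your LP encodes goodness of a pair as an \emph{existential} over splits $a$ of a \emph{conjunction} (both balls $B(u,a)$ and $B(v,d(u,v)+1-a)$ must be hit). As you observe, the LP can then spread $\sum_a y^a_{uv}$ over $\Theta(n)$ splits, guaranteeing only $\Omega(1/n)$ fractional mass in each ball at the best split; scaling by $c\ln n$ then hits each such ball with probability only $O(\log n/n)$, and even a ball of mass $\Omega(1/\log n)$ scaled by $c\ln n$ is missed with \emph{constant} probability $e^{-\Omega(c)}$, not $n^{-\Omega(1)}$ as you claim. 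The proposed rescue --- routing coverage through families of centers, charging to $z$-values of pairs among centers, bucketing splits by mass --- is not developed to the point where one could check it, and it is not clear it can be made to work against this relaxation.

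The missing idea is to use the complementary, universally-quantified characterization. A pair $\{u,v\}$ lands in $R$ exactly when there \emph{exists} an $r\in[0,d(u,v)]$ (take $r=\min_{w\in A}d(u,w)-1$) such that $A$ misses the \emph{union} $B_u(r)\cup B_v(d(u,v)-r)$; equivalently, $\{u,v\}\notin R$ iff for \emph{every} $r$ that union is hit. Writing the constraint
\[
y_{uv}\ \ge\ 1-\sum_{w\in B_u(r)\cup B_v(d(u,v)-r)}x_w \qquad \forall r\in[0,d(u,v)]
\]
(with only $O(n)$ distinct constraints per pair) turns each split into a single standard covering constraint on a union of balls. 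Then $y^*_{uv}\le\tfrac12$ forces every such union to carry fractional mass at least $\tfrac12$, independent randomized rounding with probabilities $\min\{4\ln n\cdot x^*_w,1\}$ misses any fixed union with probability at most $n^{-2}$, and a union bound over the $n$ relevant values of $r$ bounds $\Pr[\{u,v\}\in R]$ by $1/n$. Together with $OPT=\Omega(n)$ this yields the $O(\log n)$ guarantee directly, with no two-ball simultaneity issue and no charging argument needed. Your formulation is a valid relaxation, but without replacing it (or supplying a complete version of the charging argument), the proof does not go through.
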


A reduction from set cover problem also gives us a matching lower bound.

\begin{theorem}\label{thm:lpr}
Unless $\mathbf{P}=\mathbf{NP}$, the $PR$-optimization problem does not admit a polynomial-time $o(\log n)$-approximation.  
\end{theorem}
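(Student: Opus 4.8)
The plan is to reduce from \textsc{Set Cover}, using the fact that under $\mathbf{P}\neq\mathbf{NP}$ there is no polynomial-time $(1-o(1))\log n$-approximation for \textsc{Set Cover}. Given a \textsc{Set Cover} instance with universe $U$, $|U|=m$, and family $\mathcal{S}=\{S_1,\dots,S_t\}$, write $n_0:=m+t$ for the instance size, and let $k^*$ denote the minimum cover size. I will build a metric space $(V,d)$ with one \emph{set vertex} $s_j$ for each $S_j$ and, for each element $e\in U$, a \emph{cloud} of $N$ element vertices $x_e^1,\dots,x_e^N$, where $N$ is a fixed sufficiently large polynomial in $n_0$ (e.g.\ $N=n_0^{3}$ suffices). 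Thus $|V|=mN+t=:n$ is polynomial in $n_0$, $n\ge n_0$, and $\log n=\Theta(\log n_0)$. All distances lie in $\{1,2\}$, so $d$ is an integer metric (as required by $PR$) and the triangle inequality is automatic: set $d(s_i,s_j)=1$ for $i\ne j$; $d(x_e^a,s_j)=1$ if $e\in S_j$ and $2$ otherwise; and $d(x_e^a,x_{e'}^b)=2$ for all distinct element vertices (including within a cloud).

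First I would handle the easy direction. If $\mathcal{C}\subseteq\mathcal{S}$ is a cover of size $k$, take $A=\{s_j:S_j\in\mathcal{C}\}$. Then every vertex of $V$ is within distance $1$ of $A$: a set vertex because the set vertices are mutually at distance $1$, and $x_e^a$ because $\mathcal{C}$ covers $e$ so some $s_j\in A$ has $d(x_e^a,s_j)=1$. Hence for every pair $\{u,v\}$ we have $\min_{w\in A}d(u,w)+\min_{w\in A}d(v,w)-1\le 1+1-1=1\le d(u,v)$, so $R=\varnothing$ and $cost(A,V,d)=nk$. Therefore $OPT\le n\cdot k^*$, where $OPT$ denotes the optimal $PR$ cost on $(V,d)$.

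For the reverse direction, call a cloud $e$ \emph{unsatisfied} by $A\subseteq V$ if no $s_j\in A$ has $e\in S_j$. The key point is that if some cloud $e$ is unsatisfied, then every $x_e^a\notin A$ is at distance exactly $2$ from $A$ (its only distance-$1$ neighbours are set vertices of sets containing $e$, none in $A$); hence for distinct $x_e^a,x_e^b\notin A$ we get $d=2<2+2-1=3$, so $\{x_e^a,x_e^b\}\in R$. Letting $p=|A\cap\{x_e^1,\dots,x_e^N\}|$ this gives $cost(A,V,d)\ge n\,p+\binom{N-p}{2}\ge N^2/16$ (split on whether $p\ge N/2$, using $n\ge N$). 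Since $OPT\le n k^*\le n_0^2 N$ and $\log n=\Theta(\log n_0)$, choosing $N$ a large enough polynomial in $n_0$ makes $N^2/16>(\log n)\cdot OPT$ for all large $n$. Consequently, for $\alpha=o(\log n)$ no $\alpha$-approximate $PR$ solution leaves any cloud unsatisfied. So, given an $\alpha$-approximate solution $A$, the family $\{S_j:s_j\in A\}$ is a genuine set cover, and its size is at most $|A|\le cost(A,V,d)/n\le \alpha\cdot OPT/n\le \alpha k^*$. Thus an $\alpha$-approximation for $PR$-optimization yields an $\alpha$-approximation for \textsc{Set Cover}; taking $\alpha=o(\log n)=o(\log n_0)$ contradicts the hardness of \textsc{Set Cover} under $\mathbf{P}\neq\mathbf{NP}$.

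I expect the main obstacle to be engineering the interaction between the two terms of $cost(A,V,d)=n|A|+|R|$: an uncovered element must be forced to be prohibitively expensive so that it cannot occur in any near-optimal solution, which is precisely why each element is inflated into a cloud of polynomially many copies, and why one must verify that putting cloud vertices into $A$ neither helps satisfy the rest of that cloud (they sit at distance $2$ from one another) nor produces a solution cheaper than the one arising from an optimal cover. Pinning down the value of $N$ so that a single unsatisfied cloud costs strictly more than $(\log n)\cdot OPT$, and checking these structural facts, is the technical heart; the metric axioms come for free since all distances are in $\{1,2\}$.
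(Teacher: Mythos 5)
Your proof is correct and follows essentially the same route as the paper: a reduction from Set Cover via a $\{1,2\}$-valued metric in which membership is encoded by distance $1$ and element vertices are blown up into polynomially large clouds, so that any solution leaving an element uncovered incurs a quadratic $|R|$ penalty that swamps $(\log n)\cdot OPT$. The only differences are cosmetic: the paper also replicates set vertices into groups of size $3n$ and uses within-group distance $1$, and its soundness direction proceeds by an exchange argument (repairing an arbitrary low-cost solution into one induced by a cover of the same cost) rather than your direct cost-threshold argument; both yield the same $\Omega(\log n)$ gap.
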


In Section~\ref{sec:doo} we move to the outliers setting.  For both $TZ_2$- and $PR$-optimization problems, a semidefinite programming relaxation and a simple rounding algorithm gives us an $(O(\frac{\log n}{\varepsilon}),1+\varepsilon)$-approximation algorithm.  Using an SDP relaxation seems to be necessary -- the corresponding LP relaxation requires violating the number of outliers by a factor of $2$ rather than a factor of $1+\varepsilon$.  We can also get a true approximation on $TZ_2$-optimization problem with outliers if the number of outliers is low.  These results form the following theorems.

\begin{theorem}\label{thm:tz2o}
There is an $(O(\frac{\log n}{\varepsilon}),1+\varepsilon)$-approximation algorithm for the $TZ_2$-optimization problem with outliers.
\end{theorem}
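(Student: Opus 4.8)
The plan is to recast the $TZ_2$-optimization problem with outliers as a non-metric facility location problem with outliers and to round a semidefinite relaxation of it. Recall from the reduction behind Theorem~\ref{thm:utz2} that once $A=A_1$ is fixed, $\sum_{u\in V\setminus F}|R_{2u}|=(n-|F|)\,|A|$ (a facility-opening term) while $\sum_{u\in V\setminus F}|R_{1u}|=\sum_{u\in V\setminus F}\min_{a\in A}\bigl|\{v\in V\setminus F:d(u,v)<d(u,a)\}\bigr|$ is a connection term, where $u$ is ``connected'' to its pivot $a=\arg\min_{w\in A}d(u,w)$. So the problem is to choose a facility set $A$ and an outlier set $F$ with $|F|\le f$ minimizing $(n-|F|)|A|$ plus this connection cost. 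Crucially, the connection cost of $u$ to $a$ is itself a count that depends on $F$, so the objective is genuinely bilinear in the ``open'' and ``kept'' decisions; as a first step I would guess $|F^\star|$ exactly (polynomially many choices) and set $N:=n-|F^\star|$ so that the facility term $N\cdot|A|$ is linear.

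The basic LP relaxation has $x_a\in[0,1]$ for opening $a$, $z_u\in[0,1]$ for keeping $u$ (with $\sum_u(1-z_u)\le f$), assignment variables $y_{ua}\le x_a$ with $\sum_a y_{ua}=z_u$, and pair variables bounding the bilinear connection term from below via the usual linearization $y_{uv}\ge\bigl(\sum_{a:\,d(u,a)>d(u,v)}y_{ua}\bigr)+z_v-1$. Set-cover-style randomized rounding of the $x_a$'s (the $O(\log n)$-approximation for non-metric facility location behind Theorem~\ref{thm:utz2}) loses an $O(\log n)$ factor, and thresholding $z_u$ at $\tfrac12$ produces an outlier set of size at most $2f$ on which the surviving pairs are still paid for — this is the $(O(\log n),2)$ guarantee. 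The factor $2$ is intrinsic to LP threshold rounding: restricted to the keep/outlier variables the relaxation behaves like the fractional vertex cover polytope (the all-$\tfrac12$ solution), so to reach $1+\varepsilon$ I would instead write an SDP that lifts the keep/outlier and assignment decisions to unit vectors — an anchor vector $\mathbf v_\emptyset$ for ``outlier,'' vectors $\mathbf v_u$ with $\langle\mathbf v_u,\mathbf v_\emptyset\rangle$ tracking $1-z_u$, and inner products realizing the bilinear connection term — together with the natural consistency and triangle-type inequalities among them. Rounding then has two stages: a vector-rounding step (projecting the $\mathbf v_u$ against a random Gaussian direction relative to $\mathbf v_\emptyset$, or a comparable hyperplane rule) selects $F$ with $\mathbb E[|F|]\le(1+\varepsilon)f$, and the residual instance $V\setminus F$ is then handled by the $O(\log n)$ non-metric facility location rounding of the $x_a$'s; the overall loss is $O(\tfrac{\log n}{\varepsilon})$, with the $\tfrac1\varepsilon$ coming from keeping vertices whose fractional ``kept'' value is as small as $\Theta(\varepsilon)$.

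I expect the crux to be the analysis of the outlier-selection step: showing that keeping only $(1+\varepsilon)f$ vertices as non-outliers is compatible with paying the connection cost within an $O(\tfrac{\log n}{\varepsilon})$ factor of the SDP value. The delicate case is a pair $(u,v)$ that the LP would essentially leave unpaid — both $z\approx\tfrac12$ and no facility opened near $u$ within distance $d(u,v)$ — and here the SDP constraints must certify that either $\bar B(u,d(u,v))$ carries enough fractional opening for the pair to be covered through the (affordable up to $O(\log n)$) facility side, or that $\mathbf v_u$ and $\mathbf v_v$ sit relative to $\mathbf v_\emptyset$ so that the vector rounding drops one of $u,v$ as an outlier with the right probability. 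Converting these expected guarantees into worst-case ones by standard concentration or derandomization, absorbing the guessing of $|F^\star|$, and handling the degenerate regime $f=\Omega(n)$ (where $(n-|F|)|A|$ and $n|A|$ need not be within a constant factor) are the remaining, more routine, points.
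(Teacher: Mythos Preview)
Your diagnosis is right---the LP's factor-$2$ outlier blowup is a vertex-cover-type obstruction and an SDP is the cure---but you overshoot on the machinery and miss the actual, much simpler, mechanism the paper uses.

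The paper's SDP differs from the LP constraint $y_{uv}\ge 1-z_u-z_v-\sum_{w\in B_u(v)}x_w$ in exactly one place: the scalar $z_u+z_v$ is replaced by the inner product $\vec z_u\cdot\vec z_v$. There are no anchor vectors, no unit-norm constraints, no triangle inequalities, and no guessing of $|F^\star|$. Rounding is then \emph{deterministic threshold rounding} for the outliers---put $v\in F$ iff $\|\vec z_v\|^2\ge\frac{1}{1+\varepsilon}$, which immediately gives $|F|\le(1+\varepsilon)f$ from $\sum_v\|\vec z_v\|^2\le f$---together with the usual independent randomized rounding of the $\vec x$'s scaled by $O(\tfrac{\log n}{\varepsilon})$. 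The one-line analysis: if neither $u$ nor $v$ is thresholded into $F$ then $\|\vec z_u\|,\|\vec z_v\|<(1+\varepsilon)^{-1/2}$, so by Cauchy--Schwarz $\vec z_u\cdot\vec z_v<\frac{1}{1+\varepsilon}$; hence whenever $\|\vec y_{uv}\|^2\le\frac{\varepsilon}{2}$ the SDP constraint forces $\sum_{w\in B_u(v)}\|\vec x_w\|^2\ge\Omega(\varepsilon)$, and the scaled randomized rounding opens some $w\in B_u(v)$ with probability $\ge 1-\tfrac{1}{n}$. Pairs with $\|\vec y_{uv}\|^2>\frac{\varepsilon}{2}$ are charged directly at a $\tfrac{2}{\varepsilon}$ loss. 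In other words, the \emph{only} role of the SDP is to replace the additive $z_u+z_v$ by a term that Cauchy--Schwarz bounds multiplicatively, so that threshold rounding works at $\frac{1}{1+\varepsilon}$ instead of $\frac12$.

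Your proposed Gaussian/hyperplane selection of $F$ is where the gap lies. Such a rule declares $u$ an outlier with probability governed by the \emph{angle} between $\mathbf v_u$ and $\mathbf v_\emptyset$, not linearly by $\langle\mathbf v_u,\mathbf v_\emptyset\rangle$, so from a budget constraint $\sum_u\langle\mathbf v_u,\mathbf v_\emptyset\rangle\le f$ you do not obtain $\mathbb E[|F|]\le(1+\varepsilon)f$; at best you incur a fixed constant ($\tfrac{2}{\pi}$-type) distortion, which is exactly what you are trying to avoid. Nor is it clear how hyperplane rounding would interact with the ``delicate'' pairs you flag so as to guarantee one endpoint is dropped with the right probability. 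Once you see the Cauchy--Schwarz trick, none of this is needed: threshold rounding already gives the deterministic $(1+\varepsilon)$ outlier bound, and the two-stage residual-instance analysis, the concentration step, and the guessing of $|F^\star|$ all fall away.
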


\begin{theorem}\label{thm:tz2ot}
There is an $O(\log n)$-approximation algorithm for $TZ_2$-optimization problem with outliers if the number of outliers is at most $\sqrt{n}$.
\end{theorem}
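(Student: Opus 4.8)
The plan is to build a linear programming relaxation of the $TZ_2$-optimization problem with outliers, extending the basic LP behind Theorem~\ref{thm:utz2} with outlier variables, round it by a set-cover style argument to get the $O(\log n)$ factor, and use the hypothesis $f\le\sqrt n$ in two places: to argue the relaxation loses only a constant factor, and to repair the (inevitable) rounding error in the outlier budget without a bicriteria violation.

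First I would write the natural relaxation. Introduce $y_w\in[0,1]$ for $w\in V$ (``$w\in A_1$''), $z_v\in[0,1]$ for $v\in V$ (``$v\in F$''), and $x_{uv}\in[0,1]$ for each ordered pair $u\ne v$ (``$v\in R_{1u}$''). Since $v\in R_{1u}$ precisely when $u,v\notin F$ and no landmark is at distance $\le d(u,v)$ from $u$, the natural constraints are
\[ x_{uv}+z_u+z_v+\!\!\sum_{w:\,d(u,w)\le d(u,v)}\!\!y_w\ \ge\ 1\quad(u\ne v),\qquad \sum_v z_v\le f. \]
The true objective $|V\setminus F|\cdot|A_1|+\sum_u|R_{1u}|$ has the nonlinear term $|V\setminus F|\cdot|A_1|$, but replacing $|V\setminus F|$ by $n$ only overcounts the landmark cost by $|F|\cdot|A_1|\le\sqrt n\,|A_1|$. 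Because $A_1=\varnothing$ costs $(n-|F|)^2=\Omega(n^2)$, the optimum keeps at least one landmark, so $OPT\ge n-f\ge n/2$ and $|A_1|\le 2\,OPT/n$; hence the overcount is $O(OPT/\sqrt n)=o(OPT)$, and $\min\, n\sum_w y_w+\sum_{u\ne v}x_{uv}$ over the polytope above is a relaxation of $(1+o(1))\,OPT$.

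To round, first handle $z$ by a threshold: declare $v$ an outlier whenever $z_v^*\ge 1/3$, giving a set $\widehat F$ with $|\widehat F|\le 3f$; the surviving constraints have $z_u^*+z_v^*<2/3$, so $x_{uv}^*+\sum_{w:d(u,w)\le d(u,v)}y_w^*\ge 1/3$, and these are ordinary covering constraints in $x,y$. Applying the greedy/LP-rounding argument behind Theorem~\ref{thm:utz2} (non-metric facility location, uniform opening cost $n$) rounds them with an $O(\log n)$ loss, producing a landmark set $\widehat A$ with $n|\widehat A|=O(\log n)\cdot n\sum_w y_w^*$ and a set of ``paid'' pairs of size $O(\log n)\sum_{u\ne v}x_{uv}^*$, so $cost(\widehat A,V\setminus\widehat F,d)=O(\log n)\,OPT$; the only defect is that $|\widehat F|$ may be up to $3f$ rather than $f$.

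The main obstacle is repairing the outlier count to $\le f$ with no bicriteria slack --- this is exactly what forces the SDP in Theorem~\ref{thm:tz2o} for general $f$, and what $f\le\sqrt n$ buys us. The key move is to delete a vertex $x$ from $\widehat F$ and instead add it to $\widehat A$: the bunch cost cannot increase (for $u\notin\widehat F$ we have $d(u,\widehat A\cup\{x\})\le d(u,\widehat A)$ and $x$ cannot enter $R_{1u}$ since $d(u,x)<d(u,\widehat A\cup\{x\})$ is impossible, while $R_{1x}$ becomes empty), and the landmark cost rises by at most $n+|\widehat A|+1$. Doing this $|\widehat F|-f\le 2f\le 2\sqrt n$ times costs an extra $O(n^{1.5})$ plus lower-order terms, which is absorbed into $O(\log n)\,OPT$ whenever $OPT=\Omega(n^{1.5}/\log n)$. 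For the complementary regime $OPT=o(n^{1.5}/\log n)$ I would instead just run the non-outlier algorithm of Theorem~\ref{thm:utz2} on all of $V$ and output it with $F=\varnothing$, taking whichever of the two candidate solutions is cheaper (since $OPT$ need not be known); the point I expect to have to work hardest on is verifying that this is still an $O(\log n)$-approximation in that regime, i.e.\ bounding how much a set of at most $\sqrt n$ outliers can decrease the $TZ_2$ cost --- showing that the best solution on all of $V$ is within a constant factor of $OPT$ when $OPT$ is this small --- and double-checking that the facility-location rounding keeps $|\widehat A|$ proportional to $\sum_w y_w^*$ rather than merely to $OPT/n$.
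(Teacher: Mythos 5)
Your high-level plan (relax, round with an $O(\log n)$ loss, then repair the outlier budget) runs into trouble at exactly the step you flagged, and the difficulty is not just technical: both of your mechanisms for getting from $3f$ outliers back down to $f$ fail. The repair step that converts each excess outlier into a landmark costs an additive $\Theta(n)$ per conversion, hence $\Theta(fn)=\Theta(n^{3/2})$ in total, which as you note is only absorbable when $OPT=\Omega(n^{3/2}/\log n)$. Your fallback for the complementary regime needs the best \emph{no-outlier} solution to be within an $O(\log n)$ factor of $OPT$, and this is false: deleting $f$ vertices can remove up to $2nf$ pairs from the bunch cost, and essentially all of this saving can be realized. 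For a sketch, take $\sqrt n$ vertices $o_1,\dots,o_{\sqrt n}$, give each a halo $H_i$ of $n^{3/4}$ core vertices at distance $1$ (with the $H_i$ spread so that each core vertex lies in $\Theta(n^{1/4})$ halos), and set all other distances to $2$. Removing the $o_i$ leaves a uniform metric of cost $\Theta(n)$, but with them present each of the $n^{5/4}$ incidences $(h,o_i)$, $h\in H_i$, must either be paid for in a bunch or covered by a landmark, and each landmark saves only $O(n)$ incidences while costing $n$; so the no-outlier optimum is $\Omega(n^{5/4})$ while $OPT=\Theta(n)$. On such instances both branches of your algorithm lose a polynomial factor, so the claim you ``expect to work hardest on'' cannot be established.

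The paper's proof sidesteps this by never exceeding the budget at all: it keeps the SDP relaxation $SDP_{TZ_2O}$ from the bicriteria result and simply takes as $F$ the $f$ vertices with the largest $\|\vec z_v^*\|^2$. The new burden is then to bound the pairs that are neither covered by the randomized rounding of the $\vec x$ variables nor removed by $F$; for such a pair the SDP constraint forces $\vec z_u^*\cdot\vec z_v^*\ge 1-\frac{5}{6}\varepsilon$, which by Cauchy--Schwarz makes \emph{both} $\|\vec z_u^*\|^2$ and $\|\vec z_v^*\|^2$ at least $\frac{1}{2}$. Since $\sum_v\|\vec z_v^*\|^2\le f$, at most $2f$ vertices qualify, so there are at most $4f^2\le 4n$ such pairs --- an additive term absorbed by $OPT=\Omega(n)$, and the only place $f\le\sqrt n$ is used. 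Your LP cannot support this argument as written: the constraint $z_u+z_v\ge 1-\varepsilon$ is satisfiable with $z_u\approx 1$ and $z_v\approx 0$, so the uncovered pairs cannot be bounded by the square of the number of large-$z$ vertices. This ``both endpoints must be near-outliers'' property of the inner product is precisely why the paper uses an SDP here rather than the LP.
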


\begin{theorem}\label{thm:pro}
There is an $(O(\frac{\log n}{\varepsilon}),1+\varepsilon)$-approximation algorithm for the $PR$-optimization problem with outliers.
\end{theorem}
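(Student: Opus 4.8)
The plan is to mirror the SDP-based approach used for Theorem~\ref{thm:tz2o}, adapting it to the $PR$ cost function. Recall that the $PR$-optimization problem without outliers has cost $n\cdot|A| + |R|$ where $R = \{\{u,v\}\subseteq V \mid d(u,v) < \min_{w\in A}d(u,w) + \min_{w\in A}d(v,w) - 1\}$. For the outlier version we want to choose a set $A\subseteq V$ and a set $F\subseteq V$ with $|F|\le f$ so that, after deleting $F$, the resulting cost $(n-|F|)\cdot|A\setminus F| + |R_{V\setminus F}|$ is minimized. The first step is to formulate an SDP relaxation: introduce a unit vector $v_0$ and, for each vertex $u$, a vector $v_u$ with $v_u\cdot v_u = 1$ and the usual "$\ell_2^2$-is-a-metric"/triangle-type constraints, where $v_u\cdot v_0 \in \{0,1\}$ integrally encodes ``$u\in F$'' (outlier) versus ``$u\notin F$.'' Separately we keep a second integral indicator for ``$u\in A$.'' The budget constraint $\sum_u (\text{outlier indicator}) \le f$ becomes $\sum_u (1 - v_u\cdot v_0) \le f$ in the relaxation, and the objective is the fractional analogue of $n|A| + |R|$ restricted to non-outliers. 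The key modeling point, exactly as in the $TZ_2$ case, is that for a pair $\{u,v\}$ to contribute to $|R|$ we need \emph{both} $u$ and $v$ to be non-outliers, and this ``AND of two events'' is what forces the SDP (rather than an LP): the LP can only write $\max(0, x_u + x_v - 1)$-type terms, which on rounding costs a factor $2$ in the outlier budget, whereas the vector inner product $v_u\cdot v_v$ captures the joint event tightly enough to round with only a $(1+\varepsilon)$ violation.

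The second step is the rounding. I would use the standard threshold/hyperplane rounding for the ``which vertices are outliers'' decision: pick a random Gaussian direction (or a random threshold on $v_u\cdot v_0$) and declare $u$ an outlier if its projection falls below a cutoff chosen so that, in expectation, the number of outliers is at most $(1+\varepsilon)f$ (tuning the cutoff as a function of $\varepsilon$, which is where the $1/\varepsilon$ in the approximation ratio enters). Conditioned on the outlier set $F$, the remaining instance is just an ordinary $PR$-optimization instance on $(V\setminus F, d)$, so the third step is to invoke the $O(\log n)$-approximation machinery from Theorem~\ref{thm:upr} (the basic LP plus its simple rounding) on the residual instance to choose $A$; this contributes the $O(\log n)$ factor. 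Combining, the expected cost is $O(\log n)$ times the SDP optimum, which lower-bounds $OPT$, and the expected number of outliers is at most $(1+\varepsilon)f$; a Markov/alteration argument or conditioning on a good outcome converts the two expectation guarantees into a deterministic $(O(\tfrac{\log n}{\varepsilon}), 1+\varepsilon)$ statement.

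The main obstacle I anticipate is showing that the SDP relaxation is simultaneously \emph{valid} (its optimum is at most $OPT$) and \emph{roundable}: specifically, proving that after the hyperplane rounding of the outlier vectors, the expected number of pairs surviving in $R$ is bounded by (a constant times) the fractional $R$-contribution in the SDP, \emph{and} that the $|A|$ term behaves. The subtlety is the interaction between the two integral structures (the outlier indicators $v_u\cdot v_0$ and the set-$A$ indicators): the pair $\{u,v\}$ lands in $R_{V\setminus F}$ only if neither endpoint is an outlier \emph{and} the nearest-$A$ distances (now taken over $A\setminus F$) are still too small, and deleting outliers from $A$ can only \emph{increase} those nearest-neighbor distances, which helps but must be accounted for carefully so that the SDP's fractional surrogate for $\min_{w\in A}d(u,w)$ remains a valid lower bound after rounding. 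I expect the cleanest route is to decouple: first fix $F$ by rounding only the $v_u\cdot v_0$ part, argue the surviving fractional solution on $V\setminus F$ is a feasible fractional solution to the (outlier-free) $PR$ LP with objective bounded in expectation by the SDP value, and then appeal to Theorem~\ref{thm:upr} as a black box — this isolates the genuinely new difficulty (the outlier rounding and its effect on the pair term) from the already-understood $O(\log n)$ rounding.
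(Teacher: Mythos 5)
Your overall architecture matches the paper's: an SDP in which the outlier indicators are vectors and the pair constraint charges the inner product $\vec{z}_u\cdot\vec{z}_v$ rather than the LP's sum $z_u+z_v$ (you correctly identify this as the reason the factor-$2$ loss in the outlier budget drops to $1+\varepsilon$), followed by rounding the outliers and then rounding $A$ with an $O(\log n)$ loss. Your ``decouple, then feed the surviving fractional solution into the outlier-free $LP_{PR}$ machinery'' is a legitimate reorganization of what the paper does: the paper instead rounds $A$ directly from the SDP by sampling each $v$ with probability $\min\{\frac{6\ln n}{\varepsilon}\|\vec{x}_v^*\|^2,1\}$ and union-bounding over the $n$ relevant values of $r$; in your version the $1/\varepsilon$ would enter by scaling the residual solution by roughly $\frac{1+\varepsilon}{\varepsilon}$ to restore feasibility of the constraints $y_{uv}\ge 1-\sum_w x_w$ once the $\vec{z}_u\cdot\vec{z}_v$ term is dropped. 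Either route is fine.

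The genuine gap is the outlier rounding. The paper does \emph{not} use hyperplane rounding; it uses the deterministic threshold $v\in F$ iff $\|\vec{z}_v^*\|^2\ge\frac{1}{1+\varepsilon}$, and this is not a stylistic choice. The analysis must guarantee that every pair for which the SDP pays essentially nothing --- i.e.\ $\|\vec{y}_{uv}^*\|^2\approx 0$ and $\sum_w\|\vec{x}_w^*\|^2\approx 0$, which by the constraint forces $\vec{z}_u^*\cdot\vec{z}_v^*$ close to $1$ --- is excluded from $R$ \emph{with certainty}. The threshold achieves this via Cauchy--Schwarz: $\vec{z}_u^*\cdot\vec{z}_v^*\le\|\vec{z}_u^*\|\,\|\vec{z}_v^*\|\le\max(\|\vec{z}_u^*\|^2,\|\vec{z}_v^*\|^2)$, so a large inner product forces at least one endpoint over the threshold and hence deterministically into $F$. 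A random hyperplane (or a random threshold) leaves each such pair alive with constant probability; there can be $\Theta(f^2)$ of them, each contributing $1$ to $|R|$ while contributing essentially nothing to the SDP objective, so the expected cost is not bounded by $O(\log n)\cdot OPT$, and conditioning or alteration cannot repair this because the loss is in the cost, not the budget. Relatedly, your unit-vector formulation with $v_u\cdot v_0$ encoding membership does not directly support this Cauchy--Schwarz step, whereas the paper's non-unit vectors with $\|\vec{z}_v\|^2\le 1$ do. Replace the hyperplane rounding with the deterministic threshold at $\frac{1}{1+\varepsilon}$ and the rest of your plan goes through.
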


\section{$TZ_2$-Optimization Problem}\label{sec:tz2}

We first give an $O(\log n)$-approximation for $TZ_2$-optimization (Theorem~\ref{thm:utz2}), and follow this with a matching lower bound.  

\subsection{Upper Bound}\label{sec:utz2}

We will prove our upper bound by a reduction to the non-metric facility location problem.

\begin{definition}
In the \emph{non-metric facility location} problem we are given a set $F$ of facilities, a set $D$ of clients, an opening cost function $f:F\rightarrow\mathbb{R}^+$, and a connection cost function $c:D\times F\rightarrow\mathbb{R}^+$. The goal is to find the set $S \subseteq F$ which minimizes $\sum_{i\in S}f(i)+\sum_{i\in D}\min_{j\in S}c(i,j)$ (i.e.~the sum of the opening and connection costs).
\end{definition}

Non-metric facility location is a classic problem, and much is known about it, including the following upper bound due to Hochbaum. 

\begin{theorem}[\cite{Hochbaum82}] \label{thm:Hochbaum}
There is a polynomial time algorithm which gives an $O(\log n)$-approximation to the non-metric facility location problem.
\end{theorem}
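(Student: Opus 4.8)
The plan is to reduce non-metric facility location to weighted \textsc{Set Cover} over an exponentially large but implicitly represented set system, and then run the greedy algorithm, whose classical analysis gives an $H_{|D|} = O(\log n)$ guarantee.

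First I would set up the reduction. The ground set to be covered is the client set $D$. For every facility $j \in F$ and every nonempty subset $T \subseteq D$, introduce a ``star'' $\sigma_{j,T}$ that covers exactly the clients in $T$ and has cost $w(j,T) = f(j) + \sum_{i \in T} c(i,j)$. A collection of stars $\{\sigma_{j_1,T_1},\dots,\sigma_{j_\ell,T_\ell}\}$ whose client sets cover $D$ yields a facility-location solution by opening $\{j_1,\dots,j_\ell\}$ and assigning each client to (one of) the facilities whose star contains it; reassigning each client to its nearest open facility only decreases cost, so the resulting solution has cost at most $\sum_t w(j_t,T_t)$. Conversely, any facility-location solution $S$ (with each client assigned to its nearest open facility) gives a star cover by taking, for each $j \in S$ that serves a nonempty client set $T_j$, the star $\sigma_{j,T_j}$; since opening a facility that serves no client only increases cost, we may assume every open facility is used, and then this star cover has the same cost as $S$. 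Hence the optimum of the star \textsc{Set Cover} instance equals the optimum of the facility location instance.

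Second, I would run greedy weighted set cover on this instance: repeatedly pick the star minimizing the ratio $w(j,T)/|T \setminus C|$, where $C$ is the set of already-covered clients, until $C = D$. The standard Chv\'atal--Lov\'asz analysis --- charge each newly covered client a price equal to the ratio at the step it is covered, and bound the total price charged to the clients of any fixed optimal cover by a harmonic sum --- shows the greedy cost is at most $H_{|D|} \le 1 + \ln|D|$ times the optimum. Since $|D| \le n$ (or $|D| \le n^2$ when the clients are pairs, in which case $H_{|D|} = O(\log n)$ still), this is an $O(\log n)$-approximation.

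The one point that genuinely requires an argument --- and the main obstacle --- is implementing the greedy step in polynomial time despite there being exponentially many stars. Here I would use an exchange argument: fix a facility $j$ and the uncovered set $U = D \setminus C$, and note that among all stars $\sigma_{j,T}$ with $T \subseteq U$ and $|T| = t$ fixed, the one of least cost (hence least ratio, since $|T \setminus C| = t$ is fixed) takes $T$ to be the $t$ clients of $U$ with smallest connection cost $c(\cdot,j)$. Thus, after sorting each column of $c$ once, the overall best star is found by trying every facility $j \in F$ and every prefix length $t \in \{1,\dots,|U|\}$ --- a polynomial number of candidates. With this oracle, the greedy algorithm runs in polynomial time, and combining it with the reduction and the $H_{|D|}$ bound yields the claimed $O(\log n)$-approximation. (Alternatively one could solve the LP relaxation of the star formulation and round it, but the greedy route avoids an exponential-size LP and is the cleanest self-contained argument.)
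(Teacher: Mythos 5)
Your argument is correct and is essentially the standard greedy set-cover reduction that the cited result of Hochbaum uses; the paper itself does not reprove this theorem but only cites it and notes that the algorithm is greedy, which is exactly what you reconstruct (including the key point of implementing the greedy step over the exponentially many stars by sorting connection costs and trying prefixes).
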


Hochbaum's algorithm is a greedy algorithm, but it is also straightforward to design an algorithm with similar bounds using an LP relaxation.  Since it is not necessary we do not present the relaxation here, but generalizations of the relaxation will prove important in the more general $TZ_k$ setting (see Section~\ref{sec:tzk}).

We now show that the  $TZ_2$-optimization problem is essentially a special case of non-metric facility location problem.  First, simple arithmetic manipulation of the cost function of the $TZ_2$-optimization problem gives the following:
\begin{align*}
cost(A_1,V,d)=&\sum_{u\in V}|R_{1u}|+\sum_{u\in V}|R_{2u}| \\
=&\sum_{u\in V}\left|\{v\in V\mid d(u,v)<\min_{w\in A_1}d(u,w)\}\right| +\sum_{u\in V}\left|\{v\in A_1\mid d(u,v)<\infty\right|\\
=&\sum_{u\in V}\left|\{v\in V\mid d(u,v)<\min_{w\in A_1}d(u,w)\}\right|+n|A_1| \\
=&\sum_{w\in A_1}n+\sum_{u\in V}\min_{w\in A_1}\left|\{v\in V\mid d(u,v)<d(u,w)\}\right|.
\end{align*}

Given an instance $(V,d)$ of the $TZ_2$-optimization problem, we create an instance of non-metric facility location by setting $F=D=V$, opening costs $f(v)=n$ for all $v\in V$, and connection costs $c(u,w)=|\{v\in V\mid d(u,v)<d(u,w)\}|$ for all $u,w\in V$.  Then the cost function of the $TZ_2$-optimization problem is exactly the same as the cost function of non-metric facility location problem. Therefore $TZ_2$ is a special case of non-metric facility location, which together with Theorem~\ref{thm:Hochbaum} implies Theorem~\ref{thm:utz2}.

\subsection{Lower Bound}

Proving an $\Omega(\log n)$ hardness of approximation (Theorem~\ref{thm:ltz2}) turns out to be surprisingly difficult.  Details appear in Appendix~\ref{app:tz2}; here we provide an informal overview.  Technically we reduce directly to $TZ_2$-optimization from a version of the Label Cover problem that corresponds to applying parallel repetition~\cite{RR98} to 3SAT-5, which is a standard starting point for hardness reductions.  Informally, though, we are ``really" reducing from Set Cover: given an instance of Set Cover, we show how to create an instance of $TZ_2$-optimization where the cost of the optimal solution is the same (up to a constant and a polynomial scaling factor).  But in order for our reduction to work, we actually need more than just an arbitrary Set Cover instance: we need a version of Set Cover in which it is hard even to cover \emph{most} of the elements, not just all of them.  

So we have to also give a new reduction from Label Cover to Set Cover, showing that even this version of Set Cover is hard.  It turns out that Feige's reduction~\cite{Feige98}, reinterpreted by Vazirani~\cite{Vazirani13}, essentially already gives us what we need.  We just need to analyze it a bit more carefully.  In particular, a key component of this reduction is what Vazirani called \emph{$(m, l)$-set systems}, which can be thought of as nearly-unbiased sample spaces.  We generalize this notion to $(m, l, \delta)$-set systems, given in the following definition.



\begin{definition}~\label{def:setsystem}
A set $B$ (the universe) and a collection of subsets $C_1,\mathellipsis,C_m$ of $B$ form an \emph{($m,l,\delta$)-set system} if any collection of $l$ sets in $\{C_1,\mathellipsis,C_m,\overline{C_1},\mathellipsis,\overline{C_m}\}$ whose union contains at least $(1-\delta)|B|$ elements must include both $C_i$ and $\overline{C_i}$ for some $i$.
\end{definition}

An $(m, l)$-set system is just a $(m, l, 0)$-set system.  While not all $(m, l)$-set systems are $(m, l \delta)$-set systems for larger $\delta$, the construction of $(m, l)$-set systems in~\cite{Vazirani13} actually does generalize directly to larger values of $\delta$.  With this tool in hand, we follow through the rest of the reduction and it gives us the type of Set Cover instances which we need.  Technically our reduction skips this step by going directly from Label Cover to $TZ_2$-optimization, but generating these kinds of Set Cover instances is intuitively what the first part of the reduction is doing.  

%

\section{$TZ_k$-Optimization Problem}\label{sec:tzk}

We now move to the more general $TZ_k$-optimization problem. While we are not able to give nontrivial upper bounds for this problem, we can at least show that the basic LP relaxation (as discussed in Section~\ref{sec:utz2}) does not give polylogarithmic bounds in this more general setting.

%
%

\subsection{The LP}

Let $B_u(v)=\{w\in V\mid d(u,w)\le d(u,v)\}$. For every $v\in V$ and $i\in [k]$, let $x_v^{(i)}$ be a variable which is supposed to be an indicator for whether $v\in A_i$.  Similarly, for all $u,v\in V$ and $i\in [k]$, let $y_{uv}^{(i)}$ be a variable which is supposed to be an indicator for whether $v\in R_{iu}$. (Recall that $R_{iu}=\{v\in A_{i-1}\mid d(u,v)<\min_{w\in A_i}d(u,w)\}$) We can easily write an LP relaxation for this problem:
\[\begin{array}{rll}(LP_{TZ_k}):
\min&\sum_{i=1}^k\sum_{u,v\in V}y_{uv}^{(i)}\\
s.t.&0=x_v^{(k)}\le x_v^{(k-1)}\le\mathellipsis\le x_v^{(1)}\le x_v^{(0)}=1&\forall v\in V\\
&y_{uv}^{(i)}\ge x_v^{(i-1)}-\sum_{w\in B_u(v)}x_w^{(i)}&\forall u,v\in V,i\in[k]\\
&y_{uv}^{(i)}\ge0&\forall u,v\in V,i\in[k]\\
\end{array}\]

It can easily be shown that this is a valid relaxation (the proof can be found in Appendix~\ref{app:tzk}).  When restricted to the special case of $k=2$, it is not hard to see that this LP is essentially a special case of the basic LP relaxation for non-metric facility location, which can be used to prove the $O(\log n)$ bound of Theorem~\ref{thm:utz2}.  But for larger values of $k$ the behavior is different, and does not result in a polylogarithmic integrality gap.

\subsection{Integrality Gap}

The integrality gap instance is quite simple: the metric $(V,d)$ induced by shortest-path distances in a cycle.  Slightly more formally, we let $V=[n]$, and use the cycle distance $d(u,v)=\min\{|u-v|,n+\min\{u,v\}-\max\{u,v\}\}$. 

Details can be found in Appendix~\ref{app:tzk}. It turns out to be relatively easy to find a fractional solution to $LP_{TZ_k}$ with cost $O(n^{1+\frac{1}{2^{k-1}}})$ on this instance.  The tricky part is lower bounding the optimal solution, i.e., showing that the optimal integral solution has cost at least $\Omega(n^{1+\frac{1}{k}})$. Combining these two results gives us an $\Omega(n^{\frac{1}{k}-\frac{1}{2^k-1}})$ integrality gap, proving Theorem~\ref{thm:gaptzk}.

\section{$PR$-Optimization Problem}\label{sec:pr}

We now move from Thorup-Zwick distance oracles to P\v{a}tra\c{s}cu-Roditty distance oracles.  We show that from an optimization perspective, they are similar to $TZ_2$ in that we can find matching bounds: an $O(\log n)$-approximation, and $\Omega(\log n)$-hardness.

\subsection{Upper Bound}

In this section we prove Theorem~\ref{thm:upr} by using an LP and randomized rounding to give an $O(\log n)$-approximation to the $PR$-optimization problem.

Let $B_u(v)=\{w\in V\mid d(u,w)\le d(u,v)\}$, and $B(u,r)=\{w\in V\mid d(u,w)\le r\}$. We can see $B_u(v)=B(u,d(u,v))$. Now, let $x_v$ be a variable which is supposed to be an indicator for whether $v\in A$, and let $y_{uv}$ be a variable which is supposed to be an indicator for whether $\{u,v\}\in R$. (Recall that $R=\left\{\{u,v\}\subseteq V\mid d(u,v)<\min_{w\in A}d(u,w)+\min_{w\in A}d(v,w)-1\right\}$). We can write the following LP relaxation:

\[\begin{array}{rll}(LP_{PR}):
\min&\sum_{v\in V}n\cdot x_v+\sum_{\{u,v\}\subseteq V}y_{uv}\\
s.t.&y_{uv}\ge1-\sum_{w\in B_u(r)\cup B_v(d(u,v)-r)}x_w&\forall u,v\in V,\forall r\in[0,d(u,v)]\\
&x_v\in[0,1]&\forall v\in V\\
&y_{uv}\ge0&\forall u,v\in V\\
\end{array}\]

At first blush it may not be obvious that the first type of constraint in this LP really captures the characterization of paris in $R$.  But it is actually not that hard to see that this is a valid relaxation (a formal proof can be found in Appendix~\ref{app:pr}). Note that while the number of constraints appears to be exponential (recall that we assume integer weights, but not necessarily unit weights, and hence $d(u,v)$ is not necessarily polynomial in the input size), it is in fact possible to solve this LP in polynomial time.  We can do this by noting that for each $u, v \in V$, only at most $n$ different value of $r$ actually yield \emph{different} constraints, so we can simply write the constraints for those values. 

Our algorithm is relatively straightforward. We first solve $LP_{PR}$ and get an optimal fractional solution $(x_v^*,y_{uv}^*)$.  We then use independent randomized rounding, adding each $v \in V$ to $A$ independently with probability $\min\{4\ln n\cdot x_v^*,1\}$.

\begin{lemma}
If $y_{uv}^*\le\frac{1}{2}$, then the probability that $\{u,v\}\in R$ is at most $\frac{1}{n}$.
\end{lemma}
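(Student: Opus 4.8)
The plan is to show that the rounding makes the relevant constraint ``satisfied'' with high probability whenever $y_{uv}^* \le \tfrac12$. Fix a pair $\{u,v\} \subseteq V$ with $y_{uv}^* \le \tfrac12$. By the constraints of $LP_{PR}$, for \emph{every} $r \in [0, d(u,v)]$ we have $\sum_{w \in B_u(r) \cup B_v(d(u,v)-r)} x_w^* \ge 1 - y_{uv}^* \ge \tfrac12$. The event ``$\{u,v\} \in R$'' is, by the definition of $R$ and the characterization used to build the LP, exactly the event that there exists some $r$ along the $u$--$v$ geodesic budget for which $A$ avoids the set $B_u(r) \cup B_v(d(u,v)-r)$; equivalently, it is the event $\min_{w \in A} d(u,w) + \min_{w \in A} d(v,w) - 1 > d(u,v)$. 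The first step is to make precise that $\{u,v\} \in R$ implies that for the particular choice $r = \min_{w\in A} d(u,w)$ (or the appropriate integer split of the budget), $A \cap \big(B_u(r) \cup B_v(d(u,v)-r)\big) = \varnothing$. Since only polynomially many values of $r$ give distinct sets, it suffices to union-bound over those $O(n)$ sets.

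Next I would bound, for a single such set $T = B_u(r) \cup B_v(d(u,v)-r)$, the probability that $A \cap T = \varnothing$. Each $w \in T$ is excluded from $A$ independently with probability $1 - \min\{4\ln n \cdot x_w^*, 1\} \le 1 - 4\ln n \cdot x_w^* \le e^{-4 \ln n \cdot x_w^*}$ (using $1 - t \le e^{-t}$, and the bound is trivially $0 \le \ldots$ when the clipped probability is $1$). By independence,
\[
\Pr[A \cap T = \varnothing] \le \prod_{w \in T} e^{-4\ln n \cdot x_w^*} = \exp\Big(-4\ln n \sum_{w \in T} x_w^*\Big) \le \exp\big(-4\ln n \cdot \tfrac12\big) = n^{-2},
\]
where the last inequality uses $\sum_{w \in T} x_w^* \ge \tfrac12$ from the LP constraint. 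Then a union bound over the at most $n$ distinct sets $T$ arising from the at most $n$ relevant values of $r$ gives $\Pr[\{u,v\} \in R] \le n \cdot n^{-2} = \tfrac1n$, as claimed.

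The only real subtlety — and the step I expect to need the most care — is the translation between the combinatorial event ``$\{u,v\} \in R$'' and the statement ``$A$ misses $B_u(r) \cup B_v(d(u,v)-r)$ for some relevant $r$.'' One must check that if $\{u,v\} \in R$, i.e.\ $d(u,v) < \min_{w\in A} d(u,w) + \min_{w\in A} d(v,w) - 1$, then setting $p_u = \min_{w\in A} d(u,w)$ and $p_v = \min_{w\in A} d(v,w)$ we get $d(u,v) \le p_u + p_v - 2$, so writing $d(u,v) = r + (d(u,v) - r)$ with $r = p_u - 1 \le d(u,v)$ (and $d(u,v) - r = d(u,v) - p_u + 1 \le p_v - 1$), no vertex of $A$ lies within distance $r$ of $u$ nor within distance $d(u,v)-r$ of $v$; hence $A \cap (B_u(r) \cup B_v(d(u,v)-r)) = \varnothing$ for this $r$, and $r$ (being integral since all distances are integers) is one of the $O(n)$ values for which we wrote a constraint. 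This is exactly the same reasoning that justifies validity of the LP (Appendix~\ref{app:pr}), so it can be invoked or briefly recapped. Everything else is the routine Chernoff-style computation above.
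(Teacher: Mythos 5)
Your proof is correct and follows essentially the same route as the paper's: the LP constraint gives $\sum_{w\in B_u(r)\cup B_v(d(u,v)-r)}x_w^*\ge\tfrac12$ for every $r$, independent rounding plus $1-t\le e^{-t}$ gives $n^{-2}$ per relevant $r$, a union bound over the $O(n)$ distinct constraints gives $\tfrac1n$, and the event $\{u,v\}\in R$ is reduced to ``$A$ misses $B_u(r)\cup B_v(d(u,v)-r)$'' via the same choice $r=\min_{w\in A}d(u,w)-1$. No gaps beyond those the paper itself elides.
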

\begin{proof}
If $y_{uv}^*\le\frac{1}{2}$, then the first constraint implies that $\sum_{w\in B_u(r)\cup B_v(d(u,v)-r)}x_w^*\ge\frac{1}{2}$ for all $r\in[0,d(u,v)]$.  Therefore, the probability that $A\cap(B_u(r)\cup B_v(d(u,v)-r))=\varnothing$ for a specific $r\in[0,d(u,v)]$ is at most
\[\prod_{w\in B_u(r)\cup B_v(d(u,v)-r)}(1-\min\{4\ln n\cdot x_w^*,1\})\le e^{-\sum_{w\in B_u(r)\cup B_v(d(u,v)-r)}4\ln n\cdot x_w^*}\le\frac{1}{n^2}\]

A union bound over all the different values of $r$ we used in our LP implies that  the probability that there exists an $r\in[0,d(u,v)]$ where $A\cap(B_u(r)\cup B_v(d(u,v)-r))=\varnothing$ is at most $\frac{1}{n^2}\cdot n=\frac{1}{n}$.  We claim that the existence of such an $r$ is implied by $\{u,v\} \in R$, and hence the probability that $\{u,v\} \in R$ is at most $\frac1n$.  To see this, suppose that $\{u,v\} \in R$, i.e.~suppose that  $d(u,v)<\min_{w\in A}d(u,w)+\min_{w\in A}d(v,w)-1$.  Then if we set $r = \min_{w \in A} d(u,w) - 1$, this implies that $\min_{w \in A} d(v,w) > d(u,v) - r$.  But then this would imply that no element of $A$ is in $B_u(r)\cup B_v(d(u,v)-r)$.
%
%
%
\end{proof}

Let $OPT_{LP_{PR}}$ denote the optimal cost of $LP_{PR}$.  Then the above lemma implies that the expected cost of the rounding algorithm is at most
\begin{align*}
\textbf{E}[n|A| + |R|] &\leq \sum_{v\in V}n\cdot x_v^*\cdot4\ln n+2\cdot\sum_{u,v\in V}y_{uv}^*+n^2\cdot\frac{1}{n}\le O(\log n)\cdot OPT_{LP_{PR}}+n\\ 
&\le O(\log n)\cdot OPT
\end{align*}
(where we use the fact that $OPT\ge\Omega(n)$).  This completes the proof of Theorem~\ref{thm:upr}.

\subsection{$\Omega(\log n)$-hardness}

We now show a matching hardness bound for the $PR$-optimization problem by reducing from the Set Cover problem.

Consider a set cover instance $(\mathcal{U},\mathcal{S})$ where $|\mathcal{U}|+|\mathcal{S}|=n$. For each $e\in\mathcal{U}$, we create a group of vertices $G_e$ where $|G_e|=3n$. For each $S\in\mathcal{S}$, we also create a group of vertices $G_S$ where $|G_S|=3n$.

Now we construct the following metric space: $V=(\bigcup_{e\in\mathcal{U}}G_e)\cup(\bigcup_{S\in\mathcal{S}}G_S)$ and
\[d(u,v)=
\begin{cases} 
1,&\mbox{if }u\in G_e,v\in G_e\\
1,&\mbox{if }u\in G_S,v\in G_S\\
1,&\mbox{if }u\in G_e,v\in G_S,e\in S\\
2,&\mbox{otherwise.}
\end{cases}\]

In Appendix~\ref{app:pr} we show that if there is a solution $\mathcal{S}^*$ to the set cover instance $(\mathcal{U},\mathcal{S})$ where $|\mathcal{S}^*|=t$, then there is a set $A$ where $cost(A,V,d)\le t|V|$.  We also show that if there is a set $A \subseteq V$ where $cost(A,V,d)\le t|V|$, then there exists a solution $\mathcal{S^*}$ to the set cover instance $(\mathcal{U},\mathcal{S})$ where $|\mathcal{S}^*|=t$.  These two claims, together with an appropriate hardness theorem for Set Cover~\cite{DS14}, imply Theorem~\ref{thm:lpr}.


\section{Distance Oracles With Outliers}\label{sec:doo}

We now move to the more difficult outliers setting, where we can also optimize over a set of vertices to ignore.  Recall that for an approximate distance oracle $\mathcal A$, our goal is now to find a set of vertices $F$ (the outliers) where $|F| \leq f$ as well as a string $r$ in order to minimize $|preprocess(V \setminus F,d,m,a,r)|$.  In other words, we are going to try to solve the same problems as before, but where we can choose a set $F$ to remove.  We begin with $TZ_2$, and then move to $PR$.

\subsection{$TZ_2$-Optimization Problem With Outliers}

For this problem, it is easy to see that the cost function becomes:
\[cost(A,F,V,d)=(n-f)|A|+\sum_{u\in V\backslash F}|R_{1u}|=(n-f)|A|+\sum_{u\in V\backslash F}\left|\{v\in V\backslash F\mid d(u,v)<\min_{w\in A}d(u,w)\}\right|\]

%
%


A natural approach is to use an LP which is similar to $LP_{TZ_k}$ to solve this problem (but for $TZ_2$), suitably adapted to handle outliers. Let $x_v$ be a variable which is supposed to be an indicator for whether $v\in A$, let $y_{uv}$ be a variable which is supposed to be an indicator for whether $v\in R_{1u}$, and let $z_v$ be a variable which is supposed to be an indicator for whether $v\in F$. Then we can write the following natural LP relaxation: 
\[\begin{array}{rll}(LP_{TZ_2O}):
\min&\sum_{v\in V}(n-f)\cdot x_v+\sum_{u,v\in V}y_{uv}\\
s.t.&y_{uv}\ge1-z_u-z_v-\sum_{w\in B_u(v)}x_w&\forall u,v\in V\\
&\sum_{v\in V}z_v\le f\\
&x_v\in[0,1]&\forall v\in V\\
&y_{uv}\ge0&\forall u,v\in V\\
&z_v\in[0,1]&\forall v\in V\\
\end{array}\]


Unfortunately, this LP can not give an $(\alpha,\beta)$-approximation with $\beta=2-\epsilon$. To see this, consider the case that $f=\frac{n}{2}$.  Then the optimal solution to $LP_{TZ_2O}$ is $0$, by setting all $z_v$ to $\frac{1}{2}$, all $x_v$ to $0$, and all $y_{uv}=0$.  Thus any integral solution, to be competitive with this fractional solution, must treat \emph{all} nodes as outliers, requiring $\beta$ to be at least $2$.  


Fortunately we can give a stronger semidefinite programming relaxation, allowing for a better approximation. As in $LP_{TZ_2O}$, let $\vec{x}_v$ be a variable which is supposed to be an indicator for whether $v\in A$, let $\vec{y}_{uv}$ be a variable which is supposed to be an indicator for whether $v\in R_{1u}$, and let $\vec{z}_v$ be a variable which is supposed to be an indicator for whether $v\in F$. We can then write this SDP:
\[\begin{array}{rll}(SDP_{TZ_2O}):
\min&\sum_{v\in V}(n-f)\cdot\|\vec{x}_v\|^2+\sum_{u,v\in V}\|\vec{y}_{uv}\|^2\\
s.t.&\|\vec{y}_{uv}\|^2\ge1-\vec{z}_u\cdot\vec{z}_v-\sum_{w\in B_u(v)}\|\vec{x}_w\|^2&\forall u,v\in V\\
&\sum_{v\in V}\|\vec{z_v}\|^2\le f\\
&\|\vec{x}_v\|^2\le1&\forall v\in V\\
&\|\vec{y}_{uv}\|^2\le1&\forall u,v\in V\\
&\|\vec{z}_v\|^2\le1&\forall v\in V\\
\end{array}\]

Our approximation algorithm first solves $SDP_{TZ_2O}$ to get an optimal solution $(\vec{x}_v^*,\vec{y}_{uv}^*,\vec{z}_v^*)$.  We then use independent randomized rounding to construct $A$, adding each $v \in V$ to $A$ independently with probability $\min\{\frac{3\ln n}{\varepsilon}\cdot\|\vec{x}_v^*\|^2,1\}$ where $\varepsilon$ is a small constant.  Finally, we use threshold rounding to construct $F$ by adding each $v\in V$ to $F$ if $\|\vec{z}_v^*\|^2\ge\frac{1}{1+\varepsilon}$.

We want to show that this is an ($O(\log n),1+\varepsilon$)-approximation. It is easy to see that $|F|\le(1+\varepsilon)f$ because $\sum_{v\in V}\|\vec{z}_v^*\|^2\le f$. In order to prove Theorem~\ref{thm:tz2o} it only remains to prove that the expected cost is at most $O(\log n)\cdot OPT$.  This proof can be found in Appendix~\ref{app:doo}.

When $f \leq \sqrt{n}$ we can actually give a true $O(\log n)$-approximation (Theorem~\ref{thm:tz2ot}).  The algorithm is almost the same; we just need to change the threshold rounding for outliers to instead pick the $f$ vertices with largest $\|\vec{z}_v\|^2$ value. Details appear in Appendix~\ref{app:doo}.

\subsection{$PR$-Optimization Problem With Outliers}

%

For this problem, the cost function becomes:
\begin{align*}
cost(A,F,V,d)=&(n-f)\cdot|A|+|R|\\
=&(n-f)\cdot|A|+\left|\{\{u,v\}\subseteq V\backslash F\mid d(u,v)<\min_{w\in A}d(u,w)+\min_{w\in A}d(v,w)-1\}\right|.
\end{align*}


We will again use an SDP relaxation.  Let $\vec{x}_v$ be a variable which is supposed to be an indicator for whether $v\in A$, let $\vec{y}_{uv}$ be a variable which is supposed to be an indicator for whether $\{u,v\}\in R$, and let $\vec{z}_v$ be a variable which is supposed to be an indicator for whether $v\in F$. We have the following relaxation which is similar to both $LP_{PR}$ and $SDP_{TZ_2O}$:
\[\begin{array}{rll}(SDP_{PR}):
\min&\sum_{v\in V}(n-f)\cdot\|\vec{x}_v\|^2+\sum_{\{u,v\}\subseteq V}\|\vec{y}_{uv}\|^2\\
s.t.&\|\vec{y}_{uv}\|^2\ge1-\vec{z}_u\cdot\vec{z}_v-\sum_{w\in B_u(r)\cup B_v(d(u,v)-r)}\|\vec{x}_w\|^2&\forall u,v\in V,r\in[0,d(u,v)]\\
&\sum_{v\in V}\|\vec{z_v}\|^2\le f\\
&\|\vec{x}_v\|^2\le1&\forall v\in V\\
&\|\vec{y}_{uv}\|^2\le1&\forall u,v\in V\\
&\|\vec{z}_v\|^2\le1&\forall v\in V\\
\end{array}\]

Note that this $SDP_{PR}$ is solvable in polynimial time for the same reason that $LP_{PR}$ is solvable: for each pair of $(u,v)$, we can find $n$ different values of $r$ that give all of the distinct constraints.

The rounding algorithm is basically the same as the $TZ_2$-optimization problem with outliers. We first solve the $SDP_{PR}$ and get an optimal solution $(\vec{x}_v^*,\vec{y}_{uv}^*,\vec{z}_v^*)$. We then use independent randomized rounding to get $A$, adding each $v\in V$ to $A$ independently with probability $\min\{\frac{6\ln n}{\varepsilon}\cdot\|\vec{x}_v^*\|^2,1\}$ where $\varepsilon$ is a small constant.  Then we use threshold rounding to get $F$, adding each $v\in V$ to $F$ if $\|\vec{z}_v^*\|^2\ge\frac{1}{1+\varepsilon}$.

This is an ($O(\log n),1+\varepsilon$)-approximation. It is easy to see that $|F|\le(1+\varepsilon)f$ because $\sum_{v\in V}\|\vec{z}_v^*\|^2\le f$. The proof that the expected cost is at most $O(\log n)\cdot OPT$ is in Appendix~\ref{app:doo}, which completes the proof of Theorem~\ref{thm:pro}.

\section{Conclusion and Future Work}

In this paper we initiate the study of \emph{approximating} approximate distance oracles.  This is a different take on the question of optimizing data structures, where we attempt to find the best data structure for a particular input, rather than for a class of inputs.  In order to make this tractable (or even well-defined), we restrict our attention to known classes of distance oracles, and show that it is sometimes possible to find the best of these restricted oracles.  We also extended our approaches to optimize in the presence of outliers.

For future work, the major question is clearly whether we can approximately optimize higher level (i.e., higher stretch) Thorup-Zwick distance oracles.  Although we show an integrality gap for the basic LP, it is quite conceivable that a stronger LP or SDP could be used to give a logarithmic approximation ratio. Beyond this, there are other distance oracles which could be optimized -- we chose Thorup-Zwick and P\v{a}tra\c{s}cu-Roditty since they are well-known and in some ways canonical, but it would be interesting to extend these ideas to other oracles.  At a higher level, we believe that the definitions and ideas we have introduced here could lead to many interesting questions about optimizing data structures for given inputs: can we find near-optimal distance labels?  Or compact routing schemes?  Or connectivity oracles? Or fault-tolerant oracles?  Essentially any data structure question in which there is a choice of \emph{which} data to store, rather than how to store it, can be put into our optimization framework.  Exploring this space is an exciting future direction.  

\clearpage

\bibliographystyle{plain}
\bibliography{TZ}

\clearpage

\appendix

\section{Proofs in section~\ref{sec:tz2} (lower bound of $TZ_2$-optimization problem)}\label{app:tz2}

\subsection{Label Cover Problem}

For the lower bound, we start from a hard Label Cover instance, and following the steps in proving the hardness of approximating Set Cover problem. Since the definition of the Label Cover problem is somewhat complex, we break it into parts: first defining an instance, a labelling, and then defining the problem. Note that we are using a specific setting where the parameters in the graph are strongly related, so it is slightly different from the definition of classic/general Label Cover problem.

\begin{definition}
A label cover instance consists of ($G=(V_1,V_2,E),\Sigma,\Pi$) where
\begin{itemize}
\item $G$ is a bipartite graph between vertex sets $V_1$ and $V_2$ and an edge set $E$. Let $V'=V_1\cup V_2$
\item $G$ is left and right regular. Denote by $\Delta_1$ and $\Delta_2$ the degrees of vertices in $V_1$ and $V_2$ respectively.
\item For each edge $e$, there is a constraint $\Pi_e$ which is a bijection function from $\Sigma$ to itself. The set of all constraints in $G$ are $\Pi= \{\Pi_e:\Sigma\rightarrow\Sigma\mid e\in E\}$
\end{itemize}
\end{definition}

\begin{definition}
A \emph{labelling} of the graph, is a mapping $\sigma:V'\rightarrow\Sigma$ which assigns a label for each vertex of $G$. A labelling $\sigma$ is said to satisfy an edge $e=(v_1,v_2)$ if and only if $\Pi_e(\sigma(v_1))=\sigma(v_2)$.
\end{definition}

The following definition is the problem which will be the starting point of our reduction.

\begin{definition}
In the $\text{LabelCover}_{n,r,\varepsilon}$ problem, we are given an instance $(G, \Sigma, \Pi)$ of Label Cover where $|V_1|=(5n)^r,|V_2|=(5n)^r,|\Sigma|=7^r,\Delta_1=15^r,\Delta_2=15^r$, and one of the following is true:
\begin{itemize}
\item There exists a labelling $\sigma$ such that it satisfies all the edges $e$ in $G$ (in which case we say that the input is a YES instance), or
\item For any labelling $\sigma$ of the vertices, no more than $\varepsilon^r|E|$ edges are satisfied by $\sigma$ (in which case we say that the input is a NO instance).
\end{itemize}
The goal is to determine whether the input is a YES or a NO instance.  
\end{definition}

Label Cover forms the starting point of many hardness of approximation results.  Its hardness is a now-classical application of the PCP theorem~\cite{BGKW88} and Raz's parallel repetition lemma~\cite{RR98}, which give the following theorem.

\begin{theorem}[\cite{RR98}]\label{thm:label}
There exists a constant $\varepsilon\ge0$ such that $\text{LabelCover}_{n,r,\varepsilon}$ is not in $\mathbf{P}$ unless $\mathbf{NP}\subseteq\mathbf{DTIME}(n^{O(r)})$.
\end{theorem}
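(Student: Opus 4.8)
The plan is to follow the classical two-stage route that underlies essentially every Label Cover–based hardness result: first produce a Label Cover instance with only a \emph{constant} gap between the YES and NO cases that is NP-hard, and then amplify that gap to $\varepsilon^r$ by an $r$-fold parallel repetition, cited from~\cite{RR98}. The only genuinely non-elementary ingredient, Raz's parallel repetition theorem, will be invoked as a black box; everything else is bookkeeping arranged so that the size parameters land exactly on $|V_1|=|V_2|=(5n)^r$, $|\Sigma|=7^r$, and $\Delta_1=\Delta_2=15^r$, and the soundness on $\varepsilon^r$.

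Concretely, I would start from the PCP theorem~\cite{BGKW88}: there is a constant $\delta>0$ for which it is NP-hard to distinguish satisfiable 3CNF formulas from those in which every assignment leaves at least a $\delta$-fraction of clauses unsatisfied. A standard expander-replacement argument converts this, with only a linear size blowup, into the same constant gap for 3SAT-5 — formulas with exactly three literals per clause and exactly five occurrences per variable — so that from a 3SAT instance on $\Theta(n)$ variables we obtain a gap-3SAT-5 instance whose variable and clause counts are $\Theta(n)$; the leftover constant factors will be absorbed into the parameter $n$. From such a formula I would build the usual clause--variable two-prover game as a bipartite Label Cover instance: one side is indexed by (clause, position-in-clause) pairs and the other by (variable, occurrence) pairs, both of which, after the natural padding, have size $5n$; an edge joins a clause-side vertex to each variable-side vertex coming from a variable of that clause, giving degree $3\cdot 5=15$ on both sides; the alphabet has size $7$, the number of assignments to a clause's three variables that satisfy the clause; and $\Pi_e$ is the natural constraint forcing the clause-side label to be consistent with the variable-side label. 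A satisfying assignment satisfies every edge (YES case), while the 3SAT-5 gap guarantees that in the NO case no labelling satisfies more than a $(1-\delta')$-fraction of edges for some constant $\delta'>0$ depending only on $\delta$. Applying the $r$-fold parallel repetition of~\cite{RR98} to this base instance then raises every parameter to the $r$-th power and drives the soundness down to $\varepsilon^r$ for a constant $\varepsilon=\varepsilon(\delta')<1$, while trivially preserving the YES case (a coordinatewise satisfying labelling satisfies all repeated edges), yielding exactly a $\text{LabelCover}_{n,r,\varepsilon}$ instance.

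The whole construction runs in time polynomial in the size of the repeated instance, which is $n^{O(r)}$; hence a polynomial-time algorithm for $\text{LabelCover}_{n,r,\varepsilon}$ would decide 3SAT in time $n^{O(r)}$, i.e.\ $\mathbf{NP}\subseteq\mathbf{DTIME}(n^{O(r)})$. I expect the main obstacle to be not any single deep step but the parameter bookkeeping in the first stage: obtaining the bounded-occurrence gap version of 3SAT, and then choosing the padding and the bipartite encoding so that the base Label Cover instance has \emph{exactly} $|V_1|=|V_2|=5n$, $|\Sigma|=7$, $\Delta_1=\Delta_2=15$, with constraints $\Pi_e$ of the prescribed form — routine in principle, but the part that must be carried out carefully so that the hypotheses exactly match the statement of $\text{LabelCover}_{n,r,\varepsilon}$ used in the reduction to $TZ_2$-optimization.
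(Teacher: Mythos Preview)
The paper does not give its own proof of this theorem: it is stated as a black-box consequence of the PCP theorem~\cite{BGKW88} and Raz's parallel repetition~\cite{RR98}, and then used as the starting point for the reduction to $TZ_2$-optimization. Your outline is precisely the standard derivation the paper is pointing to, so there is essentially nothing to compare against.

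One caveat on the bookkeeping you already flag as the delicate part: the paper's definition of $\text{LabelCover}_{n,r,\varepsilon}$ stipulates that each constraint $\Pi_e$ is a \emph{bijection} $\Sigma\to\Sigma$ on a single common alphabet, and the downstream reduction does use this (for instance, in the count that each element of $\mathcal{U}$ lies in exactly $m$ sets of $\mathcal{S}$). The natural clause--variable game you sketch produces \emph{projection} constraints --- seven satisfying clause-assignments mapped to two variable values --- not bijections on a common alphabet of size~$7$. So while your high-level route is correct, the specific base instance you describe does not literally meet the hypotheses the paper writes down; matching them exactly would require a different encoding of the base game (or, more honestly, relaxing the ``bijection'' requirement to a projection, which is what both the parallel repetition theorem and the paper's Set Cover--style reduction actually need).
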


For example, there exists a constant $\varepsilon\ge0$ such that $\text{LabelCover}_{n,3\log\log n,\varepsilon}$ is not in $\mathbf{P}$ unless $\mathbf{NP}\subseteq\mathbf{DTIME}(n^{O(\log\log n)})$. Starting from here, we will fix $r=3\log\log n$, and $\varepsilon$ be the constant which makes $\text{LabelCover}_{n,3\log\log n,\varepsilon}$ hard.

\subsection{$(m,l,\delta)$-Set System}

We also need a $(m,l,\delta)$-set system (see Definition~\ref{def:setsystem}) to do the reduction. We can construct a ($m,l,\delta$)-set system by using a ($l,\gamma$)-independent collection of length $m$ strings.

\begin{definition}
Let $B$ be a collection (may contains repetitions) of binary strings of length $m$. $B$ is \emph{$(l,\gamma)$-independent} if the following inequality holds for every $i_1,i_2,\mathellipsis,i_l$ and $a\in\{0,1\}^l$:
\[\left|\Pr_{x\in B}[x_{i_1}=a_1\land\mathellipsis\land x_{i_l}=a_l]-2^{-l}\right|\le\gamma.\]
\end{definition}

A corollary of lemma 1 and construction 3 in~\cite{AOJR92} provides a explict construction of ($l,\gamma$)-independent collection.

\begin{corollary}\label{cor:independent}
For any $l\le m$, there is an explict construction of a ($l,(1-2^{-l})\cdot2^{-l-1}$)-independent collection of length $m$ strings with $|B|=4^{l+1}m^2$ in $|B|^{O(1)}$ time.
\end{corollary}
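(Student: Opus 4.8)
The plan is to obtain Corollary~\ref{cor:independent} as a direct instantiation of the two general tools from~\cite{AOJR92}. Construction~3 of~\cite{AOJR92} is an explicit construction of a small-bias sample space: given a length $m$ and a bias parameter $\epsilon>0$, it outputs, in time polynomial in the output size, a multiset $B$ of binary strings of length $m$ which is $\epsilon$-biased, i.e.\ $\left|\mathbb{E}_{x\in B}\left[(-1)^{\sum_{i\in T}x_i}\right]\right|\le\epsilon$ for every nonempty $T\subseteq[m]$, and whose size is $(m/\epsilon)^2$. Lemma~1 of~\cite{AOJR92} is the standard Fourier (``XOR'') bound translating small bias into the almost-independence used in the definition above: for any $l$ coordinates $i_1,\dots,i_l$ and any pattern $a\in\{0,1\}^l$, writing $S=\{i_1,\dots,i_l\}$, one has the exact expansion $\Pr_{x\in B}[x_S=a]=2^{-l}\sum_{T\subseteq S}(-1)^{\sum_{i\in T}a_i}\,\mathbb{E}_{x\in B}\left[(-1)^{\sum_{i\in T}x_i}\right]$, so the $T=\varnothing$ term contributes exactly $2^{-l}$ and the other $2^l-1$ terms are each bounded in absolute value by $\epsilon$; hence an $\epsilon$-biased space is $\left(l,(1-2^{-l})\epsilon\right)$-independent.

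Given these two facts the argument is immediate: apply Construction~3 with the choice $\epsilon=2^{-l-1}$, which is legal for every $l\le m$. By Lemma~1 the resulting collection $B$ is $\left(l,(1-2^{-l})\cdot2^{-l-1}\right)$-independent, matching the claimed parameter exactly; its size is $\left(m/2^{-l-1}\right)^2=(2^{l+1}m)^2=4^{l+1}m^2$; and it is produced in time $|B|^{O(1)}$ since Construction~3 is explicit. (If one prefers to quote~\cite{AOJR92} at the level of almost-$l$-wise independent spaces directly rather than through bias, the same short argument goes through verbatim, simply dropping the bias-to-independence step.)

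The content here is entirely bookkeeping, and that is where the only care is needed. First, one must check that the almost-independence statement in Lemma~1 is exactly the $\ell_\infty$ notion of $(l,\gamma)$-independence defined above — a pointwise bound $\left|\Pr[x_S=a]-2^{-l}\right|\le\gamma$ for every single pattern $a$ — and that the constant coming out is precisely $(1-2^{-l})\epsilon$ rather than a weaker $2^{l/2}\epsilon$-type (statistical-distance) bound; this is exactly what lets $\gamma$ come out as the stated expression $(1-2^{-l})\cdot2^{-l-1}$ and not merely up to constants. Second, one must check that Construction~3's size, after any rounding inherent in it (e.g.\ to a prime power or a power of two), equals $4^{l+1}m^2$ for the specific parameter $\epsilon=2^{-l-1}$ rather than merely $O(4^{l}m^2)$; since $1/\epsilon=2^{l+1}$ is itself a power of two this is typically immediate, but it is the one place constants must be tracked. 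Neither step is deep; the only real idea is the parameter choice $\epsilon=2^{-l-1}$, engineered so that $(1-2^{-l})\epsilon$ hits the target $\gamma$ and $(m/\epsilon)^2$ hits the target size simultaneously.
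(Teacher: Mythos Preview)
Your proposal is correct and matches the paper's approach exactly: the paper gives no proof beyond attributing the corollary to Lemma~1 and Construction~3 of~\cite{AOJR92}, and you have simply (and correctly) unpacked that attribution by choosing $\epsilon=2^{-l-1}$ so that the bias-to-independence conversion $(1-2^{-l})\epsilon$ and the size $(m/\epsilon)^2$ simultaneously hit the stated targets.
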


With the corollary in hand, we can construct a ($m,l,\delta$)-set system with the parameters we want.

\begin{lemma}\label{lem:setsystem}
For any $l\le m$, there is an explicit construction of a ($m,l,2^{-l-1}$)-set system with $|B|=4^{l+1}m^2$ in $|B|^{O(1)}$ time.
\end{lemma}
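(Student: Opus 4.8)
The plan is to take a $(l,\gamma)$-independent collection $B$ of length-$m$ strings with the parameters guaranteed by Corollary~\ref{cor:independent} (so $\gamma = (1-2^{-l})2^{-l-1}$ and $|B| = 4^{l+1}m^2$), and define the set system on universe $B$ by letting $C_i = \{x \in B \mid x_i = 1\}$ for each $i \in [m]$; then $\overline{C_i} = \{x \in B \mid x_i = 0\}$. The claim to prove is that this is an $(m, l, 2^{-l-1})$-set system, i.e.\ that any collection of $l$ sets drawn from $\{C_1,\dots,C_m,\overline{C_1},\dots,\overline{C_m}\}$ whose union covers at least $(1-2^{-l-1})|B|$ elements must contain a complementary pair $C_i, \overline{C_i}$. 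Equivalently (contrapositive), if we pick $l$ sets with \emph{no} complementary pair, their union misses more than $2^{-l-1}|B|$ elements of $B$.

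First I would set up the contrapositive cleanly: suppose we choose $l$ sets with no complementary pair. Then each chosen set corresponds to a distinct coordinate $i_1,\dots,i_l \in [m]$ together with a target bit $a_j \in \{0,1\}$, where the chosen set is $C_{i_j}$ if $a_j = 1$ and $\overline{C_{i_j}}$ if $a_j = 0$. A string $x \in B$ is \emph{not} covered by the union exactly when $x_{i_j} \neq a_j$ for all $j$, i.e.\ when $x_{i_j} = \bar a_j$ for all $j \in [l]$. So the number of uncovered elements is $|B| \cdot \Pr_{x \in B}[x_{i_1} = \bar a_1 \wedge \dots \wedge x_{i_l} = \bar a_l]$. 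By the $(l,\gamma)$-independence property applied to the bit pattern $(\bar a_1,\dots,\bar a_l)$, this probability is at least $2^{-l} - \gamma$. Substituting $\gamma = (1-2^{-l})2^{-l-1}$ gives $2^{-l} - \gamma = 2^{-l} - 2^{-l-1} + 2^{-2l-1} = 2^{-l-1} + 2^{-2l-1} > 2^{-l-1}$. Hence the union misses strictly more than $2^{-l-1}|B|$ elements, so it cannot cover at least $(1-2^{-l-1})|B|$ elements, which is exactly what we needed. The universe size $|B| = 4^{l+1}m^2$ and the $|B|^{O(1)}$ construction time are inherited directly from Corollary~\ref{cor:independent}.

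I do not expect a serious obstacle here — the whole argument is essentially bookkeeping that translates the definition of $(l,\gamma)$-independence into the covering language of the set system. The one point that requires a little care is making sure the arithmetic works out so that the independence slack $\gamma$ is small enough to leave the probability strictly above the threshold $2^{-l-1}$; this is precisely why Corollary~\ref{cor:independent} is stated with the particular value $\gamma = (1-2^{-l})2^{-l-1}$ rather than something cruder, and the computation above shows it gives exactly the needed margin (in fact with room to spare, $2^{-2l-1}$). A secondary sanity check worth noting is that $l \le m$ is needed for the coordinates $i_1,\dots,i_l$ to be distinct and for Corollary~\ref{cor:independent} to apply, which matches the hypothesis of the lemma.
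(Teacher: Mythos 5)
Your proof is correct and is essentially the same as the paper's: both take the $(l,\gamma)$-independent collection from Corollary~\ref{cor:independent}, set $C_i=\{x\in B\mid x_i=1\}$, identify the uncovered strings with a fixed bit pattern on the $l$ chosen coordinates, and use independence to lower-bound their number by $\left(2^{-l}-(1-2^{-l})2^{-l-1}\right)|B|>2^{-l-1}|B|$. The only cosmetic difference is that you argue by contraposition while the paper argues by contradiction (and your convention for the bits $a_j$ is flipped relative to the paper's).
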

\begin{proof}
Let $B$ be the collection of length $m$ strings in Corollary~\ref{cor:independent}. Define $C_i=\{x\in B\mid x_i=1\}$ for all $i\in[m]$, we will show that ($B;C_1,\mathellipsis,C_m$) is a ($m,l,2^{-l-1}$)-set system.

Assume that there exist $D_{i_1},D_{i_2},\mathellipsis,D_{i_l}$ such that $\left|\bigcup_{j=1}^lD_{i_j}\right|\ge(1-2^{-l-1})|B|$, where each $D_{i_j}$ is either $C_{i_j}$ or $\overline{C_{i_j}}$ (note that this implies that there are no $j$ and $k$ such that $D_{i_j}=\overline{D_{i_k}}$). Define
\[a_j=
\begin{cases} 
0,&\mbox{if }D_{i_j}=C_{i_j}\\
1,&\mbox{if }D_{i_j}=\overline{C_{i_j}}
\end{cases}.\]
Let $S=\{x\mid x\in B,x_{i_1}=a_1,x_{i_2}=a_2,\mathellipsis,x_{i_l}=a_l\}$. Because $B$ is a ($l,(1-2^{-l})\cdot2^{-l-1}$)-independent collection, we have
\[|S|=|\{x\mid x\in B,x_{i_1}=a_1,x_{i_2}=a_2,\mathellipsis,x_{i_l}=a_l\}|>(2^{-l}-(1-2^{-l})\cdot2^{-l-1})|B|>2^{-l-1}|B|.\]

On the other hand, note by construction, for all $x\in S$ and $j\in[l]$, we have $x\notin D_{i_j}$, which implies that $\left|\bigcup_{j=1}^lD_{i_j}\right|\le|B|-|S|<(1-2^{-l-1})|B|$: a contradiction.
\end{proof}

\subsubsection{Reduction}

We now show how to use the set systems from the previous section to give a reduction from Label Cover to $TZ_2$-optimization problem.

Let $(G=(V_1,V_2,E),\Sigma,\Pi)$ be a $\text{LabelCover}_{n,r,\varepsilon}$ instance with $r=3\log\log n$, and let $(B;C_1,\mathellipsis,C_m)$ be a ($m,l,2^{-l-1}$)-set system with $m=|\Sigma|=7^r,l=r\log n$.

We first create a universe $\mathcal{U}=E\times B$, and a set of sets $\mathcal{S}=\{S_{v,x}\mid v\in V',x\in[m]\}$ (recall that $V' = V_1 \cup V_2$).  Here
\[S_{v,x}=\bigcup_{e:v\in e,e\in E}\{e\}\times C_{\Pi_e(x)},\mbox{ if }v\in V_1,\]
\[S_{v,x}=\bigcup_{e:v\in e,e\in E}\{e\}\times\overline{C_x},\mbox{ if }v\in V_2.\]
We know that $|E|=(15n)^r$, $|B|=4^{r\log n+1}\cdot7^{2r}=n^{\Theta(1)\cdot r}$ and $|\mathcal{S}|=m\cdot|V'|=7^r\cdot2\cdot(5n)^r$, so $|\mathcal{U}|\gg|\mathcal{S}|$. Without lose of generality and for simplicity of our proof, we can assume $|\mathcal{U}|$ is dividable by $|\mathcal{S}|$, so that we can replicate $\mathcal{S}$ for $\frac{|\mathcal{U}|}{|\mathcal{S}|}$ times, and get a set of sets $\mathcal{S}'=\mathcal{S}^{(1)} \cup\mathellipsis\cup\mathcal{S}^{\left(\frac{|\mathcal{U}|}{|\mathcal{S}|}\right)}$ which has the same size as $\mathcal{U}$.

It is also easy to see that each $u=((v_1,v_2),b)\in\mathcal{U}$ appears in exactly $m$ sets in $\mathcal{S}$ because for each $x\in[m]$, either $u\in S_{v_1,x}$ or $u\in S_{v_2,\Pi_{(v_1,v_2)}(x)}$. Therefore each $u\in\mathcal{U}$ appears in $\frac{m|\mathcal{U}|}{|\mathcal{S}|}=\frac{|\mathcal{U}|}{|V'|}$ sets in $\mathcal{S}'$.

The metric space is defined as $V=\mathcal{U}\cup\mathcal{S}'$ and the distance is defined as following:
\[d(u,v)=
\begin{cases} 
1.2,&\mbox{if }u\in\mathcal{S}',v\in\mathcal{S}'\\
1.4,&\mbox{if }u\in v\mbox{ or }v\in u\\
1.6,&\mbox{if }u\in\mathcal{U},v\in\mathcal{U}\\
1.8,&\mbox{otherwise}
\end{cases}\]

This metric space $(V, d)$ will form the instance of $TZ_2$-optimization which we analyze. It is easy to see that the reduction is polynomial because $|V|$ is polynomial of $|E|$.

\subsubsection{Analysis}

\begin{lemma}\label{lem:tz2yes}
If ($G,\Sigma,\Pi$) is a YES instance in the $\text{LabelCover}_{n,r,\varepsilon}$ problem. Then the reduction $(V,d)$ to the $TZ_2$-optimization problem has a solution with cost $\le(|V'|+1)\cdot|V|$.
\end{lemma}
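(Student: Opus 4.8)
The plan is to exhibit an explicit choice of $A_1$ built from a satisfying labelling and bound its cost directly against the cost formula for $TZ_2$. Since we are in a YES instance, fix a labelling $\sigma\colon V'\to\Sigma$ satisfying every edge of $G$, and identify $\Sigma$ with $[m]$. Take one set vertex per vertex of $V'$ from the first copy $\mathcal S^{(1)}$ of $\mathcal S$ inside $\mathcal S'$, namely $A_1=\{S^{(1)}_{v,\sigma(v)}\mid v\in V'\}$; these are indexed by distinct pairs $(v,\sigma(v))$, so $|A_1|=|V'|$. I will then plug this into
\[ cost(A_1,V,d)=|A_1|\cdot|V|+\sum_{u\in V}\min_{w\in A_1}\bigl|\{v\in V\mid d(u,v)<d(u,w)\}\bigr|, \]
so that the opening part equals exactly $|V'|\cdot|V|$, and it remains to show the total connection cost is at most $|V|$.

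The connection costs are bounded case by case using the four distance values. If $u\in A_1$, take $w=u$ and the term is $0$. If $u\in\mathcal S'\setminus A_1$, any $w\in A_1$ is a different set vertex, so $d(u,w)=1.2$, and the only vertex within distance $1.2$ of $u$ is $u$ itself, giving a term of $1$; there are $|\mathcal S'|-|V'|$ such $u$. The crucial case is $u\in\mathcal U$: I claim some $w\in A_1$ contains $u$, so $d(u,w)=1.4$; since from a vertex of $\mathcal U$ every other vertex lies at distance $\ge 1.4$, the only vertex strictly closer than $w$ is $u$ itself and the term is $1$; there are $|\mathcal U|$ such $u$. Summing, the connection cost is $(|\mathcal S'|-|V'|)+|\mathcal U|=2|\mathcal U|-|V'|=|V|-|V'|\le|V|$ (using $|\mathcal S'|=|\mathcal U|$ and $|V|=2|\mathcal U|$), whence $cost(A_1,V,d)\le|V'|\cdot|V|+|V|=(|V'|+1)|V|$.

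The one nonroutine point — and the only place the YES hypothesis is used — is the claim that every $u\in\mathcal U$ lies in some set of $A_1$; I expect this to be the main (if small) obstacle, everything else being bookkeeping. Write $u=((v_1,v_2),b)$ with $e=(v_1,v_2)\in E$. Since $\sigma$ satisfies $e$, $\Pi_e(\sigma(v_1))=\sigma(v_2)$, hence $C_{\Pi_e(\sigma(v_1))}=C_{\sigma(v_2)}$. Exactly one of $b\in C_{\sigma(v_2)}$ or $b\in\overline{C_{\sigma(v_2)}}$ holds: in the first case $u\in S_{v_1,\sigma(v_1)}$ by the definition of $S_{v_1,\cdot}$, and in the second $u\in S_{v_2,\sigma(v_2)}$ by the definition of $S_{v_2,\cdot}$; either way $u$ is contained in one of the two corresponding members of $A_1$ (their copies in $\mathcal S^{(1)}$ contain $u$ exactly when the originals do, as used in the construction). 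This is just the already-noted fact that each $u$ sits in exactly one of $S_{v_1,x}$, $S_{v_2,\Pi_e(x)}$, specialized to $x=\sigma(v_1)$.
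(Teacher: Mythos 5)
Your proof is correct and follows essentially the same route as the paper's: choose $A_1=\{S_{v,\sigma(v)}\mid v\in V'\}$, charge $|V'|\cdot|V|$ for the level-2 (opening) cost, and show each vertex contributes at most $1$ to the level-1 (connection) cost, with the YES hypothesis used exactly where you use it — to show every $u\in\mathcal U$ lies in $S_{v_1,\sigma(v_1)}\cup S_{v_2,\sigma(v_2)}$ via $\Pi_e(\sigma(v_1))=\sigma(v_2)$. The only cosmetic difference is that you phrase the level-1 bound through the facility-location connection costs while the paper phrases it as $R_{1u}\subseteq\{u\}$; these are the same computation.
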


\begin{proof}
Let $\sigma:V'\rightarrow[m]$ denote a labelling of G which satisfies all the edges in $E$. Let $A_1=\{S_{v,\sigma(v)}\mid v\in V'\}$. We claim that $A_1$ is a solution with cost at most $(|V'|+1)\cdot|V|$.

Note that in Section~\ref{sec:utz2} we showed that the level $2$ cost $\sum_{u\in V}|R_{2u}|=|V|\cdot|A_1|=|V'|\cdot|V|$, the only thing left is to show that the level $1$ cost is $\sum_{u\in V}|R_{1u}|\le|V|$. We will prove this by showing $R_{1u}\subseteq\{u\}$ for all $u\in V$.

For any $u\in\mathcal{S}'$, we have that $d(u,v)\ge1.2$ for all $v\in V$ because the definition of $d$, and $\min_{w\in A_1}d(u,w)=1.2$ because $A_1\cap\mathcal{S}'\ne\varnothing$. Thus $R_{1u}=\{v\in V\mid d(u,v)<\min_{w\in A_1}d(u,w)\}\subseteq\{u\}$.

For any $u=((v_1,v_2),b)\in\mathcal{U}$, we have that $d(u,v)\ge1.4$ for all $v\in V$ because the definition of $d$. We also know that either $u\in S_{v_1,\sigma(v_1)}$ or $u\in S_{v_1,\Pi_{(v_1,v_2)}(\sigma(v_1))}$ by the definition of $\mathcal{S}$, and $\Pi_{(v_1,v_2)}(\sigma(v_1))=\sigma(v_2)$ because edge $(v_1,v_2)$ is satisfied by labelling $\sigma$. Therefore $u\in S_{v_1,\sigma(v_1)}\cup S_{v_2,\sigma(v_2)}$. From the fact that both $S_{v_1,\sigma(v_1)}$ and $S_{v_2,\sigma(v_2)}$ are in $A_1$, we have $\min_{w\in A_1}d(u,w)=1.4$. Thus $R_{1u}=\{v\in V\mid d(u,v)<\min_{w\in A_1}d(u,w)\}\subseteq\{u\}$.

Therefore $R_{1u}\subseteq\{u\}$ for all $u\in V$, so that $A_1$ is a solution with cost at most $\le(|V'|+1)\cdot|V|$.
\end{proof}

\begin{lemma}\label{lem:tz2no}
If ($G,\Sigma,\Pi$) is a No instance in the $\text{LabelCover}_{n,r,\varepsilon}$ problem. Then the reduction $(V,d)$ to the $TZ_2$-optimization problem has no solution with cost $<\frac{l}{8}|V'|\cdot|V|$.
\end{lemma}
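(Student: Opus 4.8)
The plan is to prove the contrapositive by a ``decoding'' argument in the style of Feige's Set Cover hardness: assuming a feasible solution $A_1\subseteq V$ with $cost(A_1,V,d)<\frac{l}{8}|V'|\cdot|V|$, I will extract a labelling of $G$ satisfying more than $\varepsilon^r|E|$ edges, contradicting the NO assumption. The first step is to split the cost. As noted in Section~\ref{sec:utz2}, $\sum_{u\in V}|R_{2u}|=|V|\cdot|A_1|$, so the hypothesis forces both $|A_1|<\frac{l}{8}|V'|$ and $\sum_{u\in V}|R_{1u}|<\frac{l}{8}|V'|\cdot|V|$.

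Next I would translate the level-$1$ cost bound into a covering statement via the metric. Call $u\in\mathcal{U}$ \emph{uncovered} if $u\notin A_1$ and no set in $A_1\cap\mathcal{S}'$ contains $u$. For such $u$ every point of $A_1$ is at distance $\ge1.6$, so $R_{1u}$ contains $u$ together with all $\frac{|\mathcal{U}|}{|V'|}$ sets of $\mathcal{S}'$ that contain $u$, giving $|R_{1u}|\ge\frac{|\mathcal{U}|}{|V'|}$. Combining this with $\sum_u|R_{1u}|<\frac{l}{8}|V'|\cdot|V|$ and $|V|=2|\mathcal{U}|$ bounds the number of uncovered elements by $\frac{l}{4}|V'|^2$; adding $|A_1\cap\mathcal{U}|<\frac{l}{8}|V'|$, the number of elements of $\mathcal{U}$ not covered by $A:=A_1\cap\mathcal{S}'$ is at most some $\delta|\mathcal{U}|$, and substituting $|V'|=2(5n)^r$, $|\mathcal{U}|=|E||B|$, $|B|=4^{l+1}7^{2r}$, $l=r\log n$ shows $\delta$ is superpolynomially small.

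The decoding is then routine. For $v\in V'$ let $t_v$ be the number of distinct $x\in[m]$ with a copy of $S_{v,x}$ in $A$, so $\sum_v t_v\le|A|<\frac{l}{8}|V'|$. Call an edge $e=(v_1,v_2)$ \emph{light} if $t_{v_1}+t_{v_2}\le l$, and \emph{covered} if at least $(1-2^{-l-1})|B|$ of $\{e\}\times B$ lies in $\bigcup A$. Since $\sum_e(t_{v_1}+t_{v_2})=15^r\sum_v t_v$ and $|E|=(5n)^r15^r$, Markov's inequality gives at most $|E|/4$ non-light edges; distributing the $\delta|\mathcal{U}|$ uncovered elements over edges and using the parameter estimates gives at most $|E|/4$ non-covered edges; hence at least $|E|/2$ edges are both light and covered. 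For each such $e$, the sets of $A$ meeting $\{e\}\times B$ restrict to a family of at most $l$ sets from $\{C_1,\dots,C_m,\overline{C_1},\dots,\overline{C_m}\}$ (the $C_{\Pi_e(x)}$ for chosen $S_{v_1,x}$ and the $\overline{C_y}$ for chosen $S_{v_2,y}$) whose union is $\ge(1-2^{-l-1})|B|$; padding to exactly $l$ sets using indices absent from the family (possible since $l\le m$) and invoking the $(m,l,2^{-l-1})$-set-system property (Lemma~\ref{lem:setsystem}) yields a chosen $S_{v_1,x^*}$ and a chosen $S_{v_2,y^*}$ with $\Pi_e(x^*)=y^*$. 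Choosing $\sigma(v)$ uniformly from $\{x:S_{v,x}\in A\}$ (arbitrary if empty), this edge is satisfied with probability $\ge\frac{1}{t_{v_1}t_{v_2}}\ge\frac{4}{(t_{v_1}+t_{v_2})^2}\ge\frac{4}{l^2}$, so in expectation at least $\frac{2|E|}{l^2}$ edges are satisfied; as the parameters give $l^2\varepsilon^r<2$, some labelling beats $\varepsilon^r|E|$, a contradiction.

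The main obstacle is the middle step together with the ensuing parameter bookkeeping: turning ``the oracle cost is small'' into ``$A$ covers a $(1-\delta)$-fraction of $\mathcal{U}$ using few sets per Label-Cover vertex'', and then verifying that the fraction of edges that fail to be ``covered'' (of order $\delta\cdot2^{l+1}/|B|$) is genuinely negligible and that $l^2\varepsilon^r<2$ — all of which depend on the precise choices $l=r\log n$, $m=7^r$, $|B|=4^{l+1}7^{2r}$, $r=3\log\log n$, and on taking the Label Cover soundness constant $\varepsilon$ small enough (which we may arrange, if necessary, by first composing with a constant number of extra parallel repetitions). Once these estimates are in place, the set-system property supplies the satisfying label pair and the decoding goes through with only the harmless $1/l^2$ loss.
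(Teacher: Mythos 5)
Your proposal is correct and follows essentially the same route as the paper's proof: bound $|A_1|$ via the level-$2$ cost, use the level-$1$ cost to show almost all of $\mathcal{U}$ is covered by few sets per Label Cover vertex, discard the few "heavy" and "poorly covered" edges by Markov, and decode via the $(m,l,2^{-l-1})$-set-system property with random label selection. Your added care about $A_1\cap\mathcal{U}$, the padding to exactly $l$ sets, and the requirement that $\varepsilon^r<1/l^2$ are minor refinements of details the paper glosses over.
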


\begin{proof}
We prove the lemma by showing that if the optimal solution of the reduction $(V,d)$ to the $TZ_2$-optimization problem has cost $<\frac{l}{8}|V'|\cdot|V|$, then there exists a labelling $\sigma$ such that it satisfies more than $\varepsilon^r|E|$ edges.

Assume the optimal solution $A_1\subseteq V$ has $cost(A_1,V,d)<\frac{l}{8}|V'|\cdot|V|$, then $|A_1|<\frac{l}{8}|V'|$ because the level $2$ cost is $\sum_{u\in V}|R_{2u}|=|V||A_1|$.

Let $L_v=\{x\in[m]\mid\exists j\in\left[\frac{|\mathcal{U}|}{|\mathcal{S}|}\right]\text{ s.t. }S_{v,x}^{(j)}\in A_1\cap\mathcal{S}'\}$ for all $v\in V$, then $\sum_vL_v\le|A_1\cap\mathcal{S}'|\le|A_1|<\frac{l}{8}|V'|$. Therefore at least $\frac{3}{4}|V'|$ vertices has $|L_v|<\frac{l}{2}$, because otherwise $\sum_vL_v\ge(1-\frac{3}{4})\cdot|V'|\cdot\frac{l}{2}\ge\frac{l}{8}|V'|$.

Let $E_1=\{e=(v_1,v_2)\in E\mid|L_{v_1}|<\frac{l}{2},|L_{v_2}|<\frac{l}{2}\}$. Then $|E_1|\ge\frac{|E|}{2}$ because $|V_1|=|V_2|=\frac{|V'|}{2}$ and $G$ is regular.

On the other hand, we define a $u\in\mathcal{U}$ is ``uncovered'' if $\{v\in A_1\cap\mathcal{S}'\mid u\in v\}=\varnothing$. Then for any uncovered $u\in\mathcal{U}$, we know that $\min_{w\in A_1}d(u,w)=1.6$. Thus
\begin{align*}
R_{1u}=&\{v\in V\mid d(u,v)<\min_{w\in A_1}d(u,w)\}\\
\ge&\{v\in\mathcal{S}'\mid d(u,v)<1.6\}\\
=&\{v\in\mathcal{S}'\mid u\in v\}
=\frac{|\mathcal{U}|}{|V'|}.
\end{align*}

Therefore $|\{u\in\mathcal{U}\mid u\text{ is uncovered}\}|<\frac{\frac{l}{8}|V'|\cdot|V|}{\frac{|\mathcal{U}|}{|V'|}}<\frac{l}{4}|V'|^2$.

Let $E_2=\{e\in E\mid|\{u=(e,b)\in\mathcal{U}\mid u\text{ is uncovered}\}|<\frac{l|V'|^2}{|E|}\}$. Then $|E_2|\ge\frac{3}{4}|E|$ because otherwise $|\{u\in\mathcal{U}\mid u\text{ is uncovered}\}|\ge(|E|-\frac{3}{4}|E|)\cdot\frac{l|V'|^2}{|E|}\}\ge\frac{l}{4}|V'|^2$.

Let $E'=E_1\cap E_2$, we know that $|E'|\ge\frac{|E|}{4}$.

Now, we will show that if we uniformly random sample labels from $L_v$ for each $v\in V'$, the expected number of the edges satisfied in $E'$ is at least $\frac{|E|}{l^2}$.

For each edge $e=(v_1,v_2)\in E'$ where $v_1\in V_1$ and $v_2\in V_2$. Assume $L_{v_1}=\{a_1,\mathellipsis,a_p\}$, $L_{v_2}=\{b_1,\mathellipsis,b_q\}$. Note that for every $e\in E_2$ we have
\[\Big|\{u=(e,b)\in\mathcal{U}\mid\exists v\in A_1\cap\mathcal{S}',u\in v\}\Big|\ge|B|-\frac{l|V'|^2}{|E|},\]
and for all $u=((v_1,v_2),b)\in\mathcal{U}$, there exists $v\in A_1\cap\mathcal{S}'$ where $u\in v$ iff $u\in S_{v_1,a_i}$ or $u\in S_{v_2,b_i}$. Thus we have
\[\left|(\{e\}\times B)\cap\left((\bigcup_{i=1}^pS_{v_1,a_i})\cup(\bigcup_{j=1}^qS_{v_2,b_j})\right)\right|\ge|B|-\frac{l|V'|^2}{|E|},\]
which means
\[\left|(\bigcup_{i=1}^pC_{\Pi_e(a_i)})\cup(\bigcup_{j=1}^q\overline{C_{b_j}})\right|\ge|B|-\frac{l|V'|^2}{|E|}=(1-\frac{l|V'|^2}{|E||B|})|B|=(1-(\frac{5}{147n})^rl)|B|>(1-2^{-l-1})|B|.\]
Thus by the definition of ($m,l,2^{-l-1})$)-set system, we know that there exists $i,j$ such that $\Pi_e(a_i)=b_j$. Therefore, $e$ is satisfied with probability $\frac{1}{|L_{v_1}|\cdot|L_{v_2}|}\ge\frac{4}{l^2}$ because the labels are uniformly sampled. Thus the expected number of the edges satisfied in $E'$ is at least $\frac{4}{l^2}\cdot\frac{|E|}{4}=\frac{|E|}{l^2}$, which means, there is a way to label all the vertices in $V'$ and satisfies at least $\frac{|E|}{l^2}$ edges.

Finally, because $r=3\log\log n$ and $l=r\log n$, we know that at most $\varepsilon^r\cdot|E|<\frac{|E|}{l^2}$ edges can be satisfied by any labelling, which is a contradiction.
\end{proof}

With these lemmas, we can prove our lower bound on the $TZ_2$-optimization problem.\\
\\
\textbf{Proof of Theorem}~\ref{thm:ltz2}:
By Lemma~\ref{lem:tz2yes} and Lemma~\ref{lem:tz2no}, we have a polynomial reduction from $\text{LabelCover}_{n,r,\varepsilon}$ problem to $TZ_2$-optimization problem, which maps a YES instance of $\text{LabelCover}_{n,r,\varepsilon}$ to a $TZ_2$-optimization instance with optimal cost at most $(|V'|+1)\cdot|V|$, and maps a NO instance of $\text{LabelCover}_{n,r,\varepsilon}$ to a $TZ_2$-optimization instance with optimal cost at least $\frac{l}{8}|V'|\cdot|V|$. The gap is $\frac{\frac{l}{8}|V'|\cdot|V|}{(|V'|+1)\cdot|V|}=\Theta(\frac{l}{8})=\Theta(\log|V|)$.

Combined with the hardness Theorem~\ref{thm:label}, we know that unless $\mathbf{NP}\subseteq\mathbf{DTIME}(n^{O(\log\log n)})$, the $TZ_2$-optimization problem does not admit a polynomial-time $o(\log n)$-approximation.
\qed

\section{Proofs in section~\ref{sec:tzk} (integrality gap of $TZ_k$-optimization problem)}\label{app:tzk}

\subsection{Relaxation validity}

We first prove that our LP relaxation is indeed valid, i.e., we prove the following claim.

\begin{claim}\label{claim:lptzk}
$LP_{TZ_k}$ is a valid relaxation to the $TZ_k$-optimization problem.
\end{claim}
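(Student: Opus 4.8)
The plan is to show that every feasible integral solution of the $TZ_k$-optimization problem can be turned into a feasible point of $LP_{TZ_k}$ with the same objective value; since $LP_{TZ_k}$ is a minimization problem, this yields $OPT_{LP_{TZ_k}} \le OPT_{TZ_k}$, which is exactly what validity of the relaxation means.

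First I would fix a chain $\varnothing = A_k \subseteq A_{k-1} \subseteq \cdots \subseteq A_0 = V$ and set $x_v^{(i)} = \one[v \in A_i]$ and $y_{uv}^{(i)} = \one[v \in R_{iu}]$ for all $u, v \in V$ and $i \in [k]$, where $R_{iu} = \{v \in A_{i-1} : d(u,v) < \min_{w \in A_i} d(u,w)\}$ as before. The chain inequalities $0 = x_v^{(k)} \le \cdots \le x_v^{(0)} = 1$ are immediate from $A_k = \varnothing$, $A_0 = V$, and the nesting, and $y_{uv}^{(i)} \ge 0$ is trivial. The only constraint needing an argument is $y_{uv}^{(i)} \ge x_v^{(i-1)} - \sum_{w \in B_u(v)} x_w^{(i)}$, i.e.\ $\one[v \in R_{iu}] \ge \one[v \in A_{i-1}] - |A_i \cap B_u(v)|$. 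I would handle this by a short case analysis: if $v \in R_{iu}$ the left side is $1$ and the right side is at most $1$; if $v \notin R_{iu}$ the left side is $0$, so it suffices to check $\one[v \in A_{i-1}] \le |A_i \cap B_u(v)|$, which is clear unless $v \in A_{i-1}$, in which case $v \notin R_{iu}$ forces $d(u,v) \ge \min_{w \in A_i} d(u,w)$, hence $A_i \ne \varnothing$ and some $w \in A_i$ lies in the closed ball $B_u(v)$, giving $|A_i \cap B_u(v)| \ge 1$. (The $i=k$ instance is vacuous because $\min_{w \in \varnothing} d(u,\cdot) = \infty$ is never at most a finite distance, so $v \in R_{ku}$ whenever $v \in A_{k-1}$.)

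Finally I would observe that the objective value of this point, $\sum_{i=1}^k \sum_{u,v \in V} \one[v \in R_{iu}] = \sum_{u \in V} \sum_{i=1}^k |R_{iu}| = cost(A_1, \ldots, A_{k-1}, V, d)$, equals the cost of the integral solution; applying this to an optimal integral solution gives $OPT_{LP_{TZ_k}} \le OPT_{TZ_k}$. There is no serious obstacle here: the only place requiring care is the middle constraint, and within it the single observation that $v \in A_{i-1} \setminus R_{iu}$ forces a point of $A_i$ into $B_u(v)$ — everything else is bookkeeping.
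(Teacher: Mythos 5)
Your proposal is correct and matches the paper's argument: both take the natural indicator assignment $x_v^{(i)}=\one[v\in A_i]$, $y_{uv}^{(i)}=\one[v\in R_{iu}]$, verify the chain and nonnegativity constraints directly, and handle the middle constraint via the same observation that $v\in A_{i-1}\setminus R_{iu}$ forces some point of $A_i$ into $B_u(v)$. The only difference is that you spell out the case analysis (including the vacuous $i=k$ case) in more detail than the paper does.
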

\begin{proof}
Let $A_1,\mathellipsis,A_{k-1}$ be a valid solution to the $TZ_k$-optimization problem. Let $x_v^{(i)}=\mathds{1}_{v\in A_i}$ and $y_{uv}^{(i)}=\mathds{1}_{v\in R_{iu}}$ for all $i\in[k]$ and $u,v\in V$. We can see that the objective value $\sum_{i=1}^k\sum_{u,v\in V}y_{uv}^{(i)}=\sum_{i=1}^k\sum_{u\in V}|R_{iu}|=cost(A_1,\mathellipsis,A_{k-1},V,d)$, which is the cost function. 

We can also see that the first constraint is satisfied by $x_v^{(i)}$ and $y_{uv}^{(i)}$ because $\varnothing=A_k\subseteq A_{k-1}\subseteq\mathellipsis\subseteq A_0=V$. The second constraint is satisfied because if $v\in A_{i-1}$ and there is no vertex in $A_i\cap B_u(v)$, then $v\in R_{iu}$. The third constraint is trivially satisfied.

Therefore $x_v^{(i)}$ and $y_{uv}^{(i)}$ is a valid solution to $LP_{TZ_k}$ which makes the LP objective value equal to the actuall cost function. Thus the claim is proved.
\end{proof}

\subsection{Integrality gap}

Let's consider an instance $(V,d)$ with $V=[n]$. All $n$ vertices lie on a circle and they evenly split the cycle. The cycle distance $d(u,v)=\min\{|u-v|,n+\min\{u,v\}-\max\{u,v\}\}$.

We first show that on this instance, $LP_{TZ_k}$ has a solution with low cost.

\begin{lemma}\label{lem:gapu}
$LP_{TZ_k}$ has a solution with cost $O(n^{1+\frac{1}{2^{k-1}}})$ on instance $(V,d)$.
\end{lemma}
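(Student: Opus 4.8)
The plan is to exploit the rotational symmetry of the $n$-cycle: instead of optimizing over all fractional assignments, I would restrict to solutions that are \emph{constant on each level}. Concretely, set $x_v^{(i)}=p_i$ for every $v\in V$, where $p_0=1$, $p_k=0$, and $p_i=n^{-a_i}$ with $a_i=(2^i-1)/2^{k-1}$ for $0\le i\le k-1$. Since $a_i$ is strictly increasing, the $p_i$ are decreasing, and $a_0=0$ while $a_{k-1}=1-1/2^{k-1}<1$, so the chain constraint $0=x_v^{(k)}\le x_v^{(k-1)}\le\ldots\le x_v^{(1)}\le x_v^{(0)}=1$ holds. Given these $x$-values, take the $y$-variables as small as the constraints permit, namely $y_{uv}^{(i)}=\max\bigl\{0,\ x_v^{(i-1)}-\sum_{w\in B_u(v)}x_w^{(i)}\bigr\}$, which simultaneously satisfies $y_{uv}^{(i)}\ge x_v^{(i-1)}-\sum_{w\in B_u(v)}x_w^{(i)}$ and $y_{uv}^{(i)}\ge 0$. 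It then remains to bound the objective $\sum_{i=1}^k\sum_{u,v}y_{uv}^{(i)}$.

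The key geometric fact is that on the cycle $|B_u(v)|=\min\{2d(u,v)+1,\ n\}$, so in the uniform solution $\sum_{w\in B_u(v)}x_w^{(i)}=\min\{2d(u,v)+1,n\}\cdot p_i$. For a level $i\in[k-1]$ write $T_i=p_{i-1}/p_i=n^{1/2^{k-i}}$; since $i\le k-1$ we have $T_i\le n^{1/2}<n$. Then $y_{uv}^{(i)}=\max\{0,\ p_{i-1}-\min\{2d(u,v)+1,n\}p_i\}$ is nonzero only when $2d(u,v)+1<T_i$, i.e.\ only for the $O(T_i)$ vertices $v$ within cycle-distance $O(T_i)$ of $u$, and each such pair contributes at most $p_{i-1}$. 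Hence the level-$i$ cost is $O(n\cdot T_i\cdot p_{i-1})$, and a short exponent computation gives $n\cdot T_i\cdot p_{i-1}=n^{1+1/2^{k-1}}$ exactly (the recursion defining $a_i$ is chosen precisely so that $1+\tfrac{1}{2^{k-i}}-a_{i-1}=1+\tfrac{1}{2^{k-1}}$). For the last level, since $x_w^{(k)}=0$ for all $w$ we simply get $y_{uv}^{(k)}=p_{k-1}$ for every pair, so the level-$k$ cost is $n^2p_{k-1}=n^2\cdot n^{1/2^{k-1}-1}=n^{1+1/2^{k-1}}$. Summing over all $k$ levels yields total cost $O\bigl(k\cdot n^{1+1/2^{k-1}}\bigr)=O(n^{1+1/2^{k-1}})$ for constant $k$.

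I do not expect a serious obstacle: the only things needing care are the parity/boundary effects in the formula $|B_u(v)|=\min\{2d(u,v)+1,n\}$ and the count of vertices within a prescribed cycle-distance, but these affect only absolute constants. The genuinely load-bearing choice is the recursion $a_i=2a_{i-1}+1/2^{k-1}$ with $a_0=0$, which solves to $a_i=(2^i-1)/2^{k-1}$: this is exactly what makes every level-$i$ cost ($1\le i\le k-1$) equal $n^{1+1/2^{k-1}}$ while forcing $a_{k-1}=1-1/2^{k-1}$ so that the level-$k$ cost $n^2p_{k-1}$ matches as well. Balancing all $k$ levels in this way is the conceptual heart of the construction; the rest is routine estimation, and the full details can be deferred to the appendix.
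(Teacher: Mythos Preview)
Your proposal is correct and takes essentially the same approach as the paper: you use the identical uniform fractional assignment $x_v^{(i)}=n^{-(2^i-1)/2^{k-1}}$, set the $y$-variables to their minimum feasible values, and bound each level's contribution using $|B_u(v)|=2d(u,v)+1$ on the cycle. Your exposition via $T_i=p_{i-1}/p_i$ and the balancing recursion $a_i=2a_{i-1}+1/2^{k-1}$ makes the choice of exponents more transparent than the paper's direct summation, but the substance is the same.
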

\begin{proof}
Consider the following setting of the LP variables: let $x_{v}^{(i)}=n^{-\frac{2^i-1}{2^{k-1}}}$ for all $v\in V$ and $i\in[k-1]$, and let $y_{uv}^{(i)}=\max\{0,x_v^{(i-1)}-\sum_{w\in B_u(v)}x_w^{(i)}\}$ for all $u,v\in V$ and $i\in[k]$.

We can see that $x_{v}^{(i)}=n^{-\frac{2^i-1}{2^{k-1}}}\ge n^{-\frac{2^{i+1}-1}{2^{k-1}}}=x_{v}^{(i+1)}$ which satisfies the first constraint of $LP_{TZ_k}$, $y_{uv}^{(i)}\ge x_v^{(i-1)}-\sum_{w\in B_u(v)}x_w^{(i)}$ which satisfies the second constraint of $LP_{TZ_k}$, and $y_{uv}^{(i)}\ge0$ which satisfies the third constraint of $LP_{TZ_k}$. Therefore $x_{v}^{(i)},y_{uv}^{(i)}$ is a valid solution to $LP_{TZ_k}$.

The objective value of this solution is
\begin{align}
\sum_{i=1}^k\sum_{u,v\in V}y_{uv}^{(i)}=&\sum_{i=1}^k\sum_{u,v\in V}\max\{0,x_v^{(i-1)}-\sum_{w\in B_u(v)}x_w^{(i)}\}\\
=&\sum_{i=1}^{k-1}\sum_{u,v\in V}\max\{0,x_v^{(i-1)}-|B_u(v)|x_v^{(i)}\}+\sum_{u,v\in V}(x_v^{(k-1)}-0)\label{eqn:gap1}\\
=&\sum_{i=1}^{k-1}\sum_{u,v\in V}\max\{0,n^{-\frac{2^{i-1}-1}{2^{k-1}}}-(2\cdot d(u,v)+1)\cdot n^{-\frac{2^i-1}{2^{k-1}}}\} +\sum_{u,v\in V}n^{-\frac{2^{k-1}-1}{2^{k-1}}}\label{eqn:gap2}\\
=&\sum_{i=1}^{k-1}\sum_{u\in V}\Big(n^{-\frac{2^{i-1}-1}{2^{k-1}}}-n^{-\frac{2^i-1}{2^{k-1}}}+n^{-\frac{2^{i-1}-1}{2^{k-1}}}-3\cdot n^{-\frac{2^i-1}{2^{k-1}}}+\mathellipsis\Big) +n^{1+\frac{1}{2^{k-1}}}\label{eqn:gap3}\\
=&\sum_{i=1}^{k-1}n\cdot\Big(O(n^{-\frac{2^{i-1}-1}{2^{k-1}}})\cdot O(n^{\frac{2^i-1}{2^{k-1}}-\frac{2^{i-1}-1}{2^{k-1}}})\Big)+n^{1+\frac{1}{2^{k-1}}}\label{eqn:gap4}\\
=&\sum_{i=1}^{k-1}O(n^{1+\frac{1}{2^{k-1}}})+n^{1+\frac{1}{2^{k-1}}}\\
=&O(n^{1+\frac{1}{2^{k-1}}})
\end{align}

Here equation~(\ref{eqn:gap1}) holds because of all $x_v^{(i)}$ are equal and all $x_v^{(k)}=0$. Equation~(\ref{eqn:gap2}) holds because of the definition of circle distance. Equation~(\ref{eqn:gap3}) is a unrolling, and equation~(\ref{eqn:gap4}) is a summation over arithmetic progression. The last equation holds because of $k$ is a constant.
\end{proof}

Next we will show that the optimal solution of this instance is large.

\begin{lemma}\label{lem:gapl}
The optimal solution to the instance $(V,d)$ has cost at least $\Omega(n^{1+\frac{1}{k}})$.
\end{lemma}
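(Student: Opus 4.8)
The plan is to lower-bound $cost(A_1,\dots,A_{k-1},V,d)=\sum_{u}\sum_{i=1}^k|R_{iu}|$ for an arbitrary integral solution on the $n$-cycle. The key observation is that the $TZ_k$ cost is monotone in a useful sense: if some $A_i$ is ``small'', then the innermost nonempty ball radius $\min_{w\in A_i}d(u,w)$ is ``large'' for many vertices $u$, which forces $|R_{iu}|$ to be large; conversely, if all $A_i$ are ``large'', then the level-$k$ cost $\sum_u|R_{ku}|=n|A_{k-1}|$ is already large. So we should run a case analysis on the sizes $a_i := |A_i|$.

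First I would record the basic geometric fact on the cycle: for any nonempty $S\subseteq V$ and any $u\in V$, $\min_{w\in S}d(u,w)\ge$ roughly $n/(2|S|)$ \emph{on average over $u$} — more precisely, the number of $u$ with $\min_{w\in S}d(u,w) < t$ is at most $2t|S|$, since each $w\in S$ covers at most $2t$ vertices within distance $t$. Hence for at least $n/2$ vertices $u$ we have $\min_{w\in A_i}d(u,w)\ge n/(8a_i)$, and for such $u$, since $R_{iu}\supseteq\{v\in A_{i-1}: d(u,v) < \min_{w\in A_i}d(u,w)\}$ and $A_{i-1}$ on the cycle has points spread out, $|R_{iu}|$ is at least roughly $(\text{radius})\cdot(\text{density of }A_{i-1})$ minus a correction; at minimum $|R_{iu}|\ge \min_{w\in A_i}d(u,w)\cdot a_{i-1}/n - O(1)$ when $A_{i-1}$ is reasonably spread, but to be safe one can use the cruder bound that $R_{iu}$ contains at least $\Omega(\min(n/a_i, n/a_{i-1}))$-ish many points — I would instead argue directly that $\sum_u |R_{iu}| \ge \Omega(a_{i-1}\cdot n/a_i)$ by double counting pairs $(u,v)$ with $v\in A_{i-1}$ close to $u$. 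Combined with level $k$ giving $\sum_u|R_{ku}| = n\cdot a_{k-1}$, we get
\[
cost \;\ge\; \Omega\!\Big(n\cdot a_{k-1} \;+\; \sum_{i=1}^{k-1} \frac{n\cdot a_{i-1}}{a_i}\Big)
\;\ge\; \Omega\!\Big(n\cdot a_{k-1} + \sum_{i=1}^{k-1}\frac{n}{a_i}\Big)
\]
(using $a_{i-1}\ge 1$, with $a_0 = n$ giving the $i=1$ term a boost of $n^2/a_1$). Now minimize the right-hand side over $1\le a_{k-1}\le\dots\le a_1\le n$: the AM-GM-type bound on $n^2/a_1 + n/a_2 + \cdots + n/a_{k-1} + n a_{k-1}$ is minimized, up to the ordering constraints, when the $a_i$ form a geometric-ish progression, yielding value $\Omega(n^{1+1/k})$. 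This is exactly the $TZ_k$ worst-case bound, which is the target.

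The main obstacle I expect is the lower bound $\sum_u|R_{iu}|\gtrsim n a_{i-1}/a_i$: the naive double-counting only directly controls how close the \emph{$A_i$-pivot} is, not how many $A_{i-1}$-points lie inside that radius, and $A_{i-1}$ could in principle be clustered far from many vertices $u$ even while being large. The fix is to also use that $A_i\subseteq A_{i-1}$, so wherever $A_{i-1}$ is sparse, $A_i$ is sparse too and the pivot radius grows there as well — one has to interleave the sparsity structure of consecutive levels, probably by partitioning the cycle into arcs and arguing on each arc separately (an arc where $A_{i-1}$ is sparse contributes via a large pivot radius relative to the local density). Making this arc decomposition quantitative while respecting the nesting $A_k\subseteq\cdots\subseteq A_0$, and then collapsing the resulting per-arc bounds into the clean expression above that can be optimized by AM-GM, is the technically delicate heart of the argument; the final optimization over the $a_i$'s is routine.
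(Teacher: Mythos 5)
There is a genuine gap, and it sits exactly where you suspected. Your central claimed inequality $\sum_{u}|R_{iu}|\ge\Omega(n\,a_{i-1}/a_i)$ (and even the weaker $\Omega(n/a_i)$) is false as a \emph{per-level} statement. Take $A_i=A_{i-1}$: then $R_{iu}=\varnothing$ for every $u$, so level $i$ contributes nothing while your bound would demand $\Omega(n)$. Less degenerately, take $A_{i-1}$ equally spaced with $|A_{i-1}|=\sqrt{n}$ and $A_i=A_{i-1}$ minus a single point: then $R_{iu}$ is nonempty only for the $\approx n/\sqrt{n}$ vertices $u$ near the deleted point, so $\sum_u|R_{iu}|\approx\sqrt{n}$, whereas $n\,a_{i-1}/a_i\approx n$. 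The underlying issue is that an adversarially chosen chain can make any individual level's cost essentially zero; the price is paid at \emph{other} levels (intuitively, wasting a level leaves you with a $(k-1)$-level oracle, which is more expensive). Consequently no decomposition of the form $cost\ge\sum_i(\text{function of }a_{i-1},a_i)$ followed by an AM--GM over the global sizes $a_i$ can work: the lower bound must amortize cost across levels jointly, not level by level. Your proposed fix (arcs where $A_{i-1}$ is sparse force a large pivot radius) still reasons one level at a time and does not escape the $A_i=A_{i-1}$ obstruction.

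The paper's proof handles this by making the cross-level amortization the induction itself: it proves, for every segment $[a,b]$ of the cycle containing no point of $A_l$, that $\sum_{i=1}^{l}\sum_{u\in[a,b]}|R_{iu}|\ge\bigl(\frac{b-a+1}{4^l}\bigr)^{1+\frac1l}$. The inductive step splits $[a,b]$ at the points of $A_{l-1}$ lying in it, applies the induction hypothesis (with $l-1$) to each resulting sub-arc, adds the level-$l$ cost (each vertex of a sub-arc sees all the $A_{l-1}$-points to one side of it), and combines the sub-arc contributions with a H\"older/convexity inequality of the form $\sum_i(x_i^\alpha+4i\,x_i)\ge(\sum_i x_i)^{2-1/\alpha}$. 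This is the step that converts ``either $A_{l-1}$ is dense in $[a,b]$ (many split points, large level-$l$ cost) or some sub-arc is long (large lower-level cost by induction)'' into the clean exponent $1+\frac1l$. If you want to salvage your outline, you would need to replace the per-level size bound with some such recursive, per-arc, all-levels-at-once statement; the final optimization is then absorbed into the convexity inequality rather than being a separate AM--GM over the $a_i$.
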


We will prove this lemma using a stronger claim. The lemma holds by setting $a=1,b=\lfloor\frac{n}{2}\rfloor$, and $l=k$ in this claim:

\begin{claim}
For a segment $[a,b]$ of the cycle where $a,b\in[n]$, $b-a<\frac{n}{2}$, and all the vertices in $[a,b]$ are NOT in $A_l$, we have $\sum_{i=1}^l\sum_{u\in[a,b]\cap[n]}|R_{iu}|\ge\left(\frac{b-a+1}{4^l}\right)^{1+\frac{1}{l}}$ for each $l\in[k]$.
\end{claim}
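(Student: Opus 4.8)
The plan is to prove the claim by strong induction on $l$. The base case $l=1$ is where the structure of the cycle metric does the work: if no vertex of the segment $[a,b]$ is in $A_1$, then for any $u\in[a,b]$ we have $R_{1u}=\{v\in V : d(u,v)<\min_{w\in A_1}d(u,w)\}$, and since $A_1$ avoids the entire segment, the nearest $A_1$-vertex to $u$ is at distance at least $\min\{u-a+1,\,b-u+1\}$ (the distance to just outside the segment on the shorter side). Hence $R_{1u}$ contains every segment vertex within that radius, so $|R_{1u}|\ge\min\{u-a+1,b-u+1\}$. Summing over $u\in[a,b]$ gives $\sum_{u\in[a,b]}|R_{1u}|\ge\Omega((b-a+1)^2)$, which is at least $\left(\frac{b-a+1}{4}\right)^{2}$, matching the claimed bound $\left(\frac{b-a+1}{4^1}\right)^{1+\frac11}$.

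For the inductive step, suppose the claim holds for $l-1$ and we are given a segment $[a,b]$ with $b-a<n/2$ whose vertices all avoid $A_l$. Set $N=b-a+1$. The key case split is on how much of the segment is covered by $A_{l-1}$. If $A_{l-1}$ is ``dense'' in $[a,b]$, say it hits at least $N/(2\cdot 4^{l-1})$ of the segment vertices, then each such vertex $v\in A_{l-1}\cap[a,b]$, being in $A_{l-1}$ but not in $A_l$ (since the whole segment avoids $A_l$), contributes $v\in R_{lu}$ for $u=v$ whenever there is no $A_l$-vertex closer to $v$ than $v$ itself, i.e.\ always — so $\sum_{u\in[a,b]}|R_{lu}|\ge |A_{l-1}\cap[a,b]|\ge N/(2\cdot4^{l-1})$, and one checks this exceeds $(N/4^l)^{1+1/l}$ for the relevant range of $N$ (handling small $N$ separately, where the bound is vacuous or trivial). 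If instead $A_{l-1}$ is ``sparse'' in $[a,b]$, then $[a,b]$ decomposes into maximal subsegments that avoid $A_{l-1}$, and the number of such subsegments is small (at most $N/(2\cdot4^{l-1})$ plus one); a subsegment avoiding $A_{l-1}$ in particular avoids $A_l\subseteq A_{l-1}$, so by the inductive hypothesis (applied with $l-1$) each subsegment $[a',b']$ of length $N'$ satisfies $\sum_{i=1}^{l-1}\sum_{u\in[a',b']}|R_{iu}|\ge (N'/4^{l-1})^{1+1/(l-1)}$. I would then sum these bounds over all subsegments and use convexity of $t\mapsto t^{1+1/(l-1)}$ together with the bound on the number of pieces: the sum is minimized when the lengths are balanced, giving $\sum_{i=1}^{l-1}\sum_{u\in[a,b]}|R_{iu}|\ge (\text{number of pieces})\cdot\left(\frac{N/(\text{number of pieces})}{4^{l-1}}\right)^{1+1/(l-1)}$, and plugging in the upper bound on the number of pieces yields something of the form $\Omega\!\left(N^{1+1/(l-1)}/(\text{poly in }4^{l-1})\right)$, which is at least $(N/4^l)^{1+1/l}$ after checking the exponents and constants line up (this is where the precise $4^l$ normalization in the claim statement is chosen to make the recursion close).

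The main obstacle is the bookkeeping in the sparse case: making the threshold for ``dense vs.\ sparse'' (and hence the bound on the number of $A_{l-1}$-avoiding subsegments) exactly compatible with the $4^l$ denominators, so that the convexity estimate in the sparse case and the direct count in the dense case both clear the target $(N/4^l)^{1+1/l}$ simultaneously. A secondary subtlety is that $\sum_{i=1}^l$ includes the level-$l$ term, which I am essentially throwing away in the sparse case and using only in the dense case — I should double-check that discarding it in the sparse branch is affordable, i.e.\ that $\sum_{i=1}^{l-1}$ already suffices there. I would also need to be slightly careful at segment boundaries (the subsegments avoiding $A_{l-1}$ are open intervals between consecutive $A_{l-1}$ points, and one must verify $b'-a'<n/2$ still holds so the inductive hypothesis applies, which is automatic since these are sub-intervals of $[a,b]$). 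Once the claim is established, Lemma~\ref{lem:gapl} follows by taking $a=1$, $b=\lfloor n/2\rfloor$, $l=k$: the segment has length $\Theta(n)$ and avoids $A_k=\varnothing$ vacuously, so the optimal integral cost is at least $\left(\Theta(n)/4^k\right)^{1+1/k}=\Omega(n^{1+1/k})$ since $k$ is constant, and combining with Lemma~\ref{lem:gapu} gives the integrality gap of Theorem~\ref{thm:gaptzk}.
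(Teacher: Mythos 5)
Your base case and the reduction of Lemma~\ref{lem:gapl} to the claim match the paper, but the inductive step has a genuine gap: the dense/sparse dichotomy cannot work with the bounds you assign to each branch. In the dense branch you only count the diagonal contribution $v\in R_{lv}$, which gives $\sum_u|R_{lu}|\ge|A_{l-1}\cap[a,b]|\le N$, a \emph{linear} bound against a \emph{superlinear} target $N^{1+1/l}/4^{l+1}$ — this fails for all large $N$, not small $N$ as you suggest. The fix the paper uses is to observe that for $u$ in the left half of the segment, \emph{every} $A_{l-1}$-vertex between $a$ and $u$ lies in $R_{lu}$ (the nearest $A_l$-vertex is outside $[a,b]$, hence farther than all of them), so a vertex in the $i$-th gap contributes at least $i$ to the level-$l$ cost, giving roughly $\sum_i i\cdot(\text{length of gap }i)\ge m^2/2$ where $m$ is the number of $A_{l-1}$-vertices. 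But even with this correction the dichotomy does not close: the sparse branch (convexity applied to the inductive bounds on the gaps) only clears the target when $m\lesssim N^{1/l}$, while the quadratic dense bound only clears it when $m\gtrsim N^{(l+1)/(2l)}$, and for $l\ge2$ there is an intermediate range of $m$ in which each contribution, taken alone over its own worst-case gap configuration, is too small. The configurations defeating the two bounds are different (bunched empty gaps kill the level-$l$ count; balanced gaps kill the convexity count), so one must lower-bound the \emph{sum} of the two contributions over all configurations simultaneously.

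That is exactly what the paper does: it decomposes the left half $[a,\lceil\tfrac{a+b}{2}\rceil]$ into the $m+1$ maximal $A_{l-1}$-free gaps of lengths $\sim x_i\cdot4^{l-1}$, lower-bounds the total cost by $\sum_{i=0}^m\bigl(x_i^{\alpha}+i\cdot(\text{gap length})\bigr)$ with $\alpha=1+\tfrac{1}{l-1}$, and proves the standalone inequality $\sum_{i=0}^m(x_i^{\alpha}+4i\,x_i)\ge\bigl(\sum_{i=0}^m x_i\bigr)^{2-1/\alpha}$ (Lemma~\ref{lem:ineq}) via H\"older's inequality when $m$ is small and a perturbation/optimization argument when $m$ is large; the exponent $2-1/\alpha=1+1/l$ is what makes the recursion close. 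Two further details you should note: the paper restricts the gap decomposition to the left half precisely so that the ``$u$ sees all $A_{l-1}$-vertices to its left'' argument is valid, and it keeps a separate easy case $m>\tfrac{b-a+1}{4}$ where $\sum_i i$ alone is quadratic in $N$. So the missing idea is the combined inequality trading off the two cost contributions; without it, no choice of density threshold makes your two branches cover all configurations.
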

\begin{proof}
We prove this by doing induction on $l$. The base case is $l=1$. For each vertex $u\in[a,b]$, We know that
\[R_{1u}=\{v\in V\mid d(u,v)<\min_{w\in A_1}d(u,w)\}\subseteq\{v\in[a,b]\mid|u-v|\le\min\{u-a,b-u\}\},\]
so $|R_{1u}|\ge2\cdot\min\{u-a,b-u\}$ because all the vertices in $[a,b]$ are NOT in $A_1$. So
\[\sum_{u\in[a,b]}|R_{1u}|\ge2\cdot(1+2+\mathellipsis+\left\lfloor\frac{b-a+2}{2}\right\rfloor+\mathellipsis+2+1)\ge\left(\frac{b-a+1}{4}\right)^2.\]
Now we consider general case $l\ge2$, and assume the claim is established on $l-1$.

Assume there are $m$ vertices $t_1,\mathellipsis,t_m\in[a,\lceil\frac{a+b}{2}\rceil]\cap A_{l-1}$, and $[a,\lceil\frac{a+b}{2}\rceil]$ are splitted to small segments $[a_0,b_0],\mathellipsis,[a_m,b_m]$ where all the vertices in $[a_i,b_i]$ are not in $A_{l-1}$ (if a segment has no vertex inside, we let $b_i=a_i-1$ without lose of generality). Then for each $i\in[m]$ and $u\in[a_i,b_i+1]$, we have $t_1,\mathellipsis,t_i\in R_{lu}$ because $R_{lu}=\{v\in A_{l-1}\mid d(u,v)<\min_{w\in A_l}d(u,w)\}$. Thus
\begin{align*}
\sum_{i=1}^l\sum_{u\in[a,b]\cap[n]}|R_{iu}|\ge&\sum_{i=0}^m\left(\sum_{j=1}^{l-1}\sum_{u\in[a_i,b_i]\cap[n]}|R_{ju}|+\sum_{u\in[a_i,b_i]\cap[n]}|R_{lu}|\right)\\
\ge&\sum_{i=0}^m\left(\left(\frac{b_i-a_i+1}{4^{l-1}}\right)^{1+\frac{1}{l-1}}+i\cdot(b_i-a_i+2)\right).
\end{align*}

If $m>\frac{b-a+1}{4}$, $\sum_{i=0}^mi\cdot1$ is already at least $\left(\frac{b-a+1}{4^l}\right)^{1+\frac{1}{l}}$.

If $m\le\frac{b-a+1}{4}$, we have a stronger inequality which we will prove later:

\begin{lemma}\label{lem:ineq}
If $\alpha\in[1,2]$ and $x_i\ge0$ for all $i\in[m]$, then
\[\sum_{i=0}^m(x_i^\alpha+4i\cdot x_i)\ge\left(\sum_{i=0}^mx_i\right)^{2-\frac{1}{\alpha}}\]
\end{lemma}

Using this inequality, by setting $x_i=\frac{b_i-a_i+1}{4^{l-1}}$ and $\alpha=1+\frac{1}{l-1}$ we have
\begin{align*}
\sum_{i=1}^l\sum_{u\in[a,b]\cap[n]}|R_{iu}|\ge&\left(\sum_{i=0}^m\frac{b_i-a_i+1}{4^{l-1}}\right)^{2-\frac{1}{1+\frac{1}{l-1}}}\\
\ge&\left(\frac{\lceil\frac{a+b}{2}\rceil-a+1-m}{4^{l-1}}\right)^{1+\frac{1}{l}}\\
\ge&\left(\frac{\frac{b-a}{2}+1-\frac{b-a}{4}}{4^{l-1}}\right)^{1+\frac{1}{l}}\ge\left(\frac{b-a+1}{4^l}\right)^{1+\frac{1}{l}}.
\end{align*}
\end{proof}

With these lemma in hand, we can now prove Theorem~\ref{thm:gaptzk}.
\\
\text{}\\
\textsl{Proof of Theorem~\ref{thm:gaptzk}:}

Combine Lemma~\ref{lem:gapu} and Lemma~\ref{lem:gapl}, there is an $\Omega(\frac{n^{1+\frac{1}{k}}}{n^{1+\frac{1}{2^k-1}}})=\Omega(n^{\frac{1}{k}-\frac{1}{2^k-1}})$ integrality gap for the basic LP relaxation $LP_{TZ_k}$.
\qed
\\
\text{}\\
\textsl{Proof of Lemma~\ref{lem:ineq}:}

Let $M=(\sum_{i=1}^mx_i)^\frac{\alpha-1}{\alpha}$. We first split the problem to 2 cases, depending on whether $m\le M$.

\textbf{Case 1:} $m\le M$.

In this case, by H\"older's inequality, we have
\[(\sum_{i=0}^mx_i^\alpha)^\frac{1}{\alpha}\cdot(\sum_{i=0}^m1^\frac{\alpha}{\alpha-1})^\frac{\alpha-1}{\alpha}\ge(\sum_{i=0}^mx_i\cdot 1)\]
thus
\[\sum_{i=0}^mx_i^\alpha\ge\frac{(\sum_{i=0}^mx_i)^\alpha}{m^{\alpha-1}}\ge\frac{(\sum_{i=0}^mx_i)^\alpha}{M^{\alpha-1}}\ge(\sum_{i=0}^mx_i)^{\alpha-\frac{\alpha-1}{\alpha}\cdot(\alpha-1)}=(\sum_{i=0}^mx_i)^{2-\frac{1}{\alpha}}\]

\textbf{Case 2:} $m>M$.

Let's fix $T=\sum_{i=0}^mx_i$ and consider the $\mathbf{x}^*$ which minimizes the left side: $l(\mathbf{x})=\sum_{i=0}^m(x_i^\alpha+4i\cdot x_i)$.

Consider any two consecutive variables $x_j$ and $x_{j+1}$, we claim that, in $\mathbf{x}^*$, for each $0\le j<m$, either $x_{j+1}^*=0$, or $(x_j^*)^{\alpha-1}-(x_{j+1}^*)^{\alpha-1}=\frac{4}{\alpha}$.

This is because, if we replace the $x_{j+1}$ in $l(\mathbf{x})$ by $T-\sum_{i\ne(j+1)}x_i$ and do partial derivative with respect of $x_j$, we have
\begin{align*}
&\frac{\partial}{\partial x_j}\left(\sum_{i\ne(j+1)}(x_i^\alpha+4i\cdot x_i)+\left(T-\sum_{i\ne(j+1)}x_i\right)^\alpha+4(j+1)\cdot\left(T-\sum_{i\ne(j+1)}x_i \right)\right)\\
=&\frac{\partial}{\partial x_j}\left(x_j^\alpha+4j\times x_j+\left(T-\sum\nolimits_{i\ne(j+1)}x_i\right)^\alpha+4(j+1)\cdot\left(T-\sum\nolimits_{i\ne(j+1)}x_i \right)\right)\\
=&\alpha\cdot x_j^{\alpha-1}+4j-\alpha\cdot\left(T-\sum\nolimits_{i\ne(j+1)}x_i \right)^{\alpha-1}-4(j+1)\\
=&\alpha\cdot\left(x_j^{\alpha-1}-\left(T-\sum\nolimits_{i\ne(j+1)}x_i \right)^{\alpha-1}\right)-4\\
=&\alpha(x_j^{\alpha-1}-x_{j+1}^{\alpha-1})-4.
\end{align*}
If we fix $x_i$ for all $i\in[0,m]\cap\mathbb{N}\backslash\{j,j+1\}$, this partial derivative monotonically increases as $x_j$ increases. Thus when $l(\mathbf{x})$ is minimized, either the partial derivative equals $0$, which means $(x_j^*)^{\alpha-1}-(x_{j+1}^*)^{\alpha-1}=\frac{4}{\alpha}$, or $x_j$ hits the ceiling, which means $x_j^*=T-\sum_{i\ne j,(j+1)}x_i^*$, so $x_{j+1}^*=0$.

This result shows that, the number series $(x_0^*)^{\alpha-1},(x_1^*)^{\alpha-1},\mathellipsis,(x_m^*)^{\alpha-1}$ is in decreasing order, where
\[(x_i^*)^{\alpha-1}=
\begin{cases}
(x_{i-1}^*)^{\alpha-1}-\frac{4}{\alpha},&\mbox{if }(x_i^*)^{\alpha-1}>\frac{4}{\alpha}\\
0,&\mbox{otherwise}
\end{cases}\]

If the number of non-zero entries in $\mathbf{x}^*$ is at most $M$, then this comes back to the Case 1. Otherwise, there are more than $M$ non-zero entries in $\mathbf{x}^*$, thus $x_0,x_1,\mathellipsis,x_M$ are all non-zero, and $(x_i^*)^{\alpha-1}\ge\frac{4}{\alpha}\cdot(M-i)$ for $i\le M$. Therefore
\begin{align}
\sum_{i=0}^m(x_i^*)^\alpha&\ge\sum_{i=0}^M\left(\frac{4}{\alpha}\cdot(M-i)\right)^\frac{\alpha}{\alpha-1}\\
&\ge\sum_{i=1}^M\left(\frac{4i}{\alpha}\right)^\frac{\alpha}{\alpha-1}\\
&\ge\frac{(\sum_{i=1}^M\frac{4i}{\alpha})^\frac{\alpha}{\alpha-1}}{M^\frac{1}{\alpha-1}}\label{eqn:holder}\\
&\ge\frac{(\frac{2M^2}{\alpha})^\frac{\alpha}{\alpha-1}}{M^\frac{1}{\alpha-1}}\\
&\ge(\frac{2}{\alpha})^\frac{\alpha}{\alpha-1}\cdot(\sum_{i=0}^mx_i)^{\frac{\alpha-1}{\alpha}\cdot(\frac{2\alpha}{\alpha-1}-\frac{1}{\alpha-1})}\\
&\ge(\sum_{i=0}^mx_i)^{2-\frac{1}{\alpha}}\label{eqn:alpha}
\end{align}
Here inequality~(\ref{eqn:alpha}) holds because of $\alpha\in[1,2]$. Inequality~(\ref{eqn:holder}) holds because of H\"older's inequality
\[(\sum_{i=1}^m1^\alpha)^\frac{1}{\alpha}\cdot(\sum_{i=1}^my_i^\frac{\alpha}{\alpha-1})^\frac{\alpha-1}{\alpha}\ge\sum_{i=1}^my_i\cdot 1\]
\qed

\section{Proofs in section~\ref{sec:pr}}\label{app:pr}

\subsection{Proof of Valid Relaxation}

We prove the following claim:

\begin{claim}\label{claim:lppr}
$LP_{PR}$ is a valid relaxation to the $PR$-optimization problem.
\end{claim}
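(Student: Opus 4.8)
The plan is the standard argument for showing a linear program is a valid relaxation: to every feasible integral solution of the original problem I would attach a feasible LP point of the same cost. So fix any $A\subseteq V$ (a candidate solution to the $PR$-optimization problem), let $R=\{\{u,v\}\subseteq V : d(u,v)<\min_{w\in A}d(u,w)+\min_{w\in A}d(v,w)-1\}$ be the associated set from the definition of $cost(A,V,d)$, and set $x_v=\mathds{1}_{v\in A}$ for each $v\in V$ and $y_{uv}=\mathds{1}_{\{u,v\}\in R}$ for each pair $\{u,v\}$. First I would note that $\sum_{v\in V}n\cdot x_v+\sum_{\{u,v\}\subseteq V}y_{uv}=n|A|+|R|=cost(A,V,d)$, so this point's LP objective equals the cost of $A$; and the bound constraints $x_v\in\{0,1\}\subseteq[0,1]$ and $y_{uv}\in\{0,1\}$ (so $y_{uv}\ge0$) hold trivially.

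The one constraint with content is $y_{uv}\ge 1-\sum_{w\in B_u(r)\cup B_v(d(u,v)-r)}x_w$, and since $x$ is $0/1$-valued the sum there equals $|A\cap(B_u(r)\cup B_v(d(u,v)-r))|$. If $\{u,v\}\in R$ then $y_{uv}=1$ while the right-hand side is at most $1$, so the inequality holds (this also absorbs the convention $\min_{w\in\varnothing}(\cdot)=\infty$, i.e.\ the case $A=\varnothing$). If $\{u,v\}\notin R$ then $y_{uv}=0$, and it suffices to show $A\cap(B_u(r)\cup B_v(d(u,v)-r))\ne\varnothing$ for every relevant $r$; since $PR$ is defined only for integer-valued metrics and only integer values of $r$ yield the (polynomially many) constraints actually written into the LP, we may take $r\in\{0,1,\dots,d(u,v)\}$. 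Suppose the intersection were empty for some such $r$. Then $A\cap B_u(r)=\varnothing$ forces $\min_{w\in A}d(u,w)>r$, hence $\min_{w\in A}d(u,w)\ge r+1$ by integrality of the distances, and symmetrically $\min_{w\in A}d(v,w)\ge d(u,v)-r+1$; adding these two bounds gives $\min_{w\in A}d(u,w)+\min_{w\in A}d(v,w)\ge d(u,v)+2$, i.e.\ $d(u,v)<\min_{w\in A}d(u,w)+\min_{w\in A}d(v,w)-1$, which says $\{u,v\}\in R$ --- a contradiction.

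The step I would treat most carefully --- essentially the only non-bookkeeping point --- is this appeal to integrality. Quantified over all real $r\in[0,d(u,v)]$ the covering constraint is in fact slightly too strong: a half-integer value of $r$ can force $y_{uv}\ge1$ even for some pair with $\{u,v\}\notin R$, since there one only obtains $\min_{w\in A}d(u,w)+\min_{w\in A}d(v,w)=d(u,v)+1$ rather than a contradiction. So validity genuinely relies both on the metric being integer-valued and on the LP being instantiated only at the integer breakpoint values of $r$ that survive the reduction to a polynomial-size program; granting that, everything else is routine substitution.
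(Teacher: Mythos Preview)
Your proof is correct and follows the same route as the paper's: set the indicator variables $x_v=\mathds{1}_{v\in A}$ and $y_{uv}=\mathds{1}_{\{u,v\}\in R}$, match the objective to $cost(A,V,d)$, and verify the covering constraint by casing on whether $\{u,v\}\in R$. Your treatment of the integrality of $r$ is in fact more careful than the paper's, which simply asserts that for $\{u,v\}\notin R$ the set $A\cap\bigl(B(u,r)\cup B(v,d(u,v)-r)\bigr)$ is nonempty for \emph{all} $r\in[0,d(u,v)]$; as you correctly observe, this fails at half-integer $r$ in the boundary case $d(u,v)=\min_{w\in A}d(u,w)+\min_{w\in A}d(v,w)-1$, and the claim is only valid once $r$ is restricted to the integer breakpoints used to write the polynomial-size LP.
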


Let $A$ be a valid solution to the $PR$-optimization problem. Let $x_v=\mathds{1}_{v\in A}$ and $y_{uv}=\mathds{1}_{\{u,v\}\in R}$ for all $u,v\in V$. We can see that the objective value $\sum_{v\in V}n\cdot x_v+\sum_{\{u,v\}\subseteq V}y_{uv}=n\cdot|A|+|R|=cost(A,V,d)$, which is the cost function. 

We can also see that the first constraint is satisfied by $x_v$ and $y_{uv}$ because if $y_{uv}=0$, we have $d(u,v)\ge\min_{w\in A}d(u,w)+\min_{w\in A}d(v,w)-1$, then for all $r\in[0,d(u,v)]$, there must be a vertex in $A\cap(B_u(r)\cup B_v(d(u,v)-r))$, which makes $0\ge1-\sum_{w\in B_u(r)\cup B_v(d(u,v)-r)}x_w$ satisfied. The second and the third constraints are trivially satisfied.

Therefore $x_v$ and $y_{uv}$ is a valid solution to $LP_{PR}$ which makes the LP objective value equal to the actuall cost function. Thus the claim is proved.
\qed

\subsection{Lower Bound Proofs}

We start from the following theorem:

\begin{theorem}[\cite{RS97}]\label{thm:setcover}
Unless $\mathbf{P}=\mathbf{NP}$, there is no $o(\log n)$-approximation to the set cover problem. 
\end{theorem}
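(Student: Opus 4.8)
The plan is to derive this $\Omega(\log n)$ hardness from a Feige-style reduction~\cite{Feige98} out of Label Cover, using exactly the partition systems of Definition~\ref{def:setsystem} (here it suffices to take $\delta=0$, i.e.\ an ordinary $(m,\ell)$-set system); the combinatorial core is then the same as in Lemmas~\ref{lem:tz2yes} and~\ref{lem:tz2no}, stripped of the metric-space bookkeeping. Concretely, I would start from a left/right-regular Label Cover instance $(G=(V_1,V_2,E),\Sigma,\Pi)$ together with an $(m,\ell)$-set system $(B;C_1,\dots,C_m)$ with $m=|\Sigma|$ and $|B|=\mathrm{poly}(\ell,|\Sigma|)$, and build a Set Cover instance with universe $\mathcal U=E\times B$ and, for each vertex $v\in V_1\cup V_2$ and label $a\in\Sigma$, a set: $S_{v_1,a}=\bigcup_{e\ni v_1}\{e\}\times\overline{C_{\Pi_e(a)}}$ for $v_1\in V_1$, and $S_{v_2,b}=\bigcup_{e\ni v_2}\{e\}\times C_b$ for $v_2\in V_2$. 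The encoding is chosen so that on a single edge fibre $\{e\}\times B$ with $e=(v_1,v_2)$, a family $\{S_{v_1,a}:a\in P\}\cup\{S_{v_2,b}:b\in Q\}$ covers $B$ iff $\{\overline{C_{\Pi_e(a)}}:a\in P\}\cup\{C_b:b\in Q\}$ does; and the defining property of an $(m,\ell)$-set system is that any such cover using fewer than $\ell$ sets must contain a complementary pair $\overline{C_i},C_i$, which here forces $b=\Pi_e(a)$ for some $a\in P,b\in Q$ — i.e.\ it witnesses that the labelling $v_1\mapsto a,v_2\mapsto b$ satisfies edge $e$.

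Completeness is immediate: a labelling $\sigma$ satisfying every edge gives the cover $\{S_{v,\sigma(v)}:v\in V_1\cup V_2\}$ of size $|V_1\cup V_2|$, since on fibre $\{e\}\times B$ we get $\overline{C_{\Pi_e(\sigma(v_1))}}\cup C_{\sigma(v_2)}=\overline{C_{\sigma(v_2)}}\cup C_{\sigma(v_2)}=B$. For soundness I would run the label-extraction argument of Lemma~\ref{lem:tz2no} essentially verbatim: given a cover $\mathcal C$ with $|\mathcal C|\le\frac{\ell}{8}|V_1\cup V_2|$, set $L_v=\{a:S_{v,a}\in\mathcal C\}$; Markov's inequality together with regularity gives that a $1-o(1)$ fraction of edges $e=(v_1,v_2)$ have $|L_{v_1}|,|L_{v_2}|<\ell/2$; for each such edge the $\mathcal C$-sets covering $\{e\}\times B$ number fewer than $\ell$, so Definition~\ref{def:setsystem} yields $a\in L_{v_1},b\in L_{v_2}$ with $\Pi_e(a)=b$; assigning each $v$ a uniform random label from $L_v$ then satisfies every such edge with probability at least $4/\ell^2$, so some labelling satisfies $\Omega(|E|/\ell^2)$ edges. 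Hence, once the Label Cover soundness $\varepsilon$ is pushed below $c/\ell^2$, a NO instance admits no cover of size $\frac{\ell}{8}|V_1\cup V_2|$, whereas a YES instance admits one of size $|V_1\cup V_2|$: a gap of $\Omega(\ell)$. Since $|B|=\mathrm{poly}(\ell,|\Sigma|)$, taking $\ell=\Theta(\log n)$ (with $n=|\mathcal U|+(\text{number of sets})$) keeps the reduction polynomial and yields the claimed $\Omega(\log n)$ inapproximability.

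The real obstacle is purely quantitative: to get hardness under $\mathbf P=\mathbf{NP}$ rather than the weaker $\mathbf{NP}\subseteq\mathbf{DTIME}(n^{O(\log\log n)})$ that suffices for Theorem~\ref{thm:ltz2}, I need \emph{polynomial-size} Label Cover instances whose soundness $\varepsilon$ is sub-constant — concretely $1/\varepsilon\ge\log^{3} n$, so that $1/\varepsilon$ beats the $\Theta(\ell^2)=\Theta(\log^2 n)$ threshold appearing in the soundness bound while $|B|=\mathrm{poly}(n)$. Plain parallel repetition~\cite{RR98} only drives $\varepsilon$ down by repeating $r$ times, at the cost of an $n^{O(r)}$ blow-up, so reaching inverse-polylogarithmic soundness needs $r=\Theta(\log\log n)$ rounds and hence a quasi-polynomial instance — exactly the source of Feige's $\mathbf{DTIME}(n^{O(\log\log n)})$ caveat. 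Raz and Safra's sub-constant error-probability low-degree test and PCP characterization of $\mathbf{NP}$~\cite{RS97} is precisely the tool that produces polynomial-size Label Cover with soundness far below any fixed inverse polylogarithm, removing that caveat. Granting that input, plugging the $(m,\ell)$-set system construction of Lemma~\ref{lem:setsystem} into the completeness/soundness analysis above finishes the proof; Dinur--Steurer~\cite{DS14} later sharpened the ratio to $(1-o(1))\ln n$, but $\Omega(\log n)$ is all that is needed here.
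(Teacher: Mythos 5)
This theorem is not proved in the paper at all --- it is imported as a black-box citation to Raz--Safra~\cite{RS97} --- so there is no in-paper argument to compare yours against. Your sketch is a faithful outline of the standard Lund--Yannakakis/Feige-style route: the reduction through $(m,\ell)$-set systems, the completeness argument, and the label-extraction soundness argument are exactly the machinery the paper itself deploys (for a different target) in Lemmas~\ref{lem:tz2yes} and~\ref{lem:tz2no}, and you correctly isolate the one ingredient that upgrades the hardness assumption from $\mathbf{NP}\not\subseteq\mathbf{DTIME}(n^{O(\log\log n)})$ to $\mathbf{P}\neq\mathbf{NP}$: polynomial-size Label Cover with soundness below $c/\log^{2}n$, which plain parallel repetition~\cite{RR98} cannot deliver and which is precisely the content of the sub-constant-error PCP of~\cite{RS97}. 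Two caveats. First, that PCP is the entire technical weight of the theorem; you (reasonably) treat it as a black box, so your write-up is a reduction of the cited theorem to a different, harder cited theorem --- which is how the literature presents it, but it is not a self-contained proof. Second, the Label Cover instances produced by~\cite{RS97} carry projection (many-to-one) constraints rather than the bijections required by the paper's definition of Label Cover; your encoding ($S_{v_1,a}$ via $\overline{C_{\Pi_e(a)}}$ and $S_{v_2,b}$ via $C_b$) does work for projections, since the complementary-pair property only needs to detect $\Pi_e(a)=b$ and never inverts $\Pi_e$, but you should state this explicitly rather than silently reuse the bijection-based definition.
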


We now prove two lemmas about the reduction (completeness and soundness).

\begin{lemma}
If there is a solution $\mathcal{S}^*$ to the set cover instance $(\mathcal{U},\mathcal{S})$ where $|\mathcal{S}^*|=t$, then there is a set $A$ where $cost(A,V,d)\le t|V|$.
\end{lemma}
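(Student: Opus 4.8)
The plan is to exhibit an explicit good solution $A$ to the $PR$-optimization instance built from a set cover solution $\mathcal{S}^*$ with $|\mathcal{S}^*|=t$. The natural choice is to take $A=\bigcup_{S\in\mathcal{S}^*}G_S$, i.e., pick all vertices in the groups corresponding to chosen sets. Then $|A|=3nt$, so the first term in the cost is $n\cdot|A|=3n^2 t$. Since $|V|=3n\cdot n=3n^2$ (there are $|\mathcal{U}|+|\mathcal{S}|=n$ groups, each of size $3n$), this gives $n\cdot|A|=t|V|$. So the whole point is to show that the second term, $|R|$, vanishes: $R=\varnothing$.

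To show $R=\varnothing$, I would verify that for every pair $\{u,v\}\subseteq V$ we have $d(u,v)\ge\min_{w\in A}d(u,w)+\min_{w\in A}d(v,w)-1$. The key observation is that, because $\mathcal{S}^*$ is a valid set cover, every vertex $u\in V$ is within distance $1$ of $A$: if $u\in G_S$ for some $S\in\mathcal{S}^*$ then $u\in A$ and $\min_{w\in A}d(u,w)=0$ (taking $w=u$); if $u\in G_S$ for $S\notin\mathcal{S}^*$ then $\min_{w\in A}d(u,w)=1$ since two vertices in the same group $G_S$ are at distance $1$ and $A$ contains all of $G_{S'}$ for any $S'\in\mathcal{S}^*$, so actually we need the distance from $u\in G_S$ to some vertex of $A$; but distance between different groups is at least $1$, so $\min_{w\in A}d(u,w)\ge 1$ — and in fact equals $1$ is not needed, only the \emph{lower bound} $\ge 1$ matters here... wait, for the first term we used $|A|$, not distances, so actually I only need: for every $u$, $\min_{w\in A}d(u,w)\le 1$, and for every $u\in G_e$ with $e\in\mathcal{U}$, since $\mathcal{S}^*$ covers $e$ there is $S\in\mathcal{S}^*$ with $e\in S$, and then any vertex of $G_S\subseteq A$ is at distance $1$ from $u$, so $\min_{w\in A}d(u,w)\le 1$. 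Hence for \emph{every} $u\in V$, $\min_{w\in A}d(u,w)\le 1$, giving $\min_{w\in A}d(u,w)+\min_{w\in A}d(v,w)-1\le 1\le d(u,v)$ whenever $d(u,v)\ge 1$, which always holds (distinct vertices have distance $1$ or $2$). Therefore no pair satisfies the strict inequality defining $R$, so $R=\varnothing$ and $cost(A,V,d)=n|A|+0=3n^2t=t|V|$.

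I would organize the proof as: (1) set $A=\bigcup_{S\in\mathcal{S}^*}G_S$ and compute $|A|=3nt$ and $|V|=3n^2$; (2) prove the claim that $\min_{w\in A}d(u,w)\le 1$ for all $u\in V$, splitting into the cases $u\in G_S$ with $S\in\mathcal{S}^*$ (distance $0$), $u\in G_S$ with $S\notin\mathcal{S}^*$ (distance $1$, using a covering set containing some element and... actually simpler: $G_S$ is nonempty so pick any $S'\in\mathcal{S}^*$; if $e\in S'$ for some $e$ then... hmm, cleanest is: for $u\in G_S$, $S\notin\mathcal{S}^*$, since $\mathcal{S}^*$ is a nonempty cover and $S$ contains at least one element $e$ which is covered, but that does not directly put a vertex close to $u$ — better: just note $\mathcal{S}^*\neq\varnothing$ and distances between any two distinct groups are at most $2$, that is not $\le 1$. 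So I do need care here: the right argument for $u\in G_S$, $S\notin\mathcal{S}^*$ is that $S$ itself need not be covering-related to $u$. Actually $d(u,v)=1$ for $u\in G_S, v\in G_{S'}$ only when... the metric only gives distance $1$ within a group or across an edge-incident $G_e,G_S$ pair. Two $G_S,G_{S'}$ groups have distance $2$. So for $u\in G_S$ with $S\notin\mathcal{S}^*$, $\min_{w\in A}d(u,w)$ could be $2$!); and $u\in G_e$ with $e\in\mathcal{U}$ (distance $1$ via a covering set). So the claim as I stated it is FALSE for $u\in G_S, S\notin\mathcal{S}^*$.

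The main obstacle, then, is handling pairs where one or both endpoints lie in an \emph{uncovered} set-group $G_S$ with $S\notin\mathcal{S}^*$; for such $u$, $\min_{w\in A}d(u,w)=2$. So I must instead show directly that these pairs still do not land in $R$. The resolution: if $u\in G_S$ with $S\notin\mathcal{S}^*$ and $v$ is arbitrary, then $d(u,v)$ is either $1$ (if $v\in G_S$ or $v\in G_e$ with $e\in S$) or $2$. When $d(u,v)=2$, the threshold is $\min_{w\in A}d(u,w)+\min_{w\in A}d(v,w)-1$; if $v\notin A$-adjacent, i.e. $v$ is also in an uncovered $G_{S'}$, the threshold is $2+2-1=3>2$, fine; if $v$ is at distance $\le 1$ from $A$ the threshold is $\le 2\le d(u,v)$... but we need strict: $d(u,v)<\text{threshold}$, so threshold $=2$ and $d(u,v)=2$ is fine (not strictly less). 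The only danger is threshold $\ge 3$ together with $d(u,v)\le 2$, which happens exactly when both $u$ and $v$ lie in uncovered set-groups and $d(u,v)=2$ — but two vertices in distinct set-groups have distance exactly $2$, and then threshold $=3$, so indeed $\{u,v\}\in R$! This suggests $A=\bigcup_{S\in\mathcal{S}^*}G_S$ is \emph{not} the right choice and $|R|$ will not be zero. So the real plan must be to choose $A$ to \emph{also} include all uncovered set-groups, or more likely $A=\bigcup_{S\in\mathcal{S}}G_S$ (all set-groups), giving $|A|=3n|\mathcal{S}|$ — too big — OR to argue $|R|$ is small rather than zero and absorb it. Given the target bound $t|V|$ exactly, the likely intended solution is $A=\bigcup_{S\in\mathcal{S}^*}G_S$ together with a re-examination showing pairs inside uncovered groups contribute, but $|R|\le$ something absorbable; since I should not grind the calculation, I will set up $A=\bigcup_{S\in\mathcal{S}^*}G_S$, carefully case-analyze all vertex-pair types against the $R$-threshold, and the crux — the step I expect to be the true obstacle — is bounding or eliminating the contribution of pairs both of whose endpoints sit in uncovered set-groups (distance exactly $2$, threshold exactly $3$); the resolution presumably requires either a slightly cleverer $A$ or the observation that the problem only asks for $cost(A,V,d)\le t|V|$ with enough slack that the replicated-group sizes ($3n$ rather than $1$) make the stray $R$ contribution lower-order, but matching it exactly to $t|V|$ means $R$ must genuinely be empty, so I would ultimately choose $A=\big(\bigcup_{S\in\mathcal{S}^*}G_S\big)$ and recheck the metric definition — noting that "otherwise" distance is $2$ and re-deriving that every pair has $d(u,v)\ge\min_{w\in A}d(u,w)+\min_{w\in A}d(v,w)-1$ hinges precisely on ruling out the two-uncovered-set-group case, which I would handle by observing such pairs have $d=2=3-1$ is false, hence $R\neq\varnothing$ unless we also place representatives; thus the correct final choice is $A=\bigcup_{S\in\mathcal{S}^*}G_S$ is wrong and instead $A$ should be chosen so every group is within distance $1$ of $A$, forcing $A$ to intersect... at which point I would instead follow the paper and simply verify whatever explicit $A$ it defines, with the core lemma being "$R=\varnothing$ for this $A$" proved by the threshold case analysis above.
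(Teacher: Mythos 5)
Your proposal has two genuine gaps. First, your choice $A=\bigcup_{S\in\mathcal{S}^*}G_S$ already fails on the opening-cost term: in the $PR$ cost function the coefficient of $|A|$ is the number of points of the metric space, which in this reduction is $|V|=3n^2$, not the set-cover parameter $n$. With $|A|=3nt$ the first term is $|V|\cdot 3nt=3nt\cdot|V|$, a factor $3n$ above the target $t|V|$; your computation ``$n\cdot|A|=3n^2t=t|V|$'' conflates the two $n$'s. The paper instead puts a \emph{single} arbitrary representative of $G_S$ into $A$ for each $S\in\mathcal{S}^*$, so that $|A|=t$ and the first term is exactly $t|V|$. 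Second, you never close the argument about $R$: after the case analysis you end by deferring to ``whatever explicit $A$ the paper defines,'' which is not a proof.

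That said, the obstruction you isolated is real. For any $A$ contained in $\bigcup_{S\in\mathcal{S}^*}G_S$, a vertex $u\in G_{S'}$ with $S'\notin\mathcal{S}^*$ has $\min_{w\in A}d(u,w)=2$ under the metric as written (distinct set-groups fall into the ``otherwise'' case), so every pair inside such a group has $d(u,v)=1<2+2-1$ and lands in $R$, contributing $\binom{3n}{2}>|V|$ per unchosen set-group; pairs between two unchosen set-groups, and pairs $u\in G_{S'}$, $v\in G_e$ with $e\in S'$, land in $R$ as well. The paper's one-line justification --- that every vertex of $V$ is within distance $1$ of $A$ --- is true for $G_e$ (since $\mathcal{S}^*$ covers $e$) and for $G_S$ with $S\in\mathcal{S}^*$, but false for the unchosen set-groups, so the paper's proof of $R=\varnothing$ has the same hole you ran into. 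A repair (e.g., declaring any two set-group vertices to be at distance $1$, so that a nonempty $A\subseteq\bigcup_{S}G_S$ is within distance $1$ of every set-group vertex) seems necessary; in any case, your attempt does not constitute a correct proof, since it neither produces a valid $A$ with $|A|=t$ nor establishes a bound on $|R|$.
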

\begin{proof}
For each $S\in S^*$, we add an arbitrary element from $G_S$ to $A$. Then for every vertex in $V$, the closest vertex in $A$ has distance at most $1$ to it. Therefore
\begin{align*}
R=&\left\{\{u,v\}\subseteq V\mid d(u,v)<\min_{w\in A}d(u,w)+\min_{w\in A}d(v,w)-1\right\}\\
=&\left\{\{u,v\}\subseteq V\mid d(u,v)<1+1-1\right\}=\varnothing
\end{align*}
Thus the total cost is at most $|V|\cdot|A|+|R|=t|V|$.
\end{proof}
\begin{lemma}
If there is a set $A \subseteq V$ where $cost(A,V,d)\le t|V|$, then there exists a solution $\mathcal{S^*}$ to the set cover instance $(\mathcal{U},\mathcal{S})$ where $|\mathcal{S}^*|=t$.
\end{lemma}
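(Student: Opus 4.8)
The plan is to turn any set $A$ with $cost(A,V,d)\le t|V|$ into a set cover of size at most $t$ (and then pad with arbitrary extra sets to reach exactly $t$). Write $N=|\mathcal{U}|+|\mathcal{S}|$, so that there are $N$ groups, each of size $3N$, and $|V|=3N^2$; recall $cost(A,V,d)=|V|\cdot|A|+|R|$. Since $|R|\ge 0$, the hypothesis immediately gives $|A|\le t$ and $|R|\le (t-|A|)\,|V|$. The real work is to control how many elements $A$ ``misses,'' and to pay for them out of the $|R|$ term.

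The first step is a structural fact. Let $B=\{u\in V:\min_{w\in A}d(u,w)\ge 2\}$ be the set of vertices with no point of $A$ within distance $1$ (we may assume $A\ne\varnothing$, else the cost is enormous). Every vertex of a group $G_e$ has the same closed unit ball, namely $G_e\cup\bigcup_{S\ni e}G_S$, and likewise every vertex of a group $G_S$ has closed unit ball $G_S\cup\bigcup_{e\in S}G_e$; hence $B$ is a union of entire groups. Moreover every pairwise distance is at most $2<3$, so for distinct $u,v\in B$ we have $d(u,v)<\min_w d(u,w)+\min_w d(v,w)-1=3$, i.e.\ $\{u,v\}\in R$, and therefore $|R|\ge\binom{|B|}{2}$. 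Letting $k$ denote the number of element-groups $G_e$ contained entirely in $B$, we get $|B|\ge 3Nk$ and, by a one-line estimate ($3Nk-1\ge 2Nk$ whenever $Nk\ge 1$), $|R|\ge\binom{3Nk}{2}\ge 3N^2k^2=|V|\cdot k^2$. Combining with $|R|\le(t-|A|)|V|$ yields the budget inequality $|A|+k^2\le t$.

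Next I would build the cover explicitly. Take $\mathcal{S}_1=\{S:G_S\cap A\ne\varnothing\}$; for each element $e$ with $G_e\cap A\ne\varnothing$ not already covered by $\mathcal{S}_1$, add one arbitrary set containing $e$ (collect these as $\mathcal{S}_2$); and for each of the $k$ elements $e$ with $G_e\subseteq B$, add one arbitrary set containing $e$ (collect these as $\mathcal{S}_3$). A short case check shows $\mathcal{S}_1\cup\mathcal{S}_2\cup\mathcal{S}_3$ covers every element: if $G_e\subseteq B$ it is handled by $\mathcal{S}_3$; if $G_e\cap A\ne\varnothing$ it is handled by $\mathcal{S}_1$ or $\mathcal{S}_2$; and otherwise $A$ meets the closed unit ball of $G_e$ without meeting $G_e$ itself, so $A$ meets some $G_S$ with $e\in S$ and $\mathcal{S}_1$ covers $e$. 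For the size, $|\mathcal{S}_1|+|\mathcal{S}_2|$ is at most the number of distinct groups met by $A$ (distinct elements have disjoint groups), which is at most $|A|$, while $|\mathcal{S}_3|\le k$; hence the cover has size at most $|A|+k\le|A|+k^2\le t$, using $k\le k^2$ for the nonnegative integer $k$.

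The crux — and the step I expect to be the genuine obstacle — is the charging in the second paragraph. Naively, each element-group that $A$ fails to reach forces one extra set into the cover, giving only $|A|+k$, which need not be $\le t$. The point is that missing a group is \emph{quadratically} expensive: it dumps a block of $3N$ vertices into $B$, and the number of $R$-pairs inside $B$ grows like $k^2$, so the real constraint is $|A|+k^2\le t$, which dominates $|A|+k$. Everything else (that $B$ is a union of groups, that $\binom{B}{2}\subseteq R$, and the group-counting bound on $|\mathcal{S}_1|+|\mathcal{S}_2|$) is routine verification from the definition of $d$.
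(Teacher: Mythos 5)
Your proof is correct, but it takes a genuinely different route from the paper's. The paper argues by \emph{local modification}: it observes that any single uncovered group $G$ contributes all $\binom{3n}{2} > |V|$ of its internal pairs to $R$, so adding one vertex of $G$ to $A$ (cost $|V|$) never increases the total cost; iterating this yields a solution with $R=\varnothing$ in which every group is covered, after which a second exchange step moves any vertices of $A$ lying in element-groups into set-groups, and $\mathcal{S}^*$ is read off as the set-groups meeting the modified $A$. You instead leave $A$ untouched and run a one-shot charging argument: the missed groups form the set $B$, all $\binom{|B|}{2}$ pairs inside $B$ lie in $R$, and the resulting quadratic bound $|R|\ge|V|k^2$ gives the budget $|A|+k^2\le t$, against which you pay $|A|+k$ for an explicitly constructed cover. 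Both arguments are sound; yours avoids the iterative bookkeeping (verifying that each exchange preserves coverage and cost) at the price of the slightly more delicate counting, and in fact you are using more than you need --- the $k$ element-groups inside $B$ already contribute $k$ \emph{disjoint} blocks of $\binom{3N}{2}>|V|$ internal pairs each, so the linear bound $|R|\ge k|V|$ (which is the paper's per-group observation) would suffice for your cover of size $|A|+k$. One point both you and the paper gloss over: the conclusion produces a cover of size \emph{at most} $t$, and padding to exactly $t$ (as you note) requires $t\le|\mathcal{S}|$, which is the only regime in which the statement is meaningful.
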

\begin{proof}
First, we say that a group $G=G_e$ or $G=G_S$ is ``covered'' if there exists a vertex $u\in G$, which $\min_{w\in A}d(u,w)=1$. Then by the definition of $d$, it's easy to see that if a group $G$ is covered, then for all vertices $u\in G$, we have $\min_{w\in A}d(u,w)=1$. In addition, if a group $G_e$ is covered, then either $G_e\cap A\ne\varnothing$, or there is a $S\in\mathcal{S}$, where $e\in S$ and $G_S\cap A\ne\varnothing$.

We can also see that, if a group $G$ is not covered, then let
\begin{align*}
R_G=&\left\{\{u,v\}\subseteq G\mid d(u,v)<\min_{w\in A}d(u,w)+\min_{w\in A}d(v,w)-1\right\}\\
=&\left\{\{u,v\}\subseteq G\mid d(u,v)<2+2-1\right\}\\
=&\{\{u,v\}\subseteq G\}
\end{align*}
Thus $|R_G|\ge\frac{3n(3n-1)}{2}>3n^2>|V|$. Therefore if we add an arbitrary element from $G$ to $A$, then $|R|$ decreases by at least $|R_G|\ge|V|$, and $|V|\cdot|A|$ increases by $|V|$, which makes $cost(A,V,d)$ only decrease. If we keep doing this operation, there will be a set $A$ which makes sure that all the groups are covered, and $cost(A,V,d)\le t|V|$. Now, for every vertex $u\in V$, we have $\min_{w\in A}d(u,w)=1$, so $R=\varnothing$.

We can keep modifying $A$ to the form we want. If there is a vertex $v\in G_e\cap A$, removing $v$ and simultaneously adding a vertex in any $S\ni e$ to $A$ does not increase the cost. This is because this operation keeps the fact that all the groups are covered.

Finally, we have a set $A$ where only contains vertices in $\bigcup_{S\in\mathcal{S}}G_S$ and $cost(A,V,d)\le t|V|$. Let $\mathcal{S}^*=\{S\in\mathcal{S}\mid G_S\cap A\ne\varnothing\}$.  Then $|\mathcal{S}^*|\le t$ because $cost(A,V,d)=|A|\cdot|V|+|R|\ge|\mathcal{S}^*|\cdot|V|$, and $\mathcal{S}^*$ covers $\mathcal{U}$ because all the group $G_e$ are covered.
\end{proof}

These lemmas, combined with Theorem~\ref{thm:setcover}, imply Theorem~\ref{thm:lpr}

\section{Proofs in section~\ref{sec:doo}}\label{app:doo}

\subsection{$TZ_2$-Optimization Problem With Outliers}

\subsubsection{Proof of expected cost}

\begin{lemma}
If $\|\vec{y}_{uv}^*\|^2\le\frac{\varepsilon}{2}$, then the probability that $uv\in R_{1u}$ is at most $\frac{1}{n}$.
\end{lemma}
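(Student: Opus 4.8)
The plan is to mirror the argument used in the analogous lemma for $LP_{PR}$ (the one bounding $\Pr[\{u,v\}\in R]$ by $1/n$), adapting it to the SDP and to the outlier variables. First I would unpack the relevant SDP constraint: if $\|\vec{y}_{uv}^*\|^2 \le \frac{\varepsilon}{2}$, then the constraint $\|\vec{y}_{uv}\|^2 \ge 1 - \vec{z}_u\cdot\vec{z}_v - \sum_{w\in B_u(v)}\|\vec{x}_w\|^2$ forces $\sum_{w\in B_u(v)}\|\vec{x}_w^*\|^2 \ge 1 - \vec{z}_u^*\cdot\vec{z}_v^* - \frac{\varepsilon}{2}$. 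Since $\vec{z}_u^*\cdot\vec{z}_v^* \le \|\vec{z}_u^*\|\,\|\vec{z}_v^*\| \le 1$ is not good enough on its own, the key point I would use is that whenever $uv$ (with $u \in V \setminus F$) can actually contribute to $R_{1u}$, neither $u$ nor $v$ was placed in $F$ by the threshold rounding, so $\|\vec{z}_u^*\|^2 < \frac{1}{1+\varepsilon}$ and $\|\vec{z}_v^*\|^2 < \frac{1}{1+\varepsilon}$; hence $\vec{z}_u^*\cdot\vec{z}_v^* \le \|\vec{z}_u^*\|\,\|\vec{z}_v^*\| < \frac{1}{1+\varepsilon}$, giving $\sum_{w\in B_u(v)}\|\vec{x}_w^*\|^2 \ge 1 - \frac{1}{1+\varepsilon} - \frac{\varepsilon}{2} = \frac{\varepsilon}{1+\varepsilon} - \frac{\varepsilon}{2} \ge \frac{\varepsilon}{3}$ for small enough constant $\varepsilon$ (or, more carefully, $\ge c\,\varepsilon$ for an absolute constant $c$; tuning the rounding probability $\frac{3\ln n}{\varepsilon}$ to this constant).

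Next I would do the standard independent-rounding calculation. The event $uv \in R_{1u}$ requires $A \cap B_u(v) = \varnothing$, and
\[
\Pr[A\cap B_u(v)=\varnothing] = \prod_{w\in B_u(v)}\bigl(1-\min\{\tfrac{3\ln n}{\varepsilon}\|\vec{x}_w^*\|^2,1\}\bigr) \le \exp\Bigl(-\tfrac{3\ln n}{\varepsilon}\sum_{w\in B_u(v)}\|\vec{x}_w^*\|^2\Bigr) \le \exp\bigl(-3\ln n \cdot c\bigr),
\]
which is at most $n^{-1}$ once the constant in the rounding probability is chosen so that $3c \ge 1$; if $\varepsilon$ is a sufficiently small constant this is exactly how the factor $\frac{3\ln n}{\varepsilon}$ is calibrated. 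I would also note that the event $uv \in R_{1u}$ after removing outliers additionally requires $v \notin F$ and $u \notin F$, so I may condition on or simply restrict to that case — either way the bound $\Pr[uv\in R_{1u}] \le n^{-1}$ follows. Here I should be slightly careful about whether "$uv \in R_{1u}$" in the statement already presumes $u, v \in V\setminus F$; I would state explicitly that we bound the probability of the joint event "$u,v \notin F$ and $v \in R_{1u}$", which is all that is needed downstream for the expected-cost bound.

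The main obstacle is the interaction between the threshold rounding for $F$ and the $\vec{z}_u\cdot\vec{z}_v$ term: unlike the outlier-free $LP_{PR}$ case, the subtracted quantity is an inner product of two unit-norm-bounded vectors rather than a sum of scalars, so I cannot simply say it is "small." The resolution — and the one genuinely SDP-specific step — is that the threshold $\frac{1}{1+\varepsilon}$ on $\|\vec{z}_v^*\|^2$ for membership in $F$, combined with Cauchy–Schwarz $\vec{z}_u^*\cdot\vec{z}_v^* \le \|\vec{z}_u^*\|\|\vec{z}_v^*\|$, yields $\vec{z}_u^*\cdot\vec{z}_v^* < \frac{1}{1+\varepsilon}$ for any pair both of whose endpoints survive, leaving a residual mass of $\Omega(\varepsilon)$ for the $\vec{x}$-ball to absorb. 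Everything after that is the routine product/exponential bound plus a trivial observation that membership in $R_{1u}$ after outlier removal entails both endpoints surviving, so no union bound over $r$ is even needed (unlike in $LP_{PR}$). I would close by remarking that this lemma then plugs into the expected-cost computation exactly as in the $PR$ analysis, using $OPT \ge \Omega(n)$ to absorb the additive $n^2 \cdot n^{-1} = n$ error term.
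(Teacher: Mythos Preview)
Your proposal is correct and follows essentially the same argument as the paper: case-split on whether $u$ or $v$ lands in $F$ via the threshold $\frac{1}{1+\varepsilon}$, use Cauchy--Schwarz to bound $\vec{z}_u^*\cdot\vec{z}_v^*<\frac{1}{1+\varepsilon}$ when both survive, deduce $\sum_{w\in B_u(v)}\|\vec{x}_w^*\|^2\ge\frac{\varepsilon}{3}$ from the SDP constraint, and finish with the standard product-to-exponential bound on $\Pr[A\cap B_u(v)=\varnothing]$. Your added care about the range of $\varepsilon$ needed for $1-\frac{1}{1+\varepsilon}-\frac{\varepsilon}{2}\ge\frac{\varepsilon}{3}$ (namely $\varepsilon\le\frac{1}{5}$) and about what ``$v\in R_{1u}$'' means post-outlier-removal is a harmless refinement the paper leaves implicit.
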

\begin{proof}
If $\|\vec{z}_u^*\|^2\ge\frac{1}{1+\varepsilon}$ or $\|\vec{z}_v^*\|^2\ge\frac{1}{1+\varepsilon}$, then $u$ or $v$ is in $F$, so $v\notin R_{1u}$. Thus we only consider the case that $\|\vec{z}_u^*\|^2<\frac{1}{1+\varepsilon}$ and $\|\vec{z}_v^*\|^2<\frac{1}{1+\varepsilon}$, which means $\vec{z}_u^*\cdot\vec{z}_v^*<\frac{1}{1+\varepsilon}$. Since $\|\vec{y}_{uv}^*\|^2\le\frac{\varepsilon}{2}$, we have
\[\sum_{w\in B_u(v)}\|\vec{x}_w^*\|^2\ge1-\frac{\varepsilon}{2}-\frac{1}{1+\varepsilon}\ge\frac{\varepsilon}{3}.\]

Therefore, the probability that $d(u,v)<\min_{w\in A}d(u,w)$ is at most
\[\prod_{w\in B_{u}(v)}(1-\min\{\frac{3\ln n}{\varepsilon}\cdot\|\vec{x}_v^*\|^2,1\})\le e^{-\sum_{w\in B_{u}(v)}\frac{3\ln n}{\varepsilon}\cdot\|\vec{x}_v^*\|^2}\le\frac{1}{n}\]
\end{proof}

Therefore, let $OPT_{SDP_{TZ_2O}}$ denotes the optimal cost of $SDP_{TZ_2O}$, then the expected cost of the rounding algorithm is at most
\[\sum_{v\in V}(n-f)\cdot\frac{3\ln n}{\varepsilon}\cdot\|\vec{x}_v^*\|^2+\frac{2}{\varepsilon}\cdot\sum_{u,v\in V}\|\vec{y}_{uv}^*\|^2+n^2\cdot\frac{1}{n}\le O(\log n)\cdot SDP_{TZ_2O}+n\le O(\log n)\cdot OPT\]
because $OPT\ge\Omega(n)$, which proves Theorem~\ref{thm:tz2o}.

\subsubsection{True approximation}

When the number of outliers is low, in particular when $f \leq \sqrt{n}$, we can find an actual $O(\log n)$-approximation.

The SDP and rounding algorithm are the same, except we will choose $f$ vertices with the highest $\|\vec{z}_v^*\|^2$ values as $F$, rather than a threshold rounding of $\frac{1}{1+\varepsilon}$.

Now there are two cases when $\|\vec{y}_{uv}^*\|^2\le\frac{\varepsilon}{2}$. One case is the same as before, where $\sum_{w\in B_u(v)}\|\vec{x}_w^*\|^2\ge\frac{\varepsilon}{3}$. In this case, the probability that $v\in R_{1u}$ is at most $\frac{1}{n}$. The other case is $\sum_{w\in B_u(v)}\|\vec{x}_w^*\|^2<\frac{\varepsilon}{3}$, which means $\vec{z}_u^*\cdot\vec{z}_v^*\ge1-\frac{\varepsilon}{2}-\frac{\varepsilon}{3}=1-\frac{5}{6}\varepsilon$.

However, this case will not appear a lot. Whenever $\vec{z}_u^*\cdot\vec{z}_v^*\ge1-\frac{5}{6}\varepsilon$, both $\|\vec{z}_u^*\|$ and $\|\vec{z}_v^*\|$ should be at least $1-\frac{5}{6}\varepsilon$, which means $\|\vec{z}_u^*\|^2$ and $\|\vec{z}_v^*\|^2$ is at least $\frac{1}{2}$. Because $\sum_{v\in V}\|\vec{z}_v^*\|^2\le f$, we know that there are at most $2f$ of $\|\vec{z}_v^*\|^2$ are at least $\frac{1}{2}$. Therefore the number of $u,v$ pairs that $\|\vec{y}_{uv}^*\|^2\le\frac{\varepsilon}{2}$ and $\sum_{w\in B_u(v)}\|\vec{x}_w^*\|^2<\frac{\varepsilon}{3}$ is at most $2f\cdot2f=4n$.

Therefore, let $OPT_{SDP_{TZ_2O}}$ denotes the optimal cost of $SDP_{TZ_2O}$, then the expected cost of the rounding algorithm is at most
\[\sum_{v\in V}(n-f)\cdot\frac{3\ln n}{\varepsilon}\cdot\|\vec{x}_v^*\|^2+\frac{2}{\varepsilon}\cdot\sum_{u,v\in V}\|\vec{y}_{uv}^*\|^2+n^2\cdot\frac{1}{n}+4n\le O(\log n)\cdot OPT_{SDP_{TZ_2O}}+5n\le O(\log n)\cdot OPT\]
because $OPT\ge\Omega(n)$, which proves Theorem~\ref{thm:tz2ot}.

\subsection{$PR$-Optimization Problem With Outliers}

\begin{lemma}
If $\|\vec{y}_{uv}^*\|^2\le\frac{\varepsilon}{2}$, then the probability that $\{u,v\}\in R$ is at most $\frac{1}{n}$.
\end{lemma}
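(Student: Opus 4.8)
The plan is to follow the rounding analysis of the non-outlier $LP_{PR}$ lemma almost verbatim, the one new ingredient being the treatment of the bilinear term $\vec z_u^*\cdot\vec z_v^*$ that the SDP produces in place of a sum $z_u+z_v$. First I would dispose of the easy case: if $\|\vec z_u^*\|^2\ge\frac1{1+\varepsilon}$ or $\|\vec z_v^*\|^2\ge\frac1{1+\varepsilon}$, then the threshold rounding puts $u$ or $v$ into $F$, so $\{u,v\}$ is not a pair of $V\setminus F$ and $\{u,v\}\notin R$ deterministically. So assume $\|\vec z_u^*\|^2,\|\vec z_v^*\|^2<\frac1{1+\varepsilon}$; then Cauchy--Schwarz gives $\vec z_u^*\cdot\vec z_v^*\le\|\vec z_u^*\|\,\|\vec z_v^*\|<\frac1{1+\varepsilon}$. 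Feeding this and the hypothesis $\|\vec y_{uv}^*\|^2\le\frac\varepsilon2$ into the first SDP constraint yields, for every $r\in[0,d(u,v)]$,
\[\sum_{w\in B_u(r)\cup B_v(d(u,v)-r)}\|\vec x_w^*\|^2\ \ge\ 1-\frac1{1+\varepsilon}-\frac\varepsilon2\ \ge\ \frac\varepsilon3,\]
where the last inequality holds because $\varepsilon$ is a small enough constant.

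Next I would run the standard independent-rounding / union-bound argument. For any one of the (at most $n$) values of $r$ for which the SDP actually records a constraint, independence of the rounding together with $1-\min\{c,1\}\le e^{-c}$ bounds the probability that $A$ misses $B_u(r)\cup B_v(d(u,v)-r)$ by $\exp(-\frac{6\ln n}{\varepsilon}\cdot\frac\varepsilon3)=n^{-2}$; here the choice of the constant $\frac{6\ln n}{\varepsilon}$ in the rounding probability is exactly what makes this work. A union bound over those $\le n$ constraints --- which, as noted when the SDP was defined, realize every distinct set $B_u(r)\cup B_v(d(u,v)-r)$ --- shows that with probability at least $1-\frac1n$ the set $A$ hits $B_u(r)\cup B_v(d(u,v)-r)$ for every $r\in[0,d(u,v)]$.

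Finally I would close the loop exactly as in the $LP_{PR}$ lemma: if $\{u,v\}\in R$, i.e.\ $d(u,v)<\min_{w\in A}d(u,w)+\min_{w\in A}d(v,w)-1$, then taking $r=\min_{w\in A}d(u,w)-1$ gives $\min_{w\in A}d(v,w)>d(u,v)-r$, so $A$ misses $B_u(r)\cup B_v(d(u,v)-r)$. Combining with the previous paragraph, $\Pr[\{u,v\}\in R]\le\frac1n$.

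The only genuinely new step is the first one: the SDP gives $\vec z_u^*\cdot\vec z_v^*$ rather than $z_u+z_v$, and it is precisely the bound $\vec z_u^*\cdot\vec z_v^*<\frac1{1+\varepsilon}$ obtained from Cauchy--Schwarz (using that a $(1+\varepsilon)$-fraction threshold already forces a vertex into $F$) that avoids the factor-$2$ blow-up in the number of outliers incurred by the LP relaxation. I expect this to be the main point to get right; the rest --- that polynomially many $r$-constraints suffice, and that $r=\min_{w\in A}d(u,w)-1$ is among the cases they cover --- is identical bookkeeping to the $LP_{PR}$ proof.
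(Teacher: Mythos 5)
Your proof is correct and follows essentially the same route as the paper's: the same case split on the threshold rounding of $\vec z_u^*,\vec z_v^*$, the same lower bound $\sum_{w}\|\vec x_w^*\|^2\ge 1-\frac{1}{1+\varepsilon}-\frac{\varepsilon}{2}\ge\frac{\varepsilon}{3}$ from the SDP constraint, and the same per-$r$ bound of $n^{-2}$ followed by a union bound over the at most $n$ relevant values of $r$. You are in fact slightly more explicit than the paper in two places (invoking Cauchy--Schwarz for $\vec z_u^*\cdot\vec z_v^*<\frac{1}{1+\varepsilon}$, and re-deriving that $\{u,v\}\in R$ forces a missed ball via $r=\min_{w\in A}d(u,w)-1$), but these are exactly the steps the paper implicitly relies on.
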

\begin{proof}
If $\|\vec{z}_u^*\|^2\ge\frac{1}{1+\varepsilon}$ or $\|\vec{z}_v^*\|^2\ge\frac{1}{1+\varepsilon}$, then $u$ or $v$ is in $F$, so $\{u,v\}\notin R$. Thus we only consider the case that $\|\vec{z}_u^*\|^2<\frac{1}{1+\varepsilon}$ and $\|\vec{z}_v^*\|^2<\frac{1}{1+\varepsilon}$, which means $\vec{z}_u^*\cdot\vec{z}_v^*<\frac{1}{1+\varepsilon}$. Since $\|\vec{y}_{uv}^*\|^2\le\frac{\varepsilon}{2}$, we have
\[\sum_{w\in B_u(r)\cup B_v(d(u,v)-r)}\|\vec{x}_w^*\|^2\ge1-\frac{\varepsilon}{2}-\frac{1}{1+\varepsilon}\ge\frac{\varepsilon}{3}.\]

Therefore, the probability that $A\cap(B_u(r)\cup B_v(d(u,v)-r))=\varnothing$ for a specifiic $r\in[0,d(u,v)]$ is at most
\[\prod_{w\in B_u(r)\cup B_v(d(u,v)-r)}(1-\min\{\frac{6\ln n}{\varepsilon}\cdot\|\vec{x}_w^*\|^2,1\})\le e^{-\sum_{w\in B_{u}(v)}\frac{6\ln n}{\varepsilon}\cdot\|\vec{x}_w^*\|^2}\le\frac{1}{n^2}\]
\end{proof}

By using union bound over all the different $r$ we used in our SDP, the probability that there exists an $r\in[0,d(u,v)]$ where $A\cap(B_u(r)\cup B_v(d(u,v)-r))=\varnothing$ is at most $\frac{1}{n^2}\cdot n=\frac{1}{n}$, which means $d(u,v)<\min_{w\in A}d(u,w)+\min_{w\in A}d(v,w)-1$ with probability at most $\frac{1}{n}$, so the probability that $\{u,v\}\in R$ is at most $\frac{1}{n}$.

Therefore, let $OPT_{SDP_{PR}}$ denotes the optimal cost of $SDP_{PR}$, then the expected cost of the rounding algorithm is at most
\[\sum_{v\in V}(n-f)\cdot\frac{3\ln n}{\varepsilon}\cdot\|\vec{x}_v^*\|^2+\frac{2}{\varepsilon}\cdot\sum_{u,v\in V}\|\vec{y}_{uv}^*\|^2+n^2\cdot\frac{1}{n}\le O(\log n)\cdot OPT_{SDP_{PR}}+n\le O(\log n)\cdot OPT\]
because $OPT\ge\Omega(n)$, which proves Theorem~\ref{thm:pro}.

\end{document}